\newtheorem{theorem}{Theorem}[section]          	
\newaliascnt{lemma}{theorem}				
\newtheorem{lemma}[lemma]{Lemma}              	
\newaliascnt{conjecture}{theorem}			
\newaliascnt{remark}{theorem}				
\newaliascnt{corollary}{theorem}			
\newtheorem{corollary}[corollary]{Corollary}      
\newaliascnt{definition}{theorem}			
\newtheorem{definition}[definition]{Definition}    
\newaliascnt{proposition}{theorem}			
\newtheorem{proposition}[proposition]{Proposition}  
\newaliascnt{example}{theorem}			
\newtheorem{example}[example]{Example}  	
\let\orgautoref\autoref                         		
\providecommand{\Autoref}[1]{
    \def\equationautorefname{Equation}
    \def\figureautorefname{Figure}%
	\def\subfigureautorefname{Figure}%
    \def\lemmaautorefname{Lemma}%
    \def\conjectureautorefname{Conjecture}%
    \def\remarkautorefname{Remark}%
    \def\propositionautorefname{Proposition}%
    \def\corollaryautorefname{Corollary}%
    \def\definitionautorefname{Definition}%
    \def\sectionautorefname{Section}%
    \def\subsectionautorefname{Section}%
    \def\subsubsectionautorefname{Section}%
    \def\exampleautorefname{Example}%
    \orgautoref{#1}%
}
\renewcommand{\autoref}[1]{
    \def\equationautorefname{Eq.}
    \def\figureautorefname{Fig.}%
    \def\subfigureautorefname{Fig.}%
    \def\lemmaautorefname{Lemma}%
    \def\conjectureautorefname{Conjecture}%
    \def\remarkautorefname{Remark}%
    \def\propositionautorefname{Prop.}%
    \def\corollaryautorefname{Corollary}%
    \def\definitionautorefname{Def.}%
    \def\sectionautorefname{Sect.}%
    \def\subsectionautorefname{Sect.}%
    \def\subsubsectionautorefname{Section}%
    \def\exampleautorefname{Example}%
    \orgautoref{#1}%
}
\newcommand{\specificref}[2]{\hyperref[#2]{#1~\ref*{#2}}}			
    \def\url@leostyle{
        \@ifundefined{selectfont}{\def\UrlFont{\sf}}{\def\UrlFont{\scalefont{0.9}\fontfamily{cmtt}\selectfont}}}
\renewcommand{\epsilon}{\varepsilon}    
\newcommand{\sql}[1]{\textup{\textsf{\scriptsize #1}}}    
\newcommand{\sqll}[1]{{\fontfamily{cmtt}\selectfont{\textup{#1}}}}
\newcommand{\movie}[1]{``{#1}''}
\newcommand{\genre}[1]{#1}
\newcommand{\director}[1]{#1}
\newcommand{\introparagraph}[1]{\textbf{#1.}}        
\definecolor{gray}{rgb}{0.5,0.5,0.5}
\definecolor{niceblue}{rgb}{.8,.85,1}
\newcommand{\rewrite}{\leadsto}
\newcommand{\rewriteclosed}								
{\vbox{													
	\baselineskip=0pt 	
	\hbox{\hspace{+0.8mm} \vspace{-0.3em}\small{$^*$}}
	\hbox{\hspace{-0.0mm} \vspace{0.0em}$\leadsto$  \hspace{-0.1mm}}
}}
\newcommand{\weakening}{\multimap}
\newcommand{\weakeningclosed}
{\vbox{
	\baselineskip=0pt 	
	\hbox{\hspace{0.8mm} \vspace{-0.5em}\small{$^*$}}
	\hbox{\hspace{0.0mm} \vspace{0.0em}$\multimap$ \hspace{-0.1mm}}
}}
\newcommand{\enSymb}{\textup{n}}
\newcommand{\exSymb}{\textup{x}}
\newcommand{\en}[1]{{#1}^{\enSymb}}					
\newcommand{\ex}[1]{{#1}^{\exSymb}}
\newcommand{\datarule}{{\,:\!\!-\,}}			
\newcommand{\Var}{\textup{\textit{Var}}}      	
\newcommand{\Adom}{\textup{\textit{Adom}}}		
\newcommand{\sg}{\textit{sg}}					
\newcommand{\whyso}{{{Why-So}}}
\newcommand{\whyno}{{{Why-No}}}
\newcommand{\set}[1]{\{#1\}}                    
\newcommand{\makeop}[2]                         
  {\ifx#2.\def\next##1{}\else\escapechar=-1     
  \def\next##1{\escapechar=92\def#2{#1}}        
  \expandafter\next\expandafter{\string#2}      
  \let\next\makeop\fi\next{#1}}                 
\def \var(#1){{\bf #1}}
\def\AddSpace#1{\ifcat#1a\ \fi#1} 
\newcommand{\silentreminder}[1]{}
\def \up(#1){[#1)}
\def \down(#1){(#1]}
\def \series(#1,#2){#1_1, \dots \; #1_{#2}}
\def \serieszero(#1,#2){#1_0, #1_1, \dots \; #1_{#2}}
\def \para(#1){{\vspace{1ex}\noindent\small\bf #1\hspace{1ex}}}
\def \myem(#1){{\vspace{1ex}\noindent\small\em #1\hspace{1ex}}}
\newcommand{\eat}[1]{}
\renewcommand{\@opargbegintheorem}[3]{%
    \parskip 0pt 
    \trivlist
    \item[%
        \hskip 10\p@
        \hskip \labelsep
        {\sc #1\ #2\             %
   \setbox\@tempboxa\hbox{(#3)}  %
        \ifdim \wd\@tempboxa>\z@ %
            \hskip 0\p@\relax    
            \box\@tempboxa       %
        \fi.}%
    ]
    \it
}
\newcommand{\bibpath}{causation} 		
\begin{document}
\title{The Complexity of Causality and Responsibility for \\Query Answers and non-Answers}
\numberofauthors{1} 
\author{\alignauthor Alexandra Meliou \hspace{8mm} Wolfgang Gatterbauer \hspace{8mm}Katherine F. Moore \hspace{8mm} Dan Suciu \\
\affaddr{Department of Computer Science and Engineering,} \\
\affaddr{University of Washington, Seattle, WA, USA} \\
\email{\normalsize \texttt{ \{ameli,gatter,kfm,suciu\}@cs.washington.edu}}}  
\toappear{University of Washington CSE Technical Report}
\maketitle
\pagenumbering{arabic}      

\begin{abstract}
  An answer to a query has a well-defined lineage expression (alternatively called how-provenance)
  that explains how the answer was derived. Recent work has also
  shown how to compute the lineage of a non-answer to a
  query. However, the {\em cause} of an answer or non-answer is a more subtle notion and consists, in general, of only a fragment of the lineage. In this paper, we adapt Halpern,
  Pearl, and Chockler's recent definitions of causality and responsibility 
to define the \emph{causes of answers and non-answers to queries, and their degree of responsibility}. Responsibility captures the notion of degree of causality and serves to rank potentially many causes by their relative contributions to the effect.
Then, we study the
  complexity of computing causes and responsibilities for
  conjunctive queries. It is known that computing causes is NP-complete in
    general. Our first main result shows that all
  causes to conjunctive queries can be computed by a relational query
  which may involve negation. Thus, causality can be computed in
  PTIME, and very efficiently so. Next, we study computing responsibility.
Here, we prove that the complexity depends on
  the conjunctive query and demonstrate a dichotomy between PTIME and NP-complete cases.  For the
  PTIME cases, we give a non-trivial algorithm, consisting of a
  reduction to the max-flow computation problem. Finally, we prove
  that, even when it is in PTIME, responsibility is complete for
  \textsc{logspace}, implying that, unlike causality, it cannot be computed by
  a relational query.
\end{abstract}

\section{Introduction} 
\label{sec:introduction}
\looseness -1
When analyzing complex data sets, users are often interested in the
\emph{reasons} for surprising observations.  In a database context,
they would like to find the \emph{causes of answers or non-answers} to
their queries. For example, ``What caused my personalized newscast to
have more than 50 items today?''  Or, ``What caused my favorite
undergrad student to not appear on the Dean's list this year?''
Philosophers have debated for centuries various notions of causality,
and today it is still studied in philosophy, AI, and cognitive
science.  Understanding causality in a broad sense is of vital
practical importance, for example in determining legal responsibility
in multi-car accidents, in diagnosing malfunction of complex systems,
or scientific inquiry.  A formal, mathematical study of causality was
initiated by the recent work of Halpern and
Pearl~\cite{HalpernPearl:Cause2005} and Chockler and
Halpern~\cite{DBLP:journals/jair/ChocklerH04}, who gave mathematical
definitions of {\em causality} and its related notion of {\em degree
  of responsibility}.  These formal definitions lead to applications
in knowledge representation and model
checking~\cite{DBLP:journals/ai/EiterL02,Eiter:2006p86,DBLP:journals/jair/ChocklerH04}. In
this paper, we adapt the notions of causality and responsibility to
database queries, and study the complexity of computing the causes and
their responsibilities for answers and non-answers to
conjunctive queries.

\begin{figure}[t]
	\centering
		\includegraphics[scale=0.58]{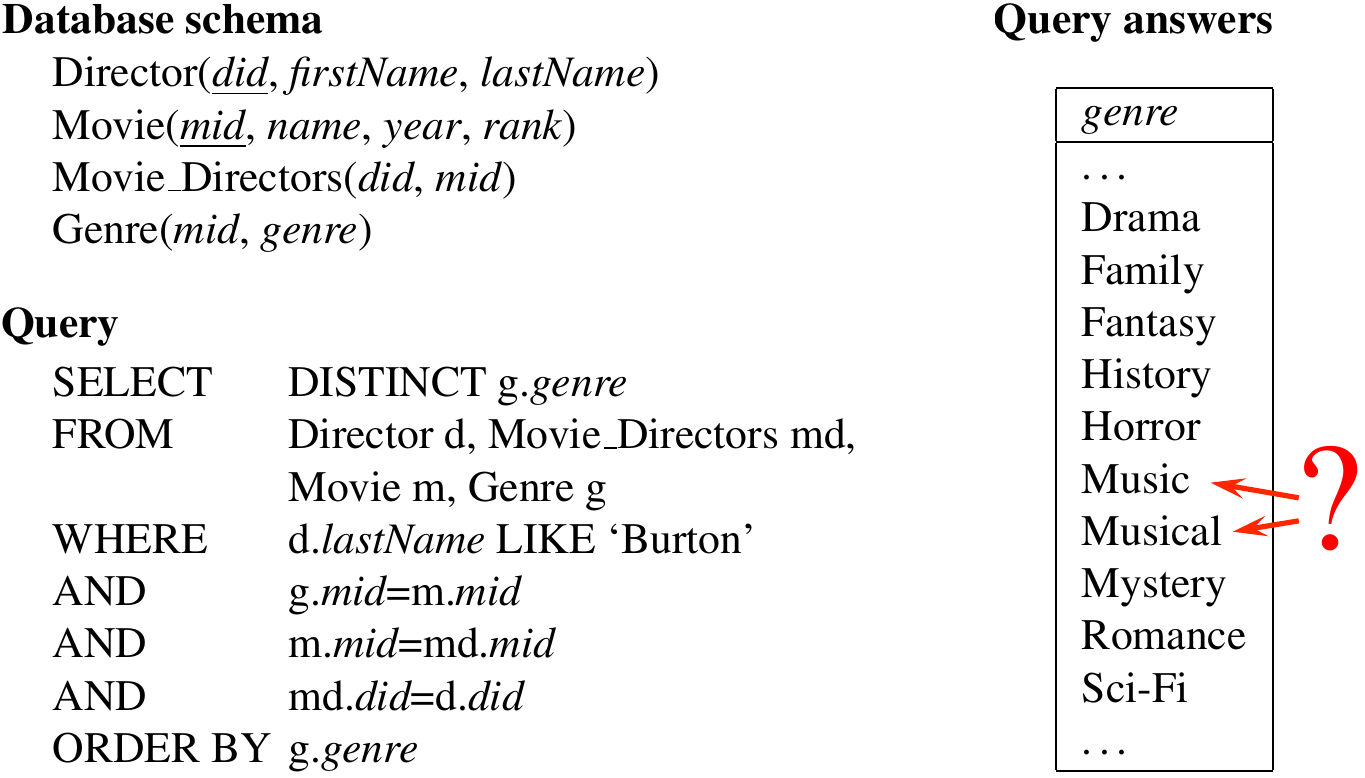}
		\vspace{-0.1in}
\caption{A SQL query returning the genres of all movies directed by
  Burton, on the IMDB dataset (\url{www.imdb.org}).  The famous
  director Tim Burton is known for dark, gothic themes, so the genres
  \genre{Fantasy} and \genre{Horror} are expected.  But the genres \genre{Music}
  and \genre{Musical} are quite surprising.  The goal of this paper is to
  find the \emph{causes for surprising query results}.}
	\label{fig:imdb}
	\vspace{-0.1in}
\end{figure}

\begin{figure*}[tb]
	\centering
	\begin{minipage}[t]{100mm}
		\subfloat[Lineage from \sqll{Directors} and \sqll{Movies} of the \genre{Musical} tuple]{
		\includegraphics[scale=0.48]{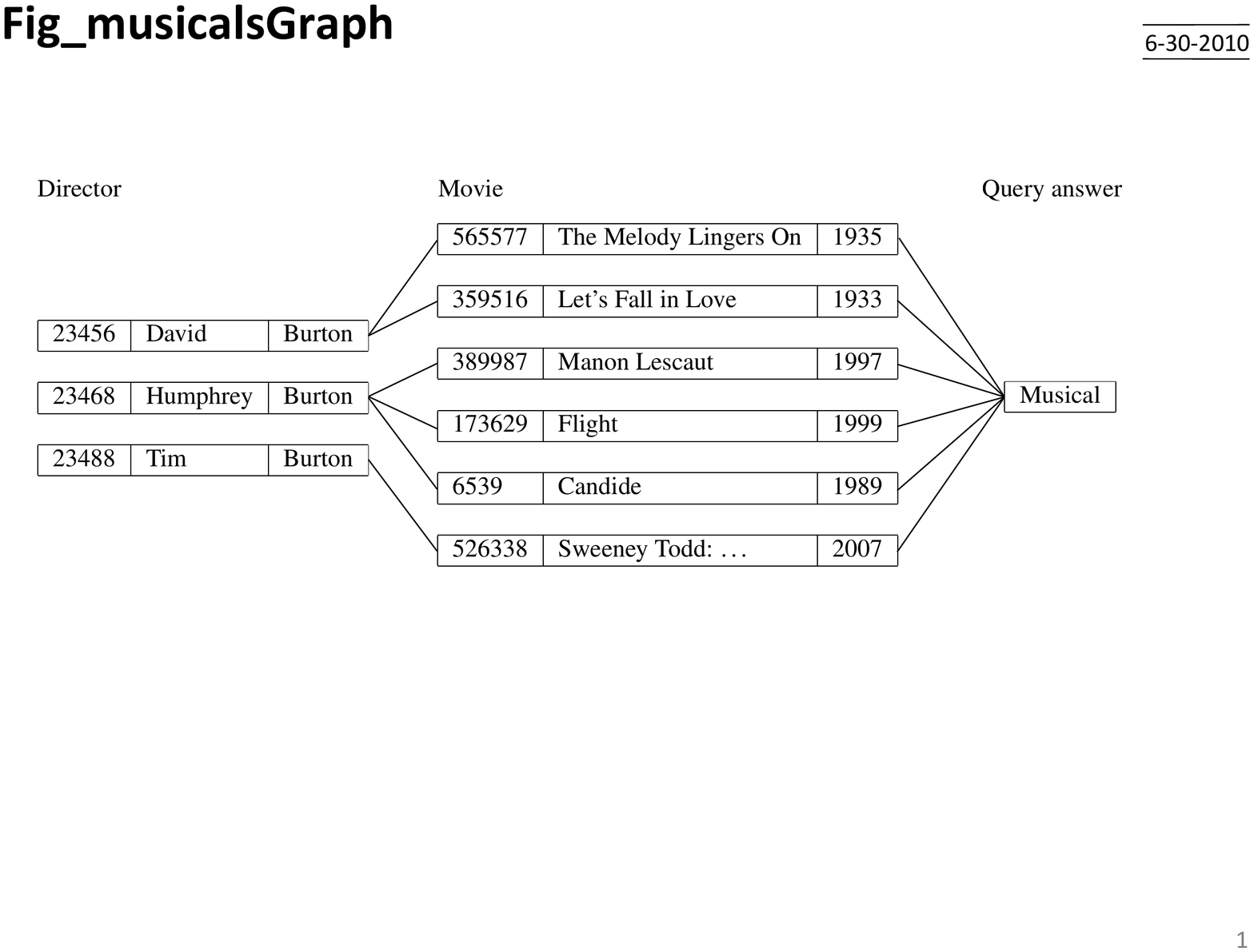}
		\label{fig:musicals-graph}}
	\end{minipage}
	\hspace{14mm}
	\begin{minipage}[t]{61mm}
	\subfloat[Responsibility rankings for \genre{Musical}]{
		{\scriptsize
		\setlength{\tabcolsep}{1.1mm}
		\renewcommand{\arraystretch}{1.3}
		\begin{tabular}[t]{|l|l|}
			\hline
			$\rho_t$ & Answer tuple \\
			\hline
			0.33 & Movie(526338, ``Sweeney Todd'', 2007)\\
			0.33 & Director(23456, David, Burton)\\
			0.33 & Director(23468, Humphrey, Burton)\\
			0.33 & Director(23488, Tim, Burton)\\
			0.25 & Movie(359516, ``Let's Fall in Love'', 1933)\\
			0.25 & Movie(565577, ``The Melody Lingers On'', 1935)\\
			0.20 & Movie(6539, ``Candide'', 1989)\\
			0.20 & Movie(173629, ``Flight'', 1999)\\
			0.20 & Movie(389987, ``Manon Lescaut'', 1997)\\
			\hline
		\end{tabular}
		\label{fig:musicals-rank}}}
	\end{minipage}
	\vspace{-0.1in}
	\caption{Lineage (a) and causes with their responsibilities (b) for the \sql{Musical} tuple 
	in Example~\ref{ex:imdb}.}
	\label{fig:musicals}
	\vspace{-0.1in}
\end{figure*}

\vspace{-1mm}
\begin{example}[IMDB]\label{ex:imdb}
    \looseness -1
  Tim Burton is an Oscar nominated director whose movies often include
  fantasy elements and dark, gothic themes. Examples of his work are
  \movie{Edward Scissorhands}, \movie{Beetlejuice} and the recent \movie{Alice in
  Wonderland}.  A user wishes to learn more about Burton's movies and
  queries the IMDB dataset to find out all genres of movies that he
  has directed (see~\autoref{fig:imdb}). \genre{Fantasy} and \genre{Horror}
  are quite expected categories.  But \genre{Music} and \genre{Musical} are
  surprising.  The user wishes to know the reason for these answers.
  Examining the lineage of a surprising answer is a first step towards
  finding its reason, but it is not sufficient: the combined lineage
  of the two categories consists of a total of 137 base tuples, which
  is overwhelming to the user.
\end{example}
\vspace{-1mm}

\looseness -1
Causality is related to provenance, yet it is a more
refined notion: Causality can answer questions like the one in our
example by returning the causes of query results ranked by their
\emph{degree of responsibility}.
Our starting point is Halpern and Pearl's definition of causality \cite{HalpernPearl:Cause2005},
from which we borrow three important concepts:

\looseness -1
(1) Partitioning of variables into {\em exogenous} and {\em
  endogenous}: Exogenous variables define a context determined by
external, unconcerned factors, deemed not to be possible causes, while
endogenous variables are the ones judged to affect the outcome and are
thus potential causes.  In a database setting, variables are tuples in
the database, and the first step is to partition them into exogenous
and endogenous.  For example, we may consider \sqll{Director} and
\sqll{Movie} tuples as endogenous and all others as exogenous.  The
classification into endogenous/exogenous is application-dependent, and
may even be chosen by the user at query time.  For example, if
erroneous data in the directors table is suspected, then only
\sqll{Director} may be declared endogenous; alternatively, the user
may choose only \sqll{Movie} tuples with \sqll{year}$>$\sqll{2008} to be
endogenous, for example in order to find recent, or under production
movies that may explain the surprising outputs to the query.  Thus,
the partition into endogenous and exogenous tuples is not restricted
to entire relations.  As a default, the user may start by declaring
all tuples in the database as endogenous, then narrow down.

\looseness -1 
(2) {\em Contingencies}: an endogenous tuple $t$ is a cause for the
observed outcome only if there is a hypothetical setting of the other
endogenous variables under which the addition/removal of $t$ causes
the observed outcome to change.  Therefore, in order to check that a
tuple $t$ is a cause for a query answer, one has to find a set of
endogenous tuples (called {\em contingency}) to remove from (or add
to) the database, such that the tuple
$t$ immediately affects the answer in the new state of the database.  In theory, in order to compute
the contingency one has to iterate over subsets of endogenous tuples.
Not surprisingly, checking causality is NP-complete in
general~\cite{DBLP:journals/ai/EiterL02}.  However, the first main
result in this paper is to show that the causality of conjunctive
queries can be determined in PTIME, and furthermore, all causes can
be computed by a relational query.

\looseness -1
(3) {\em Responsibility}, a notion first defined in \cite{DBLP:journals/jair/ChocklerH04}, measures the degree of causality as a
function of the size of the smallest contingency set.  In applications
involving large datasets, it is critical to rank the candidate
causes by their responsibility, because answers to complex queries may
have large lineages and large numbers of candidate causes.  In
theory, in order to compute the responsibility one has to iterate over
all contingency sets: not surprisingly, computing responsibility in
general is hard for $FP^{\Sigma_2^P(\log
  n)}$~\cite{DBLP:journals/jair/ChocklerH04}.\footnote{This is the class of functions computable by a
    poly-time Turing machine which makes $\log n$ queries to a
    $\Sigma_2^p$ oracle.}
However, our second main result, and at the same time the strongest result of this paper, is a \emph{dichotomy result for conjunctive
queries}: for {\em each} query without self-joins, either its responsibility can be
computed in PTIME in the size of the database (using a non-obvious
algorithm), or checking if it has a responsibility below a given value
is NP-hard.

\vspace{-1mm}
\begin{example}[IMDB continued]~\!\!\!\!
    \looseness -1
  Continuing \autoref{ex:imdb}, we show in \autoref{fig:musicals-rank}
  the causes for \genre{Musical} ranked by their responsibility score. (We
  explain in \autoref{sec:defining_causes_for_query_results} how these
  scores are computed.)  At the top of the list is the movie \movie{Sweeney
  Todd}, which {\em is}, indeed, the one and single musical movie
  directed by \director{Tim Burton}.  Thus, this tuple represents a surprising
  fact in the data of great interest to the user.  The next three
  tuples in the list are directors, whose last name is Burton.  These
  tuples too are of high interest to the user because they indicate
  that the query was ambiguous.  Equally interesting is to look at the
  bottom of the ranked list.  The movie \movie{Manon Lescaut} is made by
  \director{Humphrey Burton}, a far less known director specialized in musicals.
  Clearly, the movie itself is not an interesting explanation to the
  user; the interesting explanation is the director, showing that he
  happens to have the same last name, and indeed, the director is
  ranked higher while the movie is (correctly) ranked lower.  In our
  simple example \genre{Musical} has a small lineage, consisting of only
  ten tuples.  More typically, the
  lineage can be much larger (\genre{Music} has a lineage with 127 tuples), and it is critical to rank the potential
  causes by their degree of responsibility.
\end{example}
\vspace{-1mm}

\looseness -1
We start by adapting the Halpern and Pearl definition of causality
(HP from now on) to database queries, based on contingency sets.  We
define causality and responsibility both for {\whyso} queries
(``why did the query return this answer?'') and for {\whyno} 
queries (``why did the query not return this answer?'').  We then
prove two fundamental results.
First, we show that computing the causes to any conjunctive query can
be done in PTIME in the size of the database, i.e.\  query causality
has PTIME data complexity; by contrast, causality of arbitrary Boolean
expressions is NP-complete \cite{DBLP:journals/ai/EiterL02}.  In fact
we prove something stronger: the set of all causes can be retrieved by
a query expressed in First Order Logic (FO).  This has important
practical consequences, because it means that one can retrieve all
causes to a conjunctive query by simply running a certain SQL query.
In general, the latter cannot be a conjunctive query, but must have
one level of negation.  However, we show that if the user query has no
self joins {\em and} every table is either entirely endogenous or
entirely exogenous, then the {\whyso} causes can be retrieved by some
conjunctive query.  These results are summarized in
\autoref{fig:results}.

\looseness -1
Second, we give a dichotomy theorem for query responsibility. 
This is our strongest technical result with this paper. For
every conjunctive query without self-joins, one of the following
holds: either the responsibility can be computed in PTIME or it is
provably NP-hard.  In the first case, we give a quite non-obvious
algorithm for computing the degrees of responsibility using
Ford–Fulkerson's max flow algorithm.  We further show that one can
distinguish between the two cases by checking a property of the query
expression that we call {\em linearity}.  We also discuss conjunctive
queries with self-joins, and finally show that, in the case of {\whyno} 
causality, one can always compute responsibility in PTIME.  These
results are also summarized  in \autoref{fig:results}.

\begin{figure}[htb]
	\setlength{\tabcolsep}{1.1mm}
	\renewcommand{\arraystretch}{1.2}
	\centering
	\begin{tabular}{@{\hspace{0pt}}l@{\hspace{0pt}}llc|c|c|}
		\cline{5-6}
		& \multicolumn{3}{l|}{\textbf{\!\!Causality}} & \it{Why So?} & \it{Why No?}\\
		\cline{2-6}	
		\; & \multicolumn{3}{|l|}{\it{w/o SJ}} & PTIME (CQ) & \multirow{2}{*}{PTIME (FO)} \bigstrut\\
		\cline{2-5}
		& \multicolumn{3}{|l|}{\it{with SJ}}  & PTIME (FO) & \bigstrut\\
		\cline{2-6}
		& \multicolumn{3}{l}{}\\[0mm]
		\cline{5-6}
		& \multicolumn{3}{l|}{\!\!\textbf{Responsibility}} & { \it Why So?} & { \it Why No?}\\
		\cline{2-6}
		& \multicolumn{1}{|l|}{\multirow{2}{*}{\it w/o SJ}}  & {\it linear} 
			& 
			& PTIME & \bigstrut\\
		\cline{3-5}
		& \multicolumn{1}{|l|}{}  & {\it non-linear} & 
			& {NP-hard} &  PTIME \bigstrut\\
		\cline{2-5}
		& \multicolumn{3}{|l|}{\it with SJ} & NP-hard &\bigstrut\\
		\cline{2-6}
	\end{tabular}\label{fig:complexityRes_responsibility}
    \caption{Complexity of determining causality and responsibility for conjunctive queries. For queries with no self-joins we provide a complete dichotomy result. Queries with self-joins are NP-hard in general, but a similar dichotomy is not known.}\label{fig:results}
\end{figure}
\looseness -1
{\bf Causality and provenance:} Causality is related to \emph{lineage of query results},
such as why-provenance~\cite{DBLP:journals/tods/CuiWW00} or where-provenance~\cite{DBLP:conf/icdt/BunemanKT01}.
Recently, even explanations for non-answers have been described in
terms of
lineage~\cite{DBLP:journals/pvldb/HuangCDN08,DBLP:conf/sigmod/ChapmanJ09}.
We make use of this prior work because the first step in computing
causes and responsibilities is to determine the lineage of an answer
or non-answer to a query.  We note, however, that computing the
lineage of an answer is only the first step, and is not sufficient for
determining causality: causality needs to be established through a
contingency set, and is also accompanied by a degree (the
responsibility), which are both more difficult to compute than the lineage.

\introparagraph{Contributions and outline} Our three main contributions are:
\begin{itemize}[itemsep=1pt, parsep=1pt, topsep = 1pt]
\item We define {\whyso} and {\whyno}  causality and responsibility for conjunctive
  database queries (\autoref{sec:defining_causes_for_query_results}).
\item We prove that causality has PTIME data complexity for conjunctive queries
  (\autoref{sec:complexity_analysis_causality}).
\item We prove a dichotomy theorem for responsibility and conjunctive queries
  (\autoref{sec:complexity_analysis_responsibility}).
\end{itemize}
We review related work (\autoref{sec:related_work}) before we conclude 
(\autoref{sec:conclusions}).
All proofs are provided in the \hyperref[sec:appendix]{Appendix}.

\section{Query Cause and Responsibility} 
\label{sec:defining_causes_for_query_results}
\looseness -1
We assume a standard relational schema with relation names $R_1,
\ldots, R_k$.  We write $D$ for a database instance and $q$ for a
query.  We consider only conjunctive queries, unless otherwise
stated. A subset of tuples $\en{D} \subseteq D$ represents {\em
  endogenous tuples}; the complement $\ex{D} = D - \en{D}$ is called
the set of {\em exogenous tuples}.  For each relation $R_i$, we write
$\en{R}_i$ and $\ex{R}_i$ to denote the endogenous and exogenous
tuples in $R_i$ respectively.  If $\bar a$ is a tuple with the same
arity as the query's answer, then we write $D \models q(\bar{a})$ when
$\bar{a}$ is an answer to $q$ on $D$, and write $D \not\models
q(\bar{a})$ when $\bar{a}$ is a non-answer to $q$ on $D$.

\begin{definition}[Causality] \label{def:causality}\!\!  Let $t\!\in\!
  \en{D}$ be an endogenous tuple, and $\bar{a}$ a possible answer for
  $q$.
	\begin{itemize}[itemsep=1pt, parsep=1pt, topsep = 1pt]
        \item $t$ is called a {\em counterfactual cause} for $\bar{a}$
          in $D$ if $D \models q(\bar a)$ and $D - \{ t\}\not\models
          q(\bar{a})$
        \item $t\in D$ is called an {\em actual cause} for $\bar{a}$
          if there exists a set $\Gamma \subseteq \en{D}$ called a
          contingency for $t$, such that $t$ is a counterfactual cause
          for $\bar{a}$ in $D - \Gamma$.
	\end{itemize}
\end{definition}
\looseness -1
A tuple $t$ is a counterfactual cause, if by removing it from the
database, we remove $\bar{a}$ from the answer.  The tuple is an actual
cause if one can find a contingency under which it becomes a
counterfactual cause: more precisely, one has to find a set $\Gamma$
such that, after removing $\Gamma$ from the database we bring it to a
state where removing/inserting $t$ causes $\bar{a}$ to switch between
an answer and a non-answer.  Obviously, every counterfactual cause is
also an actual cause, by taking $\Gamma=\emptyset$.  The definition of
causality extends naturally to the case when the query $q$ is Boolean:
in that case, a counterfactual cause is a tuple that, when removed,
determines $q$ to become false.
	
\vspace{-1mm}
\begin{example}\label{ex:WhySo}
    \looseness -1
  Consider the query $q(x) \datarule R(x, y), S(y)$ on the following database instance, and assume all tuples are
  endogenous: $R=\en{R}$, $S=\en{S}$.  Consider the answer $a_2$.  The tuple
  $S(a_1)$ is a counterfactual cause for this result, because if we
  remove this tuple from $S$ then $a_2$ is no longer an answer.  Now
  consider the answer $a_4$.  Tuple $S(a_3)$ is not a counterfactual
  cause: if we remove it from $S$, $a_4$ is still an answer.  But
  $S(a_3)$ is an actual cause with contingency $\{S(a_2)\}$: once we
  remove $S(a_2)$ we reach a state where $a_4$ is still an answer, but
  further removing $S(a_3)$ makes $a_4$ a non-answer.

{
\centering
\vspace{-0mm}
\scriptsize
\mbox{
\hspace{8mm}
	\mbox{\begin{minipage}[t]{20mm}	
			\renewcommand\tabcolsep{7pt}
			\begin{tabular}[t]{|l|l|l}
			\multicolumn{2}{c}{R} \\
			\hline
			\textit{X} & \textit{Y}\\
			\hline
			$a_1$      &  $a_5$\\
			$a_2$      &  $a_1$\\
		        $a_3$      &  $a_3$ \\
			$a_4$      &  $a_3$  \\
                        $a_4$      &  $a_2$  \\
			\hline			\end{tabular}						
		\end{minipage}}\hspace{2mm}

\begin{minipage}[t]{100mm}
\setlength{\tabcolsep}{0.5mm}
	\mbox{\begin{minipage}[t]{15mm}
		\renewcommand\tabcolsep{7pt}
			\begin{tabular}[t]{|l|}
			\multicolumn{1}{c}{S} \\
			\hline
			\textit{Y}\\
			\hline
			$a_1$\\
			$a_2$\\
			$a_3$\\
			$a_4$\\
            $a_6$\\
			\hline
			\end{tabular}
	\end{minipage}}
	\mbox{\begin{minipage}[t]{25mm}
		\renewcommand\tabcolsep{7pt}
			\begin{tabular}[t]{ll|l|l}
			\multicolumn{4}{l}{$q(x) \datarule R(x, y)S(y)$} \\
			\cline{3-3}
			& & \textit{X} \\
			\cline{3-3}        
			& & $a_2$    \\        
			& & $a_3$    \\
			& & $a_4$    \\
			\cline{3-3}			
			\end{tabular}

		\end{minipage}}
\end{minipage}}
}

\looseness -1
  For a more subtle example, consider the Boolean query $q \datarule$
  $R(x,a_3),S(a_3)$ (where $a_3$ is a constant), which is true on the
  given instance.  Suppose only the first three
  tuples in $R$ are endogenous, and the last two are exogenous: $\ex{R} =
  \set{(a_4, a_3), (a_4, a_2)}$.  Let's examine whether $\en{R}(a_3,a_3)$
  is a cause for the query being true.  This tuple is not an actual
  cause.  This is because $\set{\en{S}(a_3)}$ is not a contingency for
  $\en{R}(a_3,a_3)$: by removing $\en{S}(a_3)$ from the database we make the
  query false, in other words the tuple $\en{R}(a_3,a_3)$ makes no
  difference, under any contingency.  Notice that
  $\set{\ex{R}(a_4,a_3)}$ is not contingency because
  $\ex{R}(a_4,a_3)$ is exogenous.
\end{example}
\vspace{-1mm}

\looseness -1
In this paper we discuss two instantiations of query causality.  In the
first, called {\whyso} causality, we are given an actual answer
$\bar a$ to the query, and would like to find the cause(s) for this
answer.  \Autoref{def:causality} is given for {\whyso} causality.
In this case $D$ is the real database, and the endogenous tuples $\en{D}$
are a given subset, while exogenous are $\ex{D} = D-\en{D}$.  In the second
instantiation, called {\whyno}  causality, we are given a
non-answer $\bar a$ to the query, i.e. would like to know the cause
why $\bar a$ is not an answer.  This requires some minor changes to
\Autoref{def:causality}.  Now the real database consists entirely of
exogenous tuples, $\ex{D}$.  In addition, we are given a set of
potentially missing tuples, whose absence from the database caused
$\bar a$ to be a non-answer: these form the endogenous tuples, $\en{D}$,
and we denote $D = \ex{D} \cup \en{D}$.  We do not discuss in this paper how
to compute $\en{D}$: this has been addressed in recent
work~\cite{DBLP:journals/pvldb/HuangCDN08}.
In this setting, the definition of the {\whyno}  causality is the
dual of \autoref{def:causality} and we give it here briefly: a {\em
  counterfactual cause} for the non-answer $\bar a$ in $\ex{D}$ is a
tuple $t\in \en{D}$ s.t. $\ex{D} \not\models q(\bar a)$ and $\ex{D} \cup
\set{t} \models q(\bar a)$; an actual cause for the non-answer $\bar
a$ is a tuple $t \in \en{D}$ s.t. there exists a set $\Gamma \subseteq
\en{D}$ called contingency set s.t. $t$ is a counterfactual cause for the
non-answer of $\bar a$ in $\ex{D} \cup \Gamma$.

We now define {\em responsibility}, measuring the degree of
causality.

\begin{definition}[Responsibility]\label{def:responsibility}\!\!
  Let $\bar{a}$ be an answer or non-answer to a query $q$, and let $t$
  be a cause (either {\whyso}, or {\whyno}  cause).  The {\em
    responsibility} of $t$ for the (non-)answer $\bar{a}$ is:
	$$
	\rho_t = \frac{1}{1+ \min_{\Gamma} |\Gamma|}
	$$
        where $\Gamma$ ranges over all contingency sets for $t$.
\end{definition}
 
\looseness -1
Thus, the responsibility is a function of the minimal number of tuples
that we need to remove from the real database $D$ (in the case of
{\whyso}), or that we need to add to the real database $\ex{D}$
(in the case of {\whyno}) before it becomes counterfactual.  The
tuple $t$ is a counterfactual cause iff $\rho_t = 1$, and it is an actual
cause iff $\rho_t > 0$. By convention, if $t$ is not a cause, $\rho_t = 0$.

\vspace{-1mm}
\begin{example}[IMDB continued]
  \Autoref{fig:musicals-graph} shows the lineage of the answer \genre{Musical}
  in \autoref{ex:imdb}.  Consider the movie \movie{Sweeney Todd}: its
  responsibility is $1/3$ because the smallest contingency is:
  {\{\sqll{Director}(David, Burton), \sqll{Director}(Humphrey, Burton)\}}
(if we remove both directors, then \movie{Sweeney Todd}
  becomes counterfactual).  Consider now the movie \movie{Manon Lescaut}:
  its responsibility is $1/5$ because the smallest contingency set is
  {\{\sqll{Director} \director{(David, Burton)},
  \sqll{Movie}(\movie{Flight}),
  \sqll{Movie}(\movie{Candide}), 
  \sqll{Director}(Tim, Burton)\}}.
\end{example}
\vspace{-1mm}

\looseness -1
We now define formally the problems studied in this paper.  Let $D =
\ex{D} \cup \en{D}$ be a database consisting of endogenous and exogenous
tuples, $q$ be a query, and $\bar{a}$ be a potential answer to the
query.

\begin{description}[itemsep=1pt, parsep=1pt, topsep = 1pt]
\item[Causality problem] Compute the set $C \subseteq \en{D}$ of
  actual causes for the answer $\bar{a}$.
\item[Responsibility problem] For each actual cause $t \in C$, compute
  its responsibility $\rho_t$.
\end{description}

\looseness -1
We study the {\em data complexity} in this paper: the query $q$ is
fixed, and the complexity is a function of the size of the database
instance $D$.  In the rest of the the paper we restrict our discussion
w.l.o.g.\ to Boolean queries: if $q(\bar x)$ is not Boolean, then to
compute the causes or responsibilities for an answer $\bar a$ it
suffices to compute the causes or responsibilities of the Boolean
query $q[\bar a/\bar x]$, where all head variables are substituted
with the constants in $\bar a$.

\section{Complexity of  Causality} 
\label{sec:complexity_analysis_causality}
\looseness -1
We start by proving that causality can be computed efficiently; even
stronger, we show that causes can be computed by a relational query.
This is in contrast with the general causality problem, where
Eiter~\cite{DBLP:journals/ai/EiterL02} has shown that deciding
causality for a Boolean expression is NP-complete.  We obtain
tractability by restricting our queries to conjunctive queries.
Chockler et al. \cite{DBLP:journals/tocl/ChocklerHK08} have shown that
causality for ``read once'' Boolean circuits is in PTIME.  Our results
are strictly stronger: for the case of conjunctive queries without
self-joins, queries with read-once lineage expressions are precisely
the hierarchical queries~\cite{pods:invited:2007,OlteanuH2009}, while
our results apply to all conjunctive queries.  The results in this
section apply uniformly to both {\whyso} and {\whyno} 
causality, so we will simply refer to causality without specifying
which kind.  Also, we restrict our discussion to Boolean queries only.

\looseness -1
We write positive Boolean expressions in DNF, like $\Phi = (X_1 \wedge
X_3) \vee (X_1 \wedge X_2 \wedge X_3) \vee (X_1 \wedge X_4)$;
sometimes we drop $\wedge$, and write $\Phi = X_1 X_3 \vee X_1 X_2 X_3
\vee X_1 X_4$.  A conjunct $c$ is {\em redundant} if there exists
another conjunct $c'$ that is a strict subset of $c$.  Redundant
conjuncts can be removed without affecting the Boolean expression.  In
our example, $X_1 X_2 X_3$ is redundant, because it strictly contains
$X_1 X_3$; it can be removed and $\Phi$ simplifies to $ X_1 X_3 \vee
X_1 X_4$.  A positive DNF is {\em satisfiable} if it has at least one
conjunct; otherwise it is equivalent to \sqll{false} and we call it
{\em unsatisfiable}.

\looseness -1
Next, we review the definition of lineage.  Fix a Boolean conjunctive
query consisting of $m$ atoms, $q = g_1, \ldots, g_m$, and database
instance $D$; recall that $D = \ex{D} \cup \en{D}$ (exogenous and
endogenous tuples). For every tuple $t \in D$, let $X_t$ denote a
distinct Boolean variable associated to that tuple.  A {\em valuation}
for $q$ is a mapping, $\theta : \Var(q) \rightarrow Adom(D)$, where
$Adom(D)$ the active domain of the database, such that the
instantiation of every atom is a tuple in the database:
$t_i=\theta(g_i)\in D$ for $i=1,\ldots,m$.  We associate to the
valuation $\theta$ the following conjunct: $c^{\theta} = X_{t_1}
\wedge \ldots \wedge X_{t_m}$.  The {\em lineage} of $q$ is:
  $$\Phi = \bigvee_{\theta : q \rightarrow D}  c^{\theta}$$

\looseness -1
We will assume w.l.o.g.\ that $\ex{D} \not\models q$ and $(\ex{D} \cup \en{D})
\models q$ (otherwise we have no causes).

\begin{definition}[$\enSymb$-lineage]  The {\em $\enSymb$-lineage} of $q$ is:
  \begin{align*}
    \en{\Phi} &  =  \Phi[X_{t}:=\sqll{true}, \forall t \in \ex{D}]    
  \end{align*}
\end{definition}
\looseness -1
Here $\Phi[X_{t}:=\sqll{true}, \forall t \in \ex{D}]$ means
substituting $X_t$ with $\sqll{true}$, for all Boolean variables
$X_t$ corresponding to exogenous tuples $t$.  Thus, the $\enSymb$ -lineage is
obtained as follows.  Compute the standard lineage, over all tuples
(exogenous and endogenous), then set to \sqll{true} all exogenous
tuples: the remaining expression depends only on endogenous tuples.
The following technical result allows us to
compute the causes to answers of conjunctive queries.

\begin{theorem}[Causality]\label{thm:causalityEq}\!\! 
  Let $q$ be a conjunctive query, and $t$ be an endogenous tuple.  Then the
  following three conditions are equivalent:
  \begin{enumerate}[itemsep=1pt, parsep=1pt, topsep = 1pt]
  \item $t$ is an actual cause for $q$ (\autoref{def:causality}).
  \item There exists set of tuples $\Gamma \subseteq \en{D}$ such that
    the lineage $\Phi[X_u = \sqll{false},$ $\forall u \in \Gamma]$ is
    satisfiable, and $\Phi[X_u = \sqll{false},$ $\forall u \in \Gamma;
    X_t = \sqll{false}]$ is unsatisfiable. 
  \item There exists a non-redundant conjunct in the $\enSymb$-lineage 
    $\en{\Phi}$ that contains the variable $X_t$.
  \end{enumerate}
\end{theorem}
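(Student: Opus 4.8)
The plan is to prove the chain of equivalences $(1)\Leftrightarrow(2)$ and $(2)\Leftrightarrow(3)$ separately; together they give the three-way equivalence. The backbone of the entire argument is the standard dictionary between database operations and substitutions into the lineage: for any $\Gamma\subseteq\en{D}$, the positive DNF $\Phi[X_u=\sqll{false},\forall u\in\Gamma]$ is satisfiable iff $D-\Gamma\models q$, since a conjunct $c^{\theta}$ survives the substitution exactly when the valuation $\theta$ maps $q$ entirely into $D-\Gamma$, and a positive DNF is satisfiable iff it retains at least one conjunct. I will state this correspondence once and reuse it throughout.

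For $(1)\Leftrightarrow(2)$, the work is essentially bookkeeping through this dictionary. Unfolding \autoref{def:causality}, $t$ is an actual cause iff there is $\Gamma\subseteq\en{D}$ with $D-\Gamma\models q$ and $(D-\Gamma)-\{t\}\not\models q$. Applying the correspondence to each of the two clauses rewrites them verbatim as ``$\Phi[X_u=\sqll{false},\forall u\in\Gamma]$ is satisfiable'' and ``$\Phi[X_u=\sqll{false},\forall u\in\Gamma;X_t=\sqll{false}]$ is unsatisfiable,'' which is exactly condition $(2)$. I will also note that $(2)$ automatically forces $t\notin\Gamma$: otherwise the two formulas coincide and cannot differ in satisfiability, matching the fact that $t$ must survive in $D-\Gamma$ to be counterfactual there.

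The heart of the proof is $(2)\Leftrightarrow(3)$, and this is where I expect the real content. The first move is to discard the exogenous tuples: since $(2)$ only ever removes endogenous tuples, setting all exogenous variables to \sqll{true} is harmless, so $\Phi[X_u=\sqll{false},\forall u\in\Gamma]$ and $\en{\Phi}[X_u=\sqll{false},\forall u\in\Gamma]$ have the same satisfiability (an entirely-exogenous conjunct is excluded by the standing assumption $\ex{D}\not\models q$, so every conjunct of $\en{\Phi}$ is nonempty). This reduces $(2)\Leftrightarrow(3)$ to a purely combinatorial statement about the positive DNF $\psi:=\en{\Phi}$ over endogenous variables: there is a $\Gamma$ with $\psi[X_u=\sqll{false},\forall u\in\Gamma]$ satisfiable and $\psi[X_u=\sqll{false},\forall u\in\Gamma;X_t=\sqll{false}]$ unsatisfiable iff $\psi$ has a non-redundant conjunct containing $X_t$.

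I would then settle the two directions of this combinatorial claim. For $(3)\Rightarrow(2)$, given a non-redundant conjunct $c\ni X_t$, take $\Gamma$ to be all variables occurring in $\psi$ but not in $c$ (all endogenous, so $\Gamma\subseteq\en{D}$, and $t\notin\Gamma$): then $c$ survives, so the formula is satisfiable, while any surviving conjunct must be a subset of $c$, and by non-redundancy of $c$ the only such conjunct is $c$ itself, which contains $X_t$ and hence dies once $X_t=\sqll{false}$. For $(2)\Rightarrow(3)$, satisfiability yields a conjunct $c_0$ avoiding $\Gamma$, and unsatisfiability after $X_t=\sqll{false}$ forces \emph{every} conjunct avoiding $\Gamma$ to contain $X_t$; choosing $c^{*}$ minimal under inclusion among the conjuncts contained in $c_0$ gives the witness, since $c^{*}\subseteq c_0$ still avoids $\Gamma$ (hence contains $X_t$) and minimality makes it non-redundant (a strict subset conjunct of $c^{*}$ would be a smaller conjunct inside $c_0$, contradicting minimality). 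The main obstacle is purely this redundancy bookkeeping in both directions — engineering a contingency that isolates exactly one conjunct, and extracting a minimal, hence non-redundant, conjunct that still carries $X_t$ — rather than any deep difficulty; the whole argument is uniform for {\whyso} and {\whyno} causality since the latter is the exact dual.
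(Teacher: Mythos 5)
Your proposal is correct and follows essentially the same route as the paper's proof: both reduce to the endogenous lineage $\en{\Phi}$, take as contingency the set of endogenous variables occurring outside a chosen non-redundant conjunct containing $X_t$, and use non-redundancy to force every other surviving conjunct to die, which yields the counterfactual witness. The differences are presentational rather than substantive --- you make the database-to-lineage dictionary for $(1)\Leftrightarrow(2)$ explicit and replace the paper's one-line monotonicity argument for $(2)\Rightarrow(3)$ with a cleaner minimal-conjunct extraction, while asserting the {\whyno} case by duality where the paper writes it out as a separate dual case.
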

\looseness -1
We give the proof in the Appendix.  The theorem gives a PTIME
algorithm for computing all causes of $q$: compute the $\enSymb$-lineage
$\en{\Phi}$ as described above, and remove all redundant conjuncts.  All
tuples that still occur in the lineage are actual causes of $q$.

\vspace{-1mm}
\begin{example} \label{ex:lineage}\!\!  
    \looseness -1
    Consider $q \datarule
  R(x,y),S(y),y=a_3$ over the database of \autoref{ex:WhySo}.  Its
  lineage is $\Phi = X_{R(a_3,a_3)} X_{S(a_3)} \vee X_{R(a_4,a_3)}$ $X_{S(a_3)}$.  Assume $R(a_4,a_3)$ is exogenous and $R(a_3,a_3)$,
  $S(a_3)$ are endogenous.  Then the $\enSymb$-lineage $\en{\Phi}$ is obtained by
  setting $X_{R(a_4,a_3)} = \sqll{true}$: $\en{\Phi} = X_{R(a_3,a_3)}
  X_{S(a_3)} \vee X_{S(a_3)}$.  After removing the redundant conjunct,
  the $\enSymb$-lineage becomes $\en{\Phi} = X_{S(a_3)}$; hence, $S(a_3)$ is the
  only actual cause for the query.
\end{example}
\vspace{-1mm}

\looseness -1
In the rest of this section we prove a stronger result.  Denote $C_R$
the set of actual causes in the relation $R$; that is, $C_R\subseteq
\en{R}$, and every tuple $t \in C_R$ is an actual cause.  We show that
$C_R$ can be computed by a relational query.  In particular, this
means that the causes to a (non-)answer can be computed by a SQL
query, and therefore can be performed entirely in the database system.

\begin{theorem}[Causality FO]\label{thm:causalQuery}~\!\!\!
  Given a Boolean query $q$ over relations $R_1,\ldots R_k$, the set
  of all causes of $q$ $\set{C_{R_1},\ldots,C_{R_k}}$ can be expressed
  in non-recursive stratified Datalog with negation, with only two
  strata.
\end{theorem}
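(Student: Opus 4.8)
The plan is to turn condition (3) of \autoref{thm:causalityEq} directly into a relational query: an endogenous tuple $t$ is an actual cause exactly when the $\enSymb$-lineage $\en{\Phi}$ has a \emph{non-redundant} conjunct containing $X_t$. Every conjunct of $\Phi$ arises from a valuation $\theta\colon\Var(q)\to\Adom(D)$ of the $m$-atom query $q=g_1,\ldots,g_m$, and since $q$ (hence $m$) is fixed, I would materialize valuations explicitly. Let $\mathsf{Wit}(\bar v)$ hold all satisfying valuations, where $\bar v$ ranges over tuples on $\Var(q)$ with $g_i[\bar v]\in D$ for every $i$; this is just the natural join of the $m$ atoms, so it is a positive conjunctive query with no negation. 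Each witness $\bar v$ determines the conjunct $c^{\bar v}$ and, after setting exogenous variables to \sqll{true}, its endogenous part $E(\bar v)=\setof{g_i[\bar v]}{g_i[\bar v]\in\en{D}}$, which is a conjunct of $\en{\Phi}$. The standing assumption $\ex{D}\not\models q$ guarantees no witness maps entirely into $\ex{D}$, so every $E(\bar v)$ is nonempty and the trivial empty conjunct never arises.

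The heart of the construction is to express redundancy without iterating over the database-dependent set of conjuncts. I would define a binary relation $\mathsf{Sub}(\bar v',\bar v)$ holding iff $E(\bar v')\subsetneq E(\bar v)$. Because $m$ is a constant, each endogenous part has at most $m$ tuples, so containment---``every endogenous $g_i[\bar v']$ equals some endogenous $g_j[\bar v]$''---is a finite conjunction over $i$ of finite disjunctions over $j$ of equalities, each guarded by the EDB predicates $\en{R}_\ell$, $\ex{R}_\ell$; this is a positive first-order formula. Strictness $E(\bar v')\ne E(\bar v)$ adds a witness tuple in $E(\bar v)\setminus E(\bar v')$, again first-order and using only the built-in inequality $\ne$ on the active domain, which is not IDB negation. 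This is where self-joins, repeated variables, and constants must be accommodated: atoms over the same relation may collapse to the same tuple, so the comparison is stated tuple-by-tuple rather than position-by-position, but the argument is unchanged. From these I set $\mathsf{Red}(\bar v)\datarule\mathsf{Wit}(\bar v'),\mathsf{Sub}(\bar v',\bar v)$, collecting exactly the valuations whose endogenous conjunct is redundant. Everything so far is negation-free and constitutes the first stratum.

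In a second stratum I would define, for each relation $R_j$ and each atom $g_i$ over $R_j$,
$$C_{R_j}(t)\datarule\mathsf{Wit}(\bar v),\ g_i[\bar v]=t,\ \en{R}_j(t),\ \neg\,\mathsf{Red}(\bar v),$$
where $g_i[\bar v]=t$ projects $\bar v$ onto the variables of $g_i$. Then $t\in C_{R_j}$ iff there is a witness $\bar v$ passing through the endogenous tuple $t$ whose conjunct survives redundancy elimination---which is precisely condition (3), noting that two witnesses with the same endogenous part share the same $\mathsf{Red}$ status, so $C_{R_j}$ does not depend on which witness is chosen. The single use of negation is applied to the IDB predicate $\mathsf{Red}$ of the first stratum, giving a non-recursive program with exactly two strata, as claimed.

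The main obstacle I anticipate is the second step: arguing that strict containment of endogenous conjuncts is expressible \emph{positively} in first-order logic. The key leverage is that the query, and hence $m$, is fixed while only the data size varies; this lets me replace the apparently second-order ``subset of a set of tuples'' by a bounded first-order pattern whose size depends on $q$ but not on $D$. I would also carefully verify the corner cases created by self-joins (distinct atom positions mapping to the same or to different tuples) and confirm, via $\ex{D}\not\models q$, that the endogenous guards $\en{R}_\ell/\ex{R}_\ell$ in $\mathsf{Sub}$ and in the head rule are placed so that $E(\bar v)$ is always nonempty, which is what makes the equivalence with condition (3) exact.
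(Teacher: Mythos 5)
Your proposal is correct, and while it shares the paper's skeleton---both implement condition (3) of \autoref{thm:causalityEq} by materializing all witnesses and spending the single stratum of negation on discarding redundant ones---the mechanism for detecting redundancy is genuinely different. The paper compiles the redundancy test into the \emph{query}: it enumerates the $2^m$ refinements (each atom tagged $\enSymb$ or $\exSymb$), the finitely many images obtained by identifying $\enSymb$-variables or substituting constants, and the $\enSymb$-embeddings $e : r \rightarrow s$ mapping a strict subset of the $\enSymb$-atoms of a refinement onto all $\enSymb$-atoms of an image; each embedding contributes a positive IDB predicate $I_{s,e}$, and the cause rules negate exactly these. You instead perform the test on the \emph{data}: one generic predicate $\mathsf{Sub}(\bar v',\bar v)$ expressing $E(\bar v')\subsetneq E(\bar v)$, which is positive first-order precisely because $m$ is fixed (a bounded conjunction of bounded disjunctions of tuple equalities guarded by $\en{R}_\ell$ and $\ex{R}_\ell$, plus a difference-witness tuple using built-in $\neq$). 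Your formulation is more uniform, and it handles head-on the subtlety that distinct atom positions may collapse to the same tuple under self-joins, repeated variables, or constants---the very case the paper's image machinery exists to absorb, and where a naive ``strict subset of atoms'' test would be unsound because the remaining atoms might contribute no new tuples; your explicit strict set containment with an element of $E(\bar v)\setminus E(\bar v')$ avoids that pitfall cleanly. What the paper's compiled version buys in return is that every rule stays conjunctive over the original relational vocabulary, so \autoref{cor:conj} falls out immediately (a single refinement with no embeddings leaves one positive conjunctive rule), whereas your program always carries the literal $\neg\,\mathsf{Red}(\bar v)$ and the corollary would need the additional observation that $\mathsf{Red}$ is empty under its hypotheses. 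Both programs are non-recursive, of size exponential in the query but independent of the database, and use exactly two strata, as the theorem requires; your handling of the degenerate empty-conjunct case via the standing assumption $\ex{D}\not\models q$ is also sound (and, as your construction shows, the program would in fact return no causes even without it, which is the correct semantics).
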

\looseness -1
\Autoref{thm:causalQuery}, shows that causes can be expressed in a
language equivalent to a subset of first order logic~\cite{abiteboul-hull-vianu}
and that, moreover, only one level of negation is needed. The proof is
in the appendix.

\vspace{-1mm}
\begin{example} \label{ex:rs}\!\!\!  
    \looseness -1
    Continuing with the query $q :-
  R(x,y),S(y)$ from \autoref{ex:lineage}), suppose all tuples in $S$
  are endogenous. Thus, we have $\ex{R}, \en{R}, \en{S}$, but $\ex{S}
  = \emptyset$.  The complete Datalog program that produces the causes
  for $q$ is:
  \begin{align*}
    I(y) 	& \datarule   \ex{R}(x,y),\en{S}(y) \\
    C_R(x,y) & \datarule  \en{R}(x,y),\en{S}(y),\neg I(y) \\
    C_S(y) 	& \datarule  \en{R}(x,y),\en{S}(y), \neg I(y) \\
    C_S(y) 	& \datarule  \ex{R}(x,y),\en{S}(y)
   \end{align*}
   The role of $\neg I(y)$ is to remove redundant terms from the
   lineage.  To see this, consider the database $R = \set{(a_4,a_3),
     (a_3,a_3)}$, $S = \en{S} = \set{a_3}$, and assume that $\en{R} =
   \set{(a_3, a_3)}$, $\ex{R} = \set{(a_4, a_3)}$, thus, $q$'s lineage
   and $\enSymb$-lineage are:
   \begin{align*}
     \Phi & = X_{R(a_4,a_3)}X_{S(a_3)} \vee X_{R(a_3,a_3)}X_{S(a_3)} \\
     \en{\Phi} & = X_{S(a_3)} \vee X_{R(a_3,a_3)}X_{S(a_3)} \equiv X_{S(a_3)}
   \end{align*}
   Thus, the only actual cause of $q$ is $S(a_3)$.  Consider $C_R$,
   which computes causes in $R$.  Without the negated term $\neg
   I(y)$, $C_R$ would return $R(a_3,a_3)$ (which would be incorrect).
   The role of the negated term $\neg I(y)$ is to remove the redundant
   terms in $\en{\Phi}$: in our example, $C_R$ returns the empty set (which
   is correct).  Similarly, one can check that $C_S$ returns $S(a_3)$.
   Note that negation is necessary in $C_R$ because it is
   non-monotone: if we remove the tuple $R(a_4,a_3)$ from the database
   then $R(a_3,a_3)$ becomes a cause for the query $q$, thus $C_R$ is
   non-monotone.  Hence, in general, we must use negation in order to
   compute causes.
\end{example}
\vspace{-1mm}

\vspace{-1mm}
\begin{example} \label{ex:rsr}\!\!\!  
    \looseness -1
    Consider $q \datarule S(x), R(x,y),
  S(y)$, and assume that $S$ is endogenous and $R$ is exogenous: in
  other words, $S=\en{S}$, $R=\ex{R}$. The following Datalog program
  computes all causes:
  \begin{align*}
    I(x) 	& \datarule  \en{S}(x), \ex{R}(x,x) \\
    C_S(x) 	& \datarule  \en{S}(x),\ex{R}(x,y),\en{S}(y), \neg I(x), \neg I(y) \\
    C_S(y) 	& \datarule  \en{S}(x),\ex{R}(x,y),\en{S}(y), \neg I(x), \neg I(y)
   \end{align*}
   Here, too, we can prove that $C_S$ is non-monotone and, hence, must
   use negation.  Consider the database instance $R = \set{(a_4,a_3),$
$(a_3,a_3)}$, $S = \set{a_3, a_4}$.  Then $S(a_4)$ is not a cause;
   but if we remove $R(a_3, a_3)$, then $S(a_4)$ becomes a cause.
\end{example}
\vspace{-1mm}

\looseness -1
As the previous examples show, the causality query $C$ is, in general,
a non-monotone query: by inserting more tuples in the database, we
determine some tuples to no longer be causes.  Thus, negation is
necessary in order to express $C$.  The following corollary gives a
sufficient condition for the causality query to simplify to a
conjunctive query.

\begin{corollary}\!\!\!\label{cor:conj}
  Suppose that each relation $R_i$ is either endogenous or exogenous
  (that is, either $\en{R}_i = R_i$ or $\ex{R}_i = R_i$).  Further, suppose
  that, if $R_i$ is endogenous, then the relation symbol $R_i$ occurs
  at most once in the query $q$.  Then, for each relation name $R_i$,
  the causal query $C_{R_i}$ is a single conjunctive query (in
  particular it has no negation).
\end{corollary}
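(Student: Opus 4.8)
The plan is to show that, under the two hypotheses, the $\enSymb$-lineage $\en{\Phi}$ never contains a redundant conjunct. The redundancy-removal step is exactly what forces the negated subgoal $\neg I$ in the general construction of \autoref{thm:causalQuery}; once we know redundancy cannot occur, the characterization of \autoref{thm:causalityEq}(3) collapses to a plain conjunctive query with no negation.

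First I would use hypothesis (1) to partition the atoms of $q = g_1,\ldots,g_m$ into endogenous and exogenous atoms according to the (fixed) relation symbol each atom uses. Since each relation is entirely endogenous or entirely exogenous, this classification is well defined independently of the valuation. Writing $g_{i_1},\ldots,g_{i_p}$ for the endogenous atoms, the conjunct of $\en{\Phi}$ associated with a valuation $\theta$ is $\bigwedge_{j=1}^{p} X_{\theta(g_{i_j})}$, all exogenous variables having been set to \sqll{true}.

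The key step is to argue that each such conjunct consists of exactly $p$ distinct Boolean variables. Here hypothesis (2) is essential: because each endogenous relation symbol occurs at most once in $q$, the endogenous atoms $g_{i_1},\ldots,g_{i_p}$ use pairwise distinct relation symbols, so the tuples $\theta(g_{i_1}),\ldots,\theta(g_{i_p})$ lie in pairwise distinct relations and are therefore pairwise distinct; hence the variables $X_{\theta(g_{i_j})}$ are pairwise distinct and every conjunct of $\en{\Phi}$ has cardinality exactly $p$. Since a conjunct $c$ is redundant only if some other conjunct $c'$ is a \emph{strict} subset of it, which requires $|c'| < |c|$, and all conjuncts share the size $p$, no conjunct can be redundant. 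I expect this uniform-size argument to be the crux, and it is exactly where both hypotheses are consumed: dropping (2) would let a self-join map two endogenous atoms to one tuple, yielding a conjunct of size $<p$ and hence genuine redundancy — the very situation the $\neg I$ subgoal repairs in \autoref{ex:rs} and \autoref{ex:rsr}.

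It then remains to read off the conjunctive query. By \autoref{thm:causalityEq}(3), a tuple $t \in \en{R}_i$ is an actual cause iff $X_t$ occurs in some non-redundant conjunct of $\en{\Phi}$; as no conjunct is redundant, this is equivalent to $X_t$ occurring in $\en{\Phi}$ at all, i.e. to the existence of a valuation $\theta$ of $q$ over $D$ whose image contains $t$. Letting $g_j = R_i(\bar x_j)$ be the unique atom with relation symbol $R_i$ (if $R_i$ does not occur, $C_{R_i}$ is empty and trivially conjunctive), this condition is expressed verbatim by
$$C_{R_i}(\bar x_j) \datarule g_1,\ldots,g_m,$$
evaluated over $D$, with each relation used in full since it is wholly endogenous or wholly exogenous. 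This query contains no negation, which completes the argument.
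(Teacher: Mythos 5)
Your proof is correct, but it takes a genuinely different route from the paper's. The paper disposes of \autoref{cor:conj} in one line by specializing the machinery built for \autoref{thm:causalQuery}: under the two hypotheses there is a single (active) refinement, and no $\enSymb$-embedding $e : r \rightarrow s$ can exist --- an embedding must map a \emph{strict} subset of the endogenous atoms of $r$ onto \emph{all} endogenous atoms of an image $s$, which is impossible when the endogenous atoms carry pairwise distinct relation symbols --- so the general Datalog rule for $C_{R_i}$ carries an empty conjunction of negated $I$-subgoals and degenerates to a single positive conjunctive rule. You never touch refinements or embeddings; instead you argue semantically, at the level of the lineage, that every conjunct of $\en{\Phi}$ consists of exactly $p$ pairwise distinct variables (one per endogenous atom), so no conjunct can be a strict subset of another and redundancy is impossible, whence condition (3) of \autoref{thm:causalityEq} collapses to ``$X_t$ occurs in $\en{\Phi}$ at all,'' which is precisely membership in the answer of $C_{R_i}(\bar x_j) \datarule g_1,\ldots,g_m$. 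The two arguments consume the hypotheses at the same points --- fully endogenous/exogenous relations make the atom classification valuation-independent (the paper's unique refinement), and absence of endogenous self-joins rules out collapsing two endogenous atoms onto one tuple (the paper's absent embedding) --- and your uniform-cardinality observation is exactly the semantic shadow of ``no embedding exists.'' What your version buys is self-containedness: it rests only on \autoref{thm:causalityEq} and elementary set cardinality, and it makes visible why both hypotheses are tight, in line with \autoref{ex:rs} and \autoref{ex:rsr}. What the paper's version buys is uniformity: it reads the corollary directly off the already-proved general construction, so the resulting conjunctive query is literally the specialization of rule \autoref{eq:c}, rather than being re-derived from scratch.
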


\looseness -1
The two examples above show that the corollary is tight:
\autoref{ex:rs} shows that causality is non-monotone when a relation
is mixed endogenous/exogenous, and \autoref{ex:rsr} shows that
causality is non-monotone when the query has self-joins, even if all
relations are either endogenous or exogenous.  

To illustrate the corollary, we revisit \autoref{ex:rs}, where the
query is $q \datarule R(x,y),S(y)$, and assume that $\ex{R} = \emptyset$ and
$\ex{S} = \emptyset$.  Then the Datalog program becomes:
  \begin{align*}
    C_R(x,y) 	& \datarule  \en{R}(x,y),\en{S}(y) \\
    C_S(y)  	& \datarule  \en{R}(x,y),\en{S}(y)
   \end{align*}

\section{Complexity of Responsibility} 
\label{sec:complexity_analysis_responsibility}
\looseness -1
In this section, we study the complexity of computing responsibility. As before, we restrict our discussion to Boolean
queries.  Thus, given a Boolean query $q$ and an endogenous tuple $t$,
compute its responsibility $\rho_t$ (\autoref{def:responsibility}).
We say that the query is in PTIME if there exists a PTIME algorithm
that, given a database $D$ and a tuple $t$ computes the value
$\rho_t$; we say that the query is NP-hard, or simply hard, if the
problem ``given a database instance $D$ and a number $v$, check
whether $\rho_t > v$'' is NP-hard.  The strongest result in this section and the paper is
a \emph{dichotomy theorem} for {\whyso} queries without self-joins: for
every query, computing the \emph{responsibility is either in PTIME or
NP-hard} (\autoref{sub:why_so_r}).  The case of non-answers ({\whyno}) turns out to be a
simpler problem as \autoref{sub:why_no_r} shows.

\subsection{Why So?} 
\label{sub:why_so_r}
\looseness -1
We assume that the conjunctive query $q$ is without
self-joins, i.e.\ every relation occurs at most once in $q$; we
discuss self-joins briefly at the end of the section.  W.l.o.g.\ we further
assume that each relation is either fully endogenous or exogenous
($\en{R}_i = R_i$ or $\ex{R}_i = R_i$).  Recall that computing the
{\whyso} responsibility of a tuple $t$ requires computing the
smallest contingency set $\Gamma$, such that $t$ is a counterfactual
cause in $D-\Gamma$.  We start by giving three hard
queries, which play an important role in the dichotomy result.

\begin{theorem}[Canonical Hard Queries]\label{thm:hardQueries}~\!\!
  Each of the following three queries is NP-hard:
  \begin{align*}\!\!
    h_1^* & \datarule \en{A}(x), \en{B}(y), \en{C}(z), W(x,y,z) \\
    h_2^* & \datarule \en{R}(x,y),\en{S}(y,z),\en{T}(z,x) \\
    h_3^* & \datarule \en{A}(x), \en{B}(y), \en{C}(z), R(x,y), S(y,z), T(z,x)
	\end{align*}
If the type of a relation is not specified, then the
query remains hard 
whether the relation is endogenous or exogenous.
\end{theorem}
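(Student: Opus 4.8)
The plan is to reduce a known NP-hard problem to the decision version of responsibility, namely ``given $D$ and a value $v$, is $\rho_t>v$?''. Since $\rho_t=1/(1+\min_\Gamma|\Gamma|)$, this is the same as asking whether the minimum contingency for $t$ has size below $1/v-1$, so it suffices to show that computing $\min_\Gamma|\Gamma|$ is NP-hard. The conceptual lever is a hitting-set reading of contingencies that follows from \autoref{thm:causalityEq}: writing the lineage as a DNF whose conjuncts are exactly the valuations (witnesses) of $q$, a set $\Gamma\subseteq\en{D}$ with $t\notin\Gamma$ is a contingency for $t$ precisely when $\Gamma$ intersects every witness that does \emph{not} use $t$ while sparing at least one witness that \emph{does} use $t$. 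Thus $\min_\Gamma|\Gamma|$ is a constrained minimum hitting set over the witnesses of $q$, and responsibility is NP-hard as soon as this hitting-set instance can encode an NP-hard problem.

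I would treat $h_1^*$ first, where this reading is most direct. Fixing $W$ to be exogenous, every witness becomes a three-variable conjunct $X_{\en{A}(a)}X_{\en{B}(b)}X_{\en{C}(c)}$ with one variable drawn from each endogenous unary relation, so $\min_\Gamma|\Gamma|$ is exactly the minimum vertex cover of a $3$-partite, $3$-uniform hypergraph. The reduction then fixes the target to a distinguished tuple $\en{C}(c_0)$ equipped with a single private witness that shares no tuple with any other witness, so that the ``spare at least one $t$-witness'' clause is free and the problem collapses to hitting all remaining witnesses with as few $\en{A}/\en{B}/\en{C}$ tuples as possible. Reducing from minimum $3$-hitting set (equivalently, vertex cover in $3$-uniform hypergraphs) then yields an instance whose optimum equals $\min_\Gamma|\Gamma|$; the one structural wrinkle is that the witnesses of $h_1^*$ are forced to be $3$-partite (first coordinate in $\en{A}$, etc.), which I would accommodate with a copy-and-permutation gadget that triples costs uniformly. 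Finally, the case where $W$ is endogenous is no easier: any $W$-tuple placed in $\Gamma$ can be swapped for one of the three unary tuples of its (unique) witness without increasing $|\Gamma|$, so the optimum is unchanged and hardness holds for both settings of $W$.

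For $h_2^*$ the witnesses are the directed triangles $X_{\en{R}(x,y)}X_{\en{S}(y,z)}X_{\en{T}(z,x)}$, and the essential new difficulty is that a single binary tuple participates in many witnesses, so the hitting-set instance is no longer a disjoint union of independent constraints. Here I would reduce from Vertex Cover, introducing a vertex gadget (a deletable tuple whose removal encodes placing a vertex in the cover) and an edge gadget (a triangle wired through the two endpoint tuples so that it is destroyed exactly when an endpoint is deleted), and then argue that $\min_\Gamma|\Gamma|$ equals the vertex-cover number. For $h_3^*$, which superimposes the unary selectors $\en{A},\en{B},\en{C}$ onto the triangle, the selectors supply explicit deletion handles, so a variant of the $h_1^*$ reduction applies once one makes the triangle atoms exogenous and accounts for the triangles that the join itself can create; here too the swap argument shows that making $R,S,T$ endogenous does not lower the optimum.

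The step I expect to be the main obstacle is proving, in both directions, that the minimum contingency equals the optimum of the source instance. Concretely, the reduction must rule out \emph{shortcut} contingencies: because tuples are shared across many witnesses (trivially so for the binary relations of $h_2^*$ and $h_3^*$), one must verify that no set of deletions smaller than the intended one can destroy all off-target witnesses, and that the cyclic join $R(x,y),S(y,z),T(z,x)$ does not manufacture spurious triangles that the adversary could exploit. Controlling this sharing---the very feature that makes these queries hard---is what forces the careful gadget design, and is where the bulk of the technical work lies.
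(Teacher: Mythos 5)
Your plan for $h_1^*$ essentially coincides with the paper's proof: the paper also reduces minimum vertex cover in $3$-partite $3$-uniform hypergraphs to the responsibility of a distinguished tuple $r_0$ given a private witness $(x_0,y_0,z_0)$ disjoint from everything else, and it uses exactly your swap argument (exchange any $W$-tuple in a contingency for a unary tuple of its witness) to cover both statuses of $W$. The only divergence there is that the paper simply cites NP-hardness of vertex cover for $3$-partite $3$-uniform hypergraphs, whereas you propose to re-derive it from general $3$-hitting set via a copy-and-permutation gadget; that gadget would itself need verification, but this part of your proposal is sound in outline.

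The genuine gap is in $h_2^*$, and it propagates into your $h_3^*$ plan. You correctly name the central obstacle---shortcut contingencies and spurious triangles manufactured by the cyclic join---but you leave the gadget that is supposed to control it completely unspecified, and that gadget is the entire content of the proof. A ``deletable tuple per vertex, one triangle per edge'' encoding of Vertex Cover does not survive scrutiny as stated: in an instance of $\en{R}(x,y),\en{S}(y,z),\en{T}(z,x)$ a single binary tuple lies in many witnesses, intended triangles that share values close unintended cycles, and nothing forces a small deletion set to decompose into per-vertex choices, so the claimed equality between $\min_\Gamma|\Gamma|$ and the cover number has no proof route in your sketch. The paper does not reduce from Vertex Cover here at all; it reduces from 3SAT via a global $3$-colored graph assembled from per-variable \emph{local rings} of odd length $m_i\ge 9|C_{X_i}|$, with forward and backward edges arranged so that (i) minimum contingencies may be assumed to consist of forward edges only, (ii) each ring has \emph{exactly two} minimum contingencies of size $m_i$, encoding $X_i=\sqll{true}$ and $X_i=\sqll{false}$, and (iii) each clause identifies three nodes across three rings to create one extra cycle that is hit iff the induced assignment satisfies the clause, with buffer segments between clause blocks ruling out precisely the spurious cycles you worry about. (Tellingly, an earlier version of this proof along hypergraph-cover lines contained a bug, as the paper's acknowledgements indicate.) For $h_3^*$ your route inherits the same defect: a join $R(x,y),S(y,z),T(z,x)$ over exogenous binary relations cannot encode an arbitrary hyperedge set $W$, because the join closes triangles you never intended. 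The paper instead gives a short reduction from the now-hard $h_2^*$: relabel the domain by tuple identifiers ($A'$ holds the $R$-tuples of the $h_2^*$ instance, and so on), insert one $R',S',T'$ triangle per witness of $h_2^*$, and observe that triangles in the new instance biject with witnesses of the old while $R',S',T'$ are dominated by $\en{A'},\en{B'},\en{C'}$, so minimum contingencies and responsibilities coincide regardless of the endogenous or exogenous status of the binary atoms. Until you supply a construction of comparable strength for $h_2^*$, your proposal establishes hardness only for $h_1^*$.
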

\looseness -1
We give the proof in the Appendix: we prove the hardness of $h_1^*$ and
$h_2^*$ directly, and that of $h_3^*$ by using a particular reduction from
$h_2^*$.  Chockler and Halpern~\cite{DBLP:journals/jair/ChocklerH04}
have already shown that computing responsibility for Boolean circuits
is hard, in general.  One may interpret our theorem as strengthening
that result somewhat by providing three specific queries whose
responsibility is hard.  However, the theorem is much more significant.  
We show in this section that every query that is
hard can be proven to be hard by a simple reduction from one of these
three queries.

\looseness -1
Next, we illustrate PTIME queries, and start with a trivial example $q
\datarule R(a,y)$ where $a$ is a constant. If $t = R(a,b)$, then its
minimum contingency is simply the set of all tuples $R(a,c)$ with $c
\neq b$, and one can compute $t$'s responsibility by simply counting
these tuples.  Thus, $q$ is in PTIME.  We next give a much more subtle example.

\vspace{-1mm}
\begin{example}[Ptime Query]\label{ex:ptimeQuery}~\!\!\!\!
    \looseness -1
	  Let $q \datarule R(x,y),{S}(y,z)$, let both $R$ and $S$ be endogenous, and w.l.o.g.\ let $t$ be a tuple in
	  ${R}$.  We show how to compute the size of the minimal
	  contingency set $\Gamma$ for $t$ with a reduction to the
	  max-flow/min-cut problem in a network.  Given the database instance
	  $D$, construct the network illustrated in
	  \autoref{fig:figs_flowTransform}.  Its vertices are partitioned into
	  $V_1 \cup \ldots \cup V_5$.  $V_1$ contains the {\em source}, which
	  is connected to all nodes in $V_2$. There is one edge $(x,y)$ from
	  $V_2$ to $V_3$ for every tuple $(x,y) \in {R}$, and one edge
	  $(y,z)$ from $V_3$ to $V_4$ for every tuple $(y,z) \in {S}$.
	  Finally, every node in $V_4$ is connected to the {\em target}, in
	  $V_5$.  Set the capacity of all edges from the source or into the
	  target to $\infty$. The other capacities will be described shortly.
	  Recall that a cut in a network is a set of edges $\Gamma$ that
	  disconnect the source from the target. A \emph{min-cut} is a cut of minimum
	  capacity, and the capacity of a min-cut can be computed in PTIME
	  using Ford-Fulkerson's algorithm.  Now we make an important
	  observation: any mincut $\Gamma$ in the network corresponds to a set
	  of tuples\footnote{In other words, the mincut cannot include the
	    extra edges connected to the source or the target as they
	    have infinite capacity.}  in the database $D = {R} \cup
	  {S}$, such that $q$ is false on $D - \Gamma$.  We use this fact
	  to compute the responsibility of $t$ as follows:  First, set the
	  capacity of $t$ to 0, and that of all other tuples in ${R},
	  {S}$ to 1.  Then, repeat the following procedure for every path
	  $p$ from the source to the target that goes through $t$: set the
	  capacities of all edges\footnote{In our example, $p-\set{t}$
	    contains a single other edge (namely a tuple in ${S}$).  
	For longer queries, it may contain additional edges.
	    For the query
	    ${R}(x,y),{S}(y,z),{T}(z,u)$, for example, $p-\set{t}$ 
	    always contains two edges.  Hence we refer to edges in $p -
	    \set{t}$ in the plural.}  in $p-\set{t}$ to $\infty$, compute
	  the size of the mincut, and reset their capacities back to 1.
	  In \autoref{fig:figs_flowTransform} there are two such paths $p$:
	  the first is $x_1,y_2,z_1$ (the figure shows the capacities set for
	  this path), the other path is $x_1,y_2,z_2$.  We claim that for
	  every mincut $\Gamma$, the set $\Gamma-\set{t}$ is a contingency set
	  for $t$.  Indeed, $q$ is false on $D-\Gamma$ because the source is
	  disconnected from the target,  and $q$ is true on $(D-\Gamma) \cup
	  \set{t}$, because once we add $t$ back, it will join with the other
	  edges in $p-\set{t}$. Note that $\Gamma$ cannot include these edges as
	  their capacity is $\infty$.  Thus, by repeating for all paths $p$
	  (which are at most $|{S}|$), we can compute the
	  size of the minimal contingency set as $\min|\Gamma| - 1$.

	\begin{figure}[tb]
		\centering
			\includegraphics[scale=0.5]{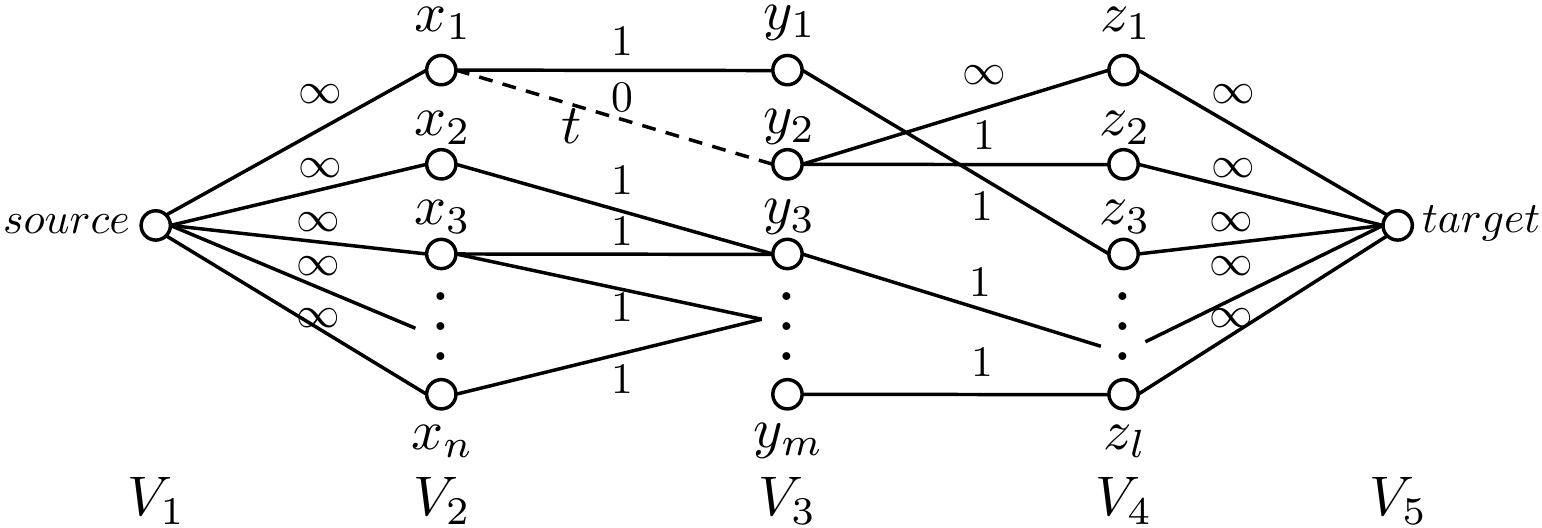}
			\vspace{-0.1in}
		\caption{Flow transformation for $q \datarule R(x,y),S(y,z)$.}
		\label{fig:figs_flowTransform}
		\vspace{-0.1in}
	\end{figure}
\end{example}
\vspace{-1mm}

We next generalize the algorithm in \autoref{ex:ptimeQuery} to the large class of
{\em linear queries}. 
We need two definitions first.

\begin{definition}[Dual Query Hypergraph $\mathcal{H}^{D}$]\label{def:dualHypergraph}~\!\!\!\!\!
  The \emph{dual query hypergraph}
  $\mathcal{H}^D(V,\mathcal{E})$ of a query $q \datarule g_1,\ldots, g_m$ is a
  hypergraph with vertex set $V=\{g_1,\ldots,g_m\}$ and a hyperedge
  $E_i$ for each variable $x_i\in \Var(q)$ such that $E_i=\{g_j\,|\,
  x_i\in\Var(g_j)\}$.
\end{definition}

\looseness -1
Note that nodes are the atoms, and edges are the variables.
This is the ``dual'' of the standard query
hypergraph~\cite{gottlob2001}, where nodes are variables and edges are
atoms.

\begin{figure}[tb]
   \centering
   
       \subfloat[$q$]
		{\includegraphics[scale=0.42]{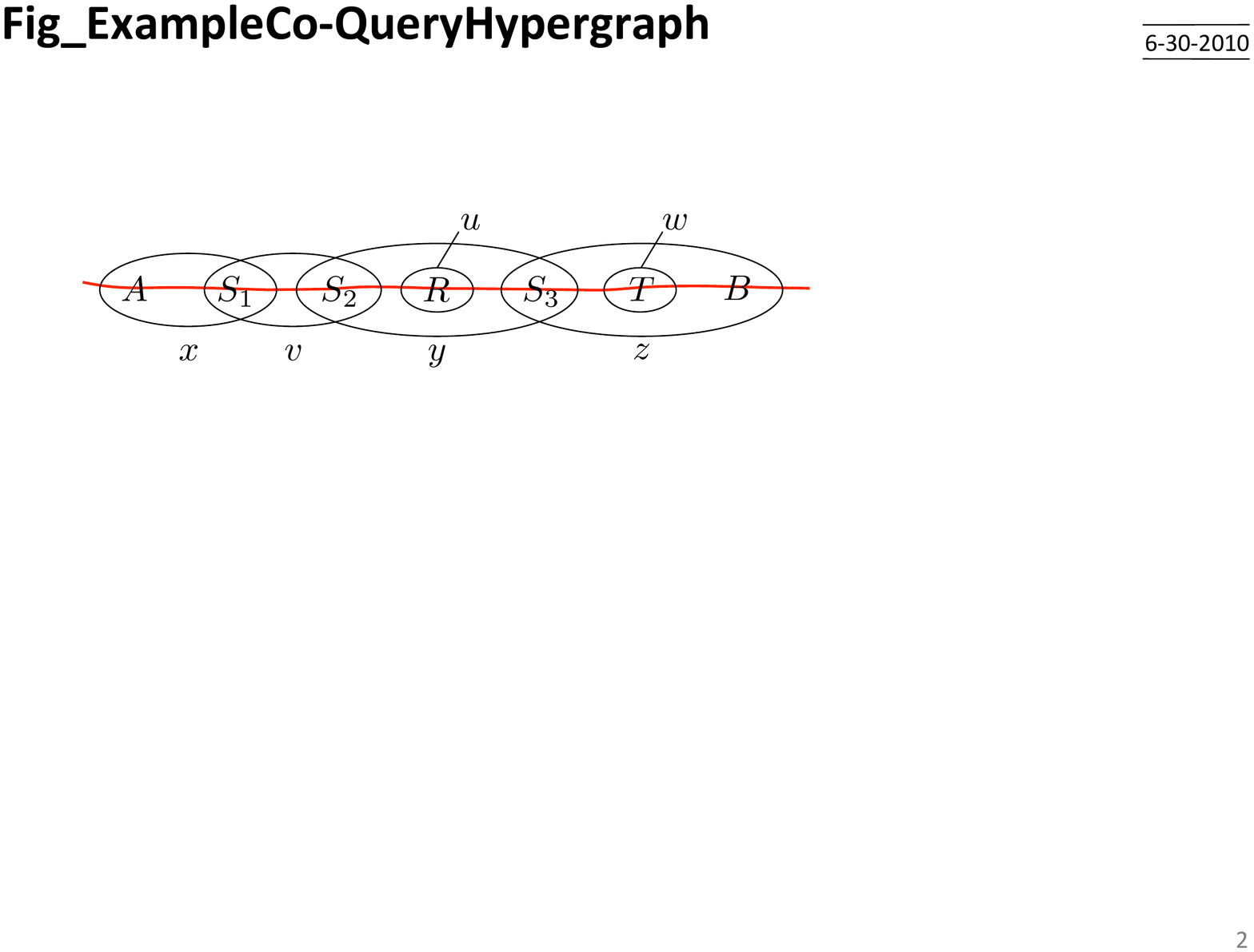}
		\label{fig:linearHypergraph}}
		\hspace{2mm}
		\subfloat[$h_1^*$]
           	{\includegraphics[scale=0.36]{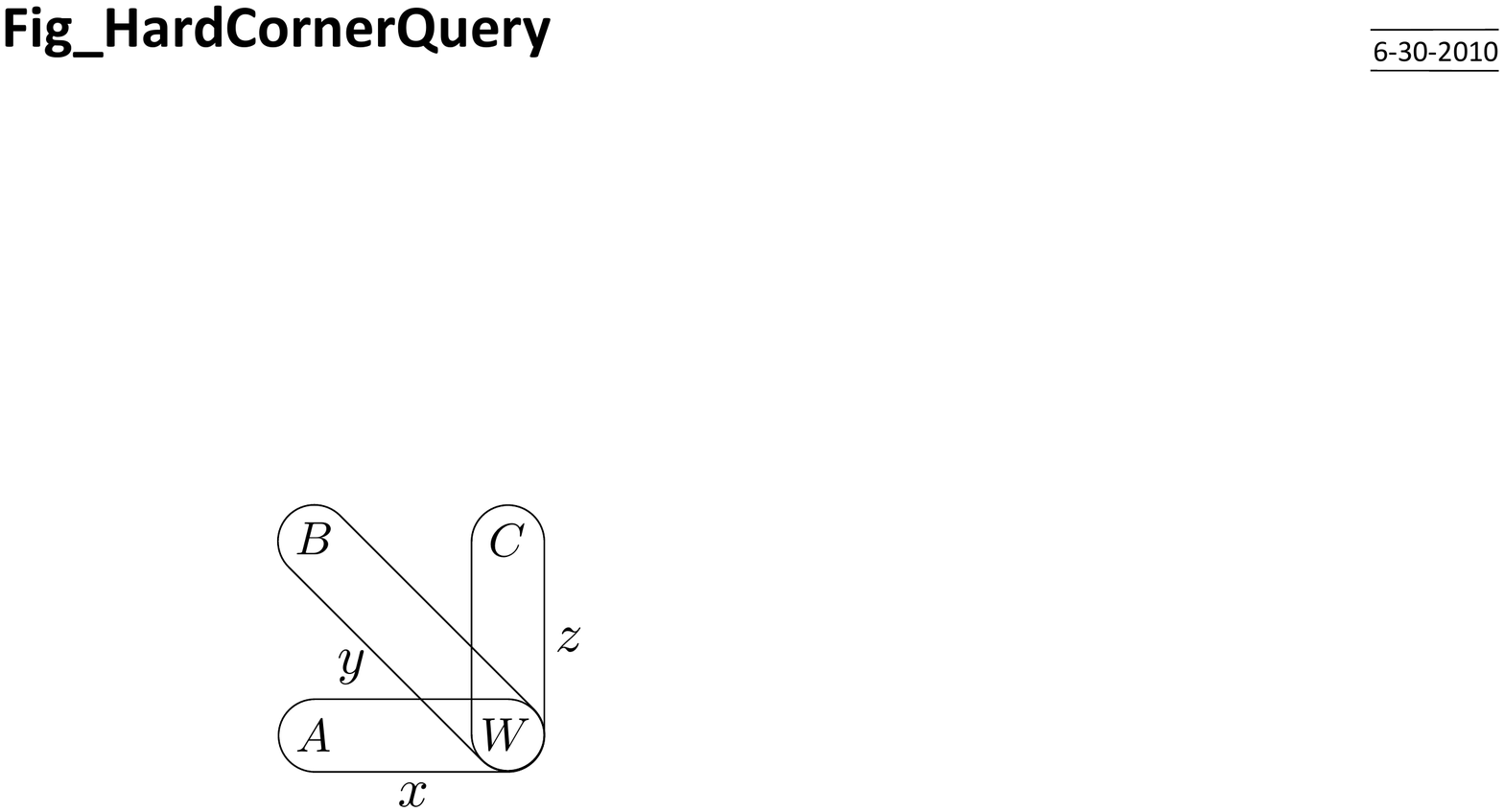}
    		\label{fig:hardQuery}}
    		\vspace{-0.1in}
              \caption{Dual query hypergraphs for easy query {\small
            ${q \datarule A(x), S_1(x,v), S_2(v,y), R(y,u), S_3(y,z),
              T(z,w), B(z)}$}, and hard query {\small
                    ${h_1^* \datarule A(x), B(y), C(z), W(x,y,z)}$}}
           \label{fig:example_hypergraphs}
           \vspace{-0.1in}
\end{figure}

\begin{definition}[Linear Query]\label{def:linearHypergraph}~\!\!\!\!\!
  A hypergraph $H(V,\mathcal{E})$ is linear if there exists
  a total order $S_V$ of $V$, such that every hyperedge
  $e\in\mathcal{E}$ is a consecutive subsequence of $S_V$.
	A  query is \emph{linear} if its dual hypergraph is linear.
\end{definition}
\looseness -1
In other words a query is linear if its atoms can be ordered such that
every variable appears in a continuous sequence of atoms.  For
example, the query $q$ in \autoref{fig:linearHypergraph} is
linear. Order the atoms as $A, S_1,$ $S_2,R,S_3,T,B$, and \mbox{every} variable
appears in a continuous sequence, e.g.\  $y$ occurs in $S_2,R,S_3$.
On the other hand, none of the queries in
\autoref{thm:hardQueries} is linear.  For example, the dual hypergraph
of $h_1^*$ is shown in \autoref{fig:hardQuery}: one cannot ``draw a
line'' through the vertices and stay inside hyperedges.  Note that
the definition of linearity ignores the endogenous/exogenous status of
the atoms.

\looseness -1
For every linear query, the responsibility of a tuple can be computed in PTIME
using \specificref{Algorithm}{alg:flowTransform}.  The algorithm
essentially extends the construction given \autoref{ex:ptimeQuery} to
arbitrary linear queries.  Note that it treats endogenous relations
differently than exogenous by assigning to them weight $\infty$.  Thus, we have:

\begin{theorem}[Linear Queries]\label{thm:linearPTIME}
  For any linear query $q$ and any endogenous tuple $t$, the
  responsibility of $t$ for $q$ can be computed in PTIME in the size
  of the database $D$.
\end{theorem}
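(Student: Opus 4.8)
The plan is to prove the theorem by generalizing the max-flow/min-cut reduction of \autoref{ex:ptimeQuery} to an arbitrary linear query, exactly as \specificref{Algorithm}{alg:flowTransform} does, and to argue that each ingredient of that reduction remains correct and polynomial. Fix the total order $g_1,\ldots,g_m$ of the atoms guaranteed by \autoref{def:linearHypergraph}, so that every variable's occurrences form a consecutive block of atoms. First I would build a layered flow network with a source $s$, a target, and one intermediate layer for each ``boundary'' between consecutive atoms $g_{i}$ and $g_{i+1}$. A node in the boundary layer is indexed by a partial valuation of exactly those variables whose (consecutive) span crosses that boundary; because the query is linear, this set of ``crossing'' variables is well defined. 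Each atom $g_i$ then contributes, for every tuple $\bar{t}\in D$ matching $g_i$, one directed edge from the node of $g_i$'s left-boundary variables to the node of its right-boundary variables, with $\bar t$ determining both endpoints. For $m=2$ this collapses to the network of \autoref{fig:figs_flowTransform}.

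The crux of the proof is a faithful correspondence between the combinatorics of $q$ and the network, which I would state and prove as a lemma. I expect to show that source-to-target paths are in bijection with the valuations $\theta\colon q\to D$: a path threads one tuple per atom, and adjacent edges share a boundary node precisely when the tuples agree on the shared variables, so a path is exactly a consistent witness that $D\models q$. Consequently a set of edges disconnects $s$ from the target if and only if the corresponding set of tuples $\Gamma$ satisfies $D-\Gamma\not\models q$, i.e.\ $s$-target cuts are exactly the tuple sets that falsify $q$. Linearity is what makes this work: it is the condition that lets each atom sit between two well-defined boundary layers, so that the witness structure is genuinely path-shaped rather than the ``corner'' shape of the hard queries in \autoref{thm:hardQueries} (cf.\ \autoref{fig:hardQuery}). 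I would then encode the endogenous/exogenous distinction through capacities, assigning capacity $\infty$ to edges of exogenous tuples (they can never belong to a contingency $\Gamma\subseteq\en D$) and capacity $1$ to edges of endogenous tuples, so that any finite cut consists solely of removable tuples.

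It remains to extract the responsibility of the fixed endogenous tuple $t$. Recall $\rho_t=1/(1+\min_\Gamma|\Gamma|)$, so it suffices to compute $\min_\Gamma|\Gamma|$, the smallest $\Gamma\subseteq\en D$ making $t$ counterfactual in $D-\Gamma$. The two counterfactual conditions say that $\Gamma\cup\{t\}$ is a cut while $\Gamma$ is not; hence, writing $K$ for a minimum-cardinality cut that is \emph{forced to contain} the edge of $t$, one gets $\min_\Gamma|\Gamma|=|K|-1$, since minimality of $K$ guarantees that $K-\{t\}$ leaves a surviving $s$-target path (a surviving witness through $t$). Such a constrained min-cut is computed in PTIME by a single Ford--Fulkerson run after forcing the tail of $t$'s edge to the source side and its head to the target side (equivalently, by protecting one path through $t$ and minimizing, as in the example); one then reads off $\rho_t$. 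The main obstacle I anticipate is making the cut-versus-contingency correspondence airtight for general $m$---in particular ensuring that a minimum cut leaves at least one surviving witness through $t$ rather than incidentally destroying all of them---and verifying that guaranteeing this surviving witness costs only polynomially many flow computations despite the potentially exponential number of witnesses through $t$; the constrained-min-cut reformulation above is the route I would take to settle both points, after which PTIME follows immediately from the polynomiality of Ford--Fulkerson.
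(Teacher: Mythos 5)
Your network construction, the path--valuation correspondence, and the capacity scheme (capacity $1$ for endogenous tuples, $\infty$ for exogenous ones) are exactly the ingredients of the paper's \specificref{Algorithm}{alg:flowTransform}, and that part of your argument is sound. The gap is in the final extraction step. You reduce $\min_\Gamma|\Gamma|$ to a \emph{single} constrained min-cut $K$, forcing the tail of $t$'s edge to the source side and its head to the target side, and you claim that minimality of $K$ guarantees that $K-\{t\}$ leaves a surviving witness through $t$. It does not: minimality only guarantees a surviving path \emph{in the modified graph}, and that path may use the artificial infinite edges $s\rightarrow\mathrm{tail}(t)$ and $\mathrm{head}(t)\rightarrow\mathrm{target}$ together with the edge $t$ itself. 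The constraint enforces only \emph{local} separation at $t$'s endpoints; it does not force a genuine $s\leadsto\mathrm{tail}(t)$ prefix and $\mathrm{head}(t)\leadsto\mathrm{target}$ suffix to survive in the original graph, so the constrained min-cut can sever all witnesses through $t$ upstream of $t$ and be strictly cheaper than $1+\min_\Gamma|\Gamma|$. Concretely, take $q\datarule \en{R}(x,y),\en{S}(y,z),\en{T}(z,w),\en{U}(w,v)$ with tuples $R(a,b)$, $S(b,c)$, $T(c,d_0)$, $t=U(d_0,f_0)$, and $T(c,d_j),U(d_j,f_j)$ for $j=1,\ldots,k$. The $k$ witnesses avoiding $t$ can only be killed one tuple each (cutting the shared $R(a,b)$ or $S(b,c)$ would also kill the unique witness through $t$, invalidating the contingency), so the true minimum contingency has size $k$ and $\rho_t=1/(k+1)$. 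But $\{t,S(b,c)\}$ is a valid and minimum cut of your constrained problem---after its removal, $s$ reaches only $a,b$, and $d_0$ has no outgoing edge---so your recipe returns $\min_\Gamma|\Gamma|=1$, i.e.\ $\rho_t=1/2$, wrong for every $k\geq 2$. This is precisely the failure mode you flagged yourself, and the constrained-min-cut reformulation does not settle it.

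The paper avoids this by protecting an \emph{entire witness} rather than just $t$'s endpoints: it iterates over all source--target paths $p$ through $t$, sets the capacities of the edges in $p-\{t\}$ to $\infty$ and that of $t$ to $0$, computes a min-cut for each, and takes the minimum over all $p$. In the counterexample above, protecting the path $R(a,b),S(b,c),T(c,d_0),t$ forces the cut to pay $k$, as it should. Your parenthetical alternative (``protecting one path through $t$'') is indeed the correct route, but it is not equivalent to your single-run version and requires the enumeration over paths. Your worry that there may be exponentially many witnesses through $t$ is misplaced under data complexity: source--target paths correspond to valuations, of which there are at most $|\Adom(D)|^{|\Var(q)|}$, polynomial in $|D|$ for a fixed query, so the per-path loop runs polynomially many Ford--Fulkerson computations. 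With the single-cut shortcut replaced by this enumerate-and-protect loop, the rest of your argument goes through.
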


\begin{algorithm2e}[t]
    {\small
\caption{Calculating responsibility for linear queries}\label{alg:flowTransform}
\SetKwInput{Function}{Function}
\SetKwFunction{flowGraph}{flowGraph}
\SetKwFunction{capacity}{capacity}
\SetKwFunction{maxFlow}{maxFlow}
\SetKwFor{ForAll}{forall}{do}{endfch}	
\SetFuncSty{sqll}

\Indm
\KwIn{$q \datarule g_1,\ldots,g_m$, $D$ and $t$, \KwOut{$\rho_t$}}	
\BlankLine
\Indp

$G=\flowGraph(\textit{dualHypergraph}(q),D)$ \;

\ForAll{source-target paths $p=\{e_1,\ldots,e_m\}\in G$, $t\in p$}
    {$\capacity(e_i) \leftarrow \infty$, $\capacity(t) \leftarrow  0$ \;
    $\Gamma_j \leftarrow  \maxFlow(G)$ \;}
\Return $\rho_t=\frac{1}{1+(min_j|\Gamma_j|-1)}$ \;

\BlankLine
\Indm
\Function{\FuncSty{\flowGraph}$(\mathcal{H},D)$}
\Indp
$L=\{g_1(\bar x_1),\ldots,g_m(\bar x_m)\}$ linearization of $\mathcal{H}$ \;
$V=\{\{\sqll{source}\},V_1',V_1,V_2',V_2\ldots,V_m',V_m,\{\sqll{target}\}\}$ \;
\ForAll{$g_i$}
{
    $V_i' \leftarrow \{t_j\,|\, t_j\in q(\bar x_i') \datarule g_{i-1},g_i\}$, $\bar x_i' \leftarrow \bar x_{i-1}\cap\bar x_i$ \;
    $\forall u\in V_{i-1}$, $v\in V_i'$, add edge $e(u,v)$ if $u(\bar x_i')=v(\bar x_i')$ \;
    $\capacity(e) \leftarrow \infty$ \;
    \ForAll{$t_j\in g_i$}
    {
        $V_i \leftarrow V_i'\cup\{t_j\}$ \;
        $\forall u\in V_i'$, $v\in V_i$, add edge $e(u,v)$ if $u(\bar x_{i}')=v(\bar x_i')$ \;
        \lIf{$t_j\in \en{D}$} 
        {$\capacity(e)\!\leftarrow\! 1$} 
        \lElse $\capacity(e)\!\leftarrow\!\infty$ \;
    }
}
$\forall v\in V_m$, $\capacity(e(v,\sqll{target})) \leftarrow \infty$ \;
$\forall v\in V_1'$, $\capacity(e(\sqll{source},v)) \leftarrow \infty$ \;
}
\end{algorithm2e}

\looseness -1
So far, \autoref{thm:hardQueries} has described some hard queries, and
\autoref{thm:linearPTIME} some PTIME queries.  Neither class is
complete, hence we do not yet have a dichotomy yet.  To close the gap
we need to work on both ends. We start by expanding the class of
hard queries.

\begin{definition}[rewriting $\rewrite$]\label{def:rewriting}~\!\!
    We define the following 
    rewriting relation on conjunctive queries without self-joins: $q$
  rewrites to $q'$, in notation $q \rewrite q'$, if $q'$ can be
  obtained from $q$ by applying one of the following three rules:
  \begin{itemize}[leftmargin=2.2\parindent, itemsep=1pt, parsep=1pt, topsep = 1pt]
  
    \item \textsc{Delete $x$ ($q \rewrite q[\emptyset/x]$):}
    Here, $q[\emptyset/x]$ denotes the query obtained by removing the
    variable $x \in Var(q)$, and thus decreasing the arity of all atoms that
    contained $x$.

  \item \textsc{Add $y$ ($q \rewrite q[(x,y)/x]$):} Here, $q[(x,y)/x]$ denotes
    the query obtained by adding variable $y$ to all atoms that
    contain variable $x$, and thus increasing their arity, provided there exists an
    atom in $q$ that contains both variables $x,y$.

  \item \textsc{Delete $g$ ($q \rewrite q - \set{g}$):} 
    Here, $g$ denotes an atom and $q - \set{g}$ denotes the query $q$
    without the atom $g$, provided that $g$ is exogenous, or there exists some other atom $g_0$ s.t. $\Var(g_0)\subseteq \Var(g)$. 
  \end{itemize}
\end{definition}

\looseness -1
Denote $\!\!\rewriteclosed\!\!$
the transitive and reflexive closure of
$\rewrite$.  We show that rewriting always reduces complexity:

\begin{lemma}[Rewriting]\label{lemma:hard}
  If $q \rewrite q'$ and $q'$ is NP-hard, then $q$ is
  also NP-hard.  In particular,  $q$ is NP-hard if $q \rewriteclosed h_i$,
  where $h_i$ is one of the three queries in
  \autoref{thm:hardQueries}.
\end{lemma}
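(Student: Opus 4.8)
The plan is to prove the single-step statement---if $q \rewrite q'$ then $q'$ NP-hard implies $q$ NP-hard---by exhibiting, for each of the three rewrite rules, a polynomial-time reduction from the responsibility problem for $q'$ to the responsibility problem for $q$ that preserves the responsibility value exactly, so the threshold $v$ carries over unchanged. The ``in particular'' clause then follows by backward induction along a rewriting sequence $q = q_0 \rewrite q_1 \rewrite \cdots \rewrite q_n = h_i$: since $h_i$ is NP-hard by \autoref{thm:hardQueries} and each single step reduces the successor's problem to the predecessor's, composing the reductions (a constant number of polynomial-time maps, as $q$ is fixed) reduces the hard problem for $h_i$ to that for $q$.

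For the two rules that only reshape variables I would build reductions that are exact isomorphisms of lineage, so responsibility is trivially preserved. For \textsc{Delete $x$} ($q' = q[\emptyset/x]$), given a database $D'$ for $q'$ I pad every relation whose atom lost $x$ with a fresh position filled by a single constant $\star$ in the $x$-slot of each tuple. Since all tuples agree on $x=\star$, the join on $x$ is vacuous, the valuations of $q$ on $D$ biject with those of $q'$ on $D'$, the $\enSymb$-lineages coincide up to renaming tuple variables, and hence $\rho^q_t(D) = \rho^{q'}_{t'}(D')$. For \textsc{Add $y$} ($q' = q[(x,y)/x]$), I exploit the proviso that some atom $g_0$ of $q$ already contains both $x$ and $y$: in the $x$-slot of every atom of $q$ carrying $x$ I store the \emph{pair} $(a,b)$ built from the $x$-value $a$ and $y$-value $b$ of the corresponding $q'$-tuple (the domain grows only by a polynomial factor), and I let $g_0$ additionally hold $y=b$ in its own $y$-slot. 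Agreement on the merged $x$-slot now forces agreement on both the original $x$ and $y$, exactly reproducing the tighter joins of $q'$, while $g_0$ serves to decode $y$; valuations again correspond bijectively and responsibility is preserved.

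The substantive case is \textsc{Delete $g$} ($q' = q - \set{g}$), where I must reinsert the relation for $g$ so that its atom becomes ``transparent'' to responsibility. If $g$ is exogenous, I fill its relation completely over the active domain: the $g$-atom is then satisfiable for every valuation of the remaining atoms, its (exogenous) tuples never belong to a contingency $\Gamma \subseteq \en{D}$, and distinct assignments to variables occurring only in $g$ collapse to identical $\enSymb$-conjuncts; hence $\en{\Phi}_q \equiv \en{\Phi}_{q'}$ and responsibility is unchanged. If instead $g$ is endogenous but there is an atom $g_0$ with $\Var(g_0) \subseteq \Var(g)$, I populate $g$ as a clone of $g_0$: for each $g_0$-tuple I include a matching $g$-tuple, so that every valuation using a $g$-tuple $\gamma$ necessarily uses the $g_0$-tuple $\gamma_0 = \pi_{\Var(g_0)}(\gamma)$. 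The key claim is that $X_\gamma$ is then \emph{dominated} by $X_{\gamma_0}$: every lineage conjunct containing $X_\gamma$ also contains $X_{\gamma_0}$. From this I would argue that some minimal contingency for $t$ avoids $g$-tuples altogether, by exchanging any $g$-tuple in a contingency for its $g_0$-image without increasing the size and without violating either contingency condition of \autoref{thm:causalityEq} (satisfiability of $\en{\Phi}[\Gamma=\sqll{false}]$ and unsatisfiability after additionally setting $X_t=\sqll{false}$). Combined with the obvious lift of any $q'$-contingency to a $q$-contingency of the same size, this yields $\rho^q_t(D) = \rho^{q'}_{t'}(D')$.

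The main obstacle is exactly this last exchange argument. Merely deleting a $g$-tuple from a contingency can resurrect a satisfying valuation and break unsatisfiability, whereas replacing it by the strictly more powerful $g_0$-tuple can destroy the witness that keeps $\en{\Phi}[\Gamma=\sqll{false}]$ satisfiable; reconciling the two directions forces me to clone $g_0$ carefully, so that each $g_0$-tuple carries a unique, non-interfering $g$-counterpart and the two tuple variables occur in \emph{exactly} the same conjuncts, and then to verify that a minimal cut never profits from cutting the $g$-side. I expect the bulk of the work to be this bookkeeping; the other two reductions and the inductive composition for $\rewriteclosed$ are routine once the reductions are in place.
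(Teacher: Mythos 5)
Your overall architecture---one exact, responsibility-preserving polynomial reduction from the problem for $q'$ to the problem for $q$ per rewrite rule, composed along the chain for the closure $q \rewriteclosed h_i$---is precisely the paper's, and your constructions for \textsc{Delete $x$} (pad the lost slot with a single constant, making the join on $x$ vacuous) and \textsc{Add $y$} (store the pair $(a,b)$ in the merged $x$-slot and decode $y$ through the atom that contains both variables) coincide with the paper's Cases~1 and~2. Your exogenous subcase of \textsc{Delete $g$} (fill $g$ with the full cross product over the active domain, then observe that exogenous variables are set to \sqll{true} and the extensions collapse in the $\enSymb$-lineage) is a harmless variant of what the paper does and is correct.

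The genuine gap is your endogenous subcase of \textsc{Delete $g$}. Populating $g$ as a ``clone of $g_0$,'' one $g$-tuple per $g_0$-tuple, is only well-defined---and only correct---when $\Var(g_0)=\Var(g)$. When $\Var(g_0)\subsetneq\Var(g)$, the extra variables of $g$ generally occur elsewhere in $q'$, and a single extension per $g_0$-tuple fails to reproduce the valuations of $q'$: for $q \datarule \en{A}(x),\en{R}(x,y),\en{S}(y)$ with $g=\en{R}$ dominated by $g_0=\en{A}$, the rewrite gives $q' \datarule \en{A}(x),\en{S}(y)$, and on $D'$ with $A=\set{a_1,a_2}$, $S=\set{b_1,b_2}$ your clone puts only two tuples into $R$ instead of the four needed, so the lineage of $q$ on $D$ loses conjuncts of $q'$ and responsibilities diverge. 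The paper instead populates the deleted atom with the projection of the \emph{output} (the satisfying valuations) of $q'$ onto $\Var(g)$, which makes the atom transparent in both subcases. This choice also resolves the exchange difficulty you flag, but differently from your plan: your requirement that $X_\gamma$ and $X_{\gamma_0}$ occur in \emph{exactly} the same conjuncts is achievable only when $\Var(g_0)=\Var(g)$ (the paper's case $\bar x=\bar y$, where a minimum contingency never contains both twins of a pair, so $g$-tuples can be mapped one-to-one onto their $g_0$-twins). When the containment is strict, $\gamma_0$ occurs in strictly more conjuncts than any of its extensions, the one-for-one swap can indeed destroy the witness conjunct needed for satisfiability---exactly the failure you anticipate---and the paper's argument is that in this case (as in the exogenous one) minimum contingencies need not contain $g$-tuples at all, rather than that swaps are always safe. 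Without the projection-of-output population and this case split on $\bar x \subset \bar y$ versus $\bar x = \bar y$, your reduction for the third rule does not preserve responsibility.
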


\vspace{-1mm}
\begin{example}[Rewriting]
  We illustrate how one can prove that the query $q \datarule
  R(x,y),S(y,z),T(z,u),K(u,x)$ is hard, by rewriting it to $h_2$:
  \begin{align*}
    q & \rewrite 
    R(x,y),S(y,z),T(x,z,u),K(u,x) 				&\mbox{\ \ (add $x$)} \\
    & \rewrite  R(x,y),S(y,z),T(x,z,u),K(u,x,z) &\mbox{\ \ (add $z$)} \\
    & \rewrite  R(x,y),S(y,z),T(x,z,u) 			&\mbox{\ \ (delete $K$)} \\
    & \rewrite  R(x,y),S(y,z),T(x,z) 			&\mbox{\ \ (delete $u$)} 
  \end{align*}
\end{example}
\vspace{-1mm}

\looseness -1
With rewriting we expanded the class of hard queries. Next we expand the class of PTIME queries. As notation, we say that two atoms $g_i,
g_j$ of a conjunctive query are {\em neighbors} if they share a
variable: $\texttt{Var}(g_i)\cap \texttt{Var}(g_j)\neq\emptyset$.

\begin{definition}[Weakening $\weakening$] \label{def:weakening}
  We define the following {\em weakening} relation on conjunctive
  queries without self-joins: $q$ weakens to $q'$, in notation $q
  \weakening q'$, if $q'$ can be obtained from $q$ by applying
  one of the following two rules:
  \begin{itemize}[leftmargin=2.2\parindent, itemsep=1pt, parsep=1pt, topsep = 1pt]
  	\item \textsc{Dissociation} If $\ex{g}$ is an exogenous atom and $v_i$ a
    variable occurring in some of its neighbors, then let $q'$ be
    obtained by adding $v_i$ to the variable set of $\ex{g}$ (this increases its arity).
  	\item \textsc{Domination} If $\en{g}$ is an endogenous atom and there exists
    some other endogenous atom $\en{g}_0$ s.t.\ $\Var(\en{g}_0)
    \subseteq \Var(\en{g})$, then let $q'$ be obtained by making
    $g$ exogenous, $\ex{g}$.
  \end{itemize}
\end{definition}

Intuitively, a minimum contingency never needs to contain tuples from a dominated relation, and thus the relation is effectively exogenous.
Along the lines of \autoref{lemma:hard}, we show the following for weakening:

\begin{lemma}[Weakening]\label{lemma:weakening}
    If $q\weakening q'$ and $q'$ is in PTIME, then $q$ is also in PTIME.
\end{lemma}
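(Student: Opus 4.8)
The plan is to handle the two rules of \autoref{def:weakening} separately, giving in each case a polynomial-time reduction from the responsibility problem for $q$ to that for $q'$; since $q'$ is in PTIME, $q$ will be as well. Throughout I work with the $\enSymb$-lineages $\en{\Phi}$ and $\en{\Phi'}$ of $q$ and $q'$, and use the hitting-set reformulation implicit in \autoref{thm:causalityEq}: a set $\Gamma$ of endogenous tuples (not containing $t$) is a contingency for $t$ iff $\Gamma$ hits every conjunct of $\en{\Phi}$ that does not contain $X_t$, while leaving at least one conjunct containing $X_t$ unhit. The responsibility is then $1/(1+|\Gamma_{\min}|)$ by \autoref{def:responsibility}, so it suffices to reduce the computation of $|\Gamma_{\min}|$.

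For \textsc{Dissociation}, where an exogenous atom $\ex{g}$ gains a variable $v$ (already occurring in a neighbor) to become $\ex{g'}$, I keep $D$ and build $D'$ by replacing the relation of $\ex{g}$ with the one containing $(\bar a, b)$ for every tuple $\bar a$ of $\ex{g}$ and every value $b \in \Adom(D)$; this relation is again entirely exogenous. Because $v$ is bound by a neighbor, every valuation of $q$ already assigns $v$, and the padding guarantees that a valuation $\theta$ satisfies $\ex{g'}$ on $D'$ exactly when it satisfied $\ex{g}$ on $D$. Hence $q$ and $q'$ have the same valuations, so $\en{\Phi}$ and $\en{\Phi'}$ are literally the same Boolean expression (the dissociated atom is exogenous and set to $\sqll{true}$ in both). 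The responsibility of every endogenous $t$ is therefore identical, and $D'$ is built in PTIME.

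For \textsc{Domination}, where an endogenous atom $\en{g}$ with a dominator $\en{g_0}$, $\Var(g_0)\subseteq\Var(g)$, is made exogenous, I keep the database unchanged and merely reinterpret the relation of $g$ as exogenous. The only effect on the lineage is that each $X_s$, $s$ a tuple of $g$, is set to $\sqll{true}$, so $\en{\Phi'}$ is $\en{\Phi}$ with every $g$-variable deleted. The structural fact I exploit is \emph{Boolean domination}: since $\Var(g_0)\subseteq\Var(g)$, the value a valuation assigns to $g_0$ is a projection of its value on $g$, so every conjunct of $\en{\Phi}$ containing $X_s$ also contains $X_{s_0}$, where $s_0$ is the projection of $s$. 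For a target $t\notin g$ one direction is immediate: any $q'$-contingency uses no $g$-tuples, hence consists of $q$-endogenous tuples, and deleting $g$-variables only shrinks conjuncts, so it is a $q$-contingency of the same size, giving $|\Gamma_{\min}^{q}|\le|\Gamma_{\min}^{q'}|$. For the converse I prove the stated intuition: there is a minimum contingency for $t$ in $q$ containing no tuple of the dominated relation $g$; such a $\Gamma$ hits the same projected conjuncts in $\en{\Phi'}$ and keeps the same witness unhit, so it is a $q'$-contingency of equal size, yielding equality of the two minima and of the responsibilities.

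The existence of such a $g$-free minimum contingency is the main obstacle. The natural attempt is an exchange argument: among all minimum contingencies pick one, $\Gamma$, with the fewest $g$-tuples, and if it still contains a $g$-tuple $s$, push $s$ to its dominator $s_0$. By Boolean domination, $\Gamma_1=(\Gamma\setminus\{s\})\cup\{s_0\}$ still hits every non-$t$ conjunct and is no larger, and minimality forces $s_0\notin\Gamma$, so $\Gamma_1$ has strictly fewer $g$-tuples. The delicate point—and the crux of the whole lemma—is preserving a surviving witness: $\Gamma_1$ can fail only if \emph{every} $t$-conjunct left unhit by $\Gamma$ contains $s_0$, i.e.\ the dominator $s_0$ simultaneously supports a non-answer conjunct that forces $s$ into $\Gamma$ and all the surviving answer conjuncts. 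Ruling this out requires a global analysis of the conjuncts sharing $s_0$ rather than a single tuple swap; the leverage is again $\Var(g_0)\subseteq\Var(g)$, which ensures that projecting $g$-tuples onto $g_0$ never merges conjuncts that $\en{\Phi'}$ already separates, and lets one reroute the contingency off the dominated relation while retaining a witness. Finally, a target $t$ lying \emph{inside} the dominated relation $g$ is no longer endogenous in $q'$ and is handled directly: a short separate argument, also driven by Boolean domination, shows its minimum contingency is that of its dominator adjusted by a PTIME-countable local term. Assembling dissociation, domination for $t\notin g$, and this last case turns any PTIME procedure for $q'$ into one for $q$, proving \autoref{lemma:weakening}.
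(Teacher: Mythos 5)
Your \textsc{Dissociation} case is correct and is essentially the paper's own argument: pad the exogenous relation with all active-domain values for the new variable, observe that $q$ over $D$ and $q'$ over $D'$ have identical valuations, hence identical $\enSymb$-lineages and identical responsibilities. The \textsc{Domination} case, however, stops exactly where the proof is needed, and the step you defer cannot be completed as stated. You correctly prove one inequality ($q'$-contingencies for $t\notin g$ contain no $g$-tuples and remain $q$-contingencies, so $|\Gamma^{q}_{\min}|\le|\Gamma^{q'}_{\min}|$), and you correctly flag that the exchange $s\mapsto s_0$ can destroy every surviving witness; but the promised ``global analysis'' that reroutes the contingency off the dominated relation while retaining a witness is never given, and in fact no size-preserving rerouting exists. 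Take the paper's own domination example $q \datarule \en{R}(x,y),\en{S}(y,z),\en{T}(z,x),\en{V}(x)$ with $V=\{a\}$, $R=\{(a,b_1),(a,b_2)\}$, $S=\{(b_1,c_1),(b_2,c_2),(b_2,c_3)\}$, $T=\{(c_1,a),(c_2,a),(c_3,a)\}$, and $t=S(b_1,c_1)$. The lineage has three conjuncts, for the valuations $(a,b_1,c_1)$, $(a,b_2,c_2)$, $(a,b_2,c_3)$. In $q$, the set $\Gamma=\{R(a,b_2)\}$ is a contingency of size $1$ (so $\rho_t=1/2$), and it \emph{must} use the dominated relation: the only other single tuple hitting both non-witness conjuncts is the dominator $V(a)$, which also kills the witness---precisely the failure mode you identified, with $s_0=V(a)$ lying on every witness. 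After domination makes $R$ (and $T$) exogenous, the smallest contingency is $\{S(b_2,c_2),S(b_2,c_3)\}$ of size $2$ (so $\rho_t=1/3$). Hence the two minima differ, your planned equality of responsibilities is false, and no argument of the form ``minimum contingencies avoid dominated relations'' can close the gap. The separately deferred case $t\in g$ (where $t$ is no longer endogenous in $q'$) is likewise asserted, not proven.

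For context: the paper's own proof of \autoref{lemma:weakening} is a single paragraph asserting exactly the correspondence you tried to establish---``any contingency for $q$ is also a contingency for $\hat q$ and vice versa,'' resting on the intuition stated before the lemma that a minimum contingency never needs tuples from a dominated relation---with even less justification than your attempt. The example above challenges that assertion just as it does yours, so in one sense your proposal is more honest than the paper: you isolated the exact point where the argument is fragile, whereas the paper glides over it. But honesty is not a proof: what you submitted establishes only the dissociation case and one inequality for domination, while the direction actually needed for the lemma (transferring PTIME-ness from $q'$ to $q$) would require either a repaired combinatorial claim restricted to the situations where weakening is actually invoked, or an altogether different reduction (e.g., treating dominated endogenous tuples directly inside the min-cut construction of \specificref{Algorithm}{alg:flowTransform} with finite capacities, rather than eliminating them beforehand). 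As it stands, the proposal has a genuine gap at its self-identified crux, and the counterexample shows the gap is not merely expository.
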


Thus, weakening allows us to expand the class of PTIME queries.
We denote $\!\!\weakeningclosed\!\!$ the transitive and reflexive closure of
$\weakening$.  We say that a query $q$ is {\em weakly linear} if there
exists a weakening $q \weakeningclosed q'$ s.t. $q'$ is
linear. Obviously, every linear query is also weakly linear.

\begin{corollary}[Weakly Linear Queries] \label{lemma:weakly:linear}~\!\!\!\!
	If $q$ is weakly linear, then it is in PTIME.
\end{corollary}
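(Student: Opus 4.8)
The plan is to unwind the definition of weak linearity into a finite chain of single weakening steps and then apply \autoref{lemma:weakening} inductively along that chain, using \autoref{thm:linearPTIME} as the base case. By the definition of weakly linear, if $q$ is weakly linear then there is a query $q'$ that is linear and satisfies $q \weakeningclosed q'$. Since $\weakeningclosed$ is the reflexive and transitive closure of $\weakening$, this relation unfolds into a finite sequence of single weakening steps
\[
q = q_0 \weakening q_1 \weakening \cdots \weakening q_n = q',
\]
where each step applies either a Dissociation or a Domination rule from \autoref{def:weakening}.

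For the base case I would invoke \autoref{thm:linearPTIME}: because $q_n = q'$ is linear, the responsibility of any endogenous tuple for $q'$ can be computed in PTIME, so $q'$ is in PTIME. I would then induct backward along the chain. Suppose $q_{i+1}$ is in PTIME; since $q_i \weakening q_{i+1}$ is a single weakening step, \autoref{lemma:weakening} immediately yields that $q_i$ is also in PTIME. Descending from $i = n-1$ down to $i = 0$ shows that $q_0 = q$ is in PTIME, which is the claim. The reflexive case $n = 0$, in which $q$ is already linear, is subsumed by the same argument and matches the earlier observation that every linear query is trivially weakly linear.

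The only genuine content here is that the single-step Weakening Lemma composes, so there is no real obstacle beyond the bookkeeping of confirming that the reflexive-transitive closure $\weakeningclosed$ indeed produces a finite chain of applicable rules; this is immediate from its definition. In short, the corollary is a direct consequence of \autoref{thm:linearPTIME} together with \autoref{lemma:weakening}, and the backward induction along the weakening chain constitutes the entire argument.
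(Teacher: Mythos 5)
Your proof is correct and follows exactly the paper's intended argument: the paper likewise obtains the corollary by combining \autoref{thm:linearPTIME} (the linear base case, via \specificref{Algorithm}{alg:flowTransform}) with iterated application of the single-step \autoref{lemma:weakening} along the finite chain implicit in the reflexive-transitive closure $\weakeningclosed$. The backward induction you spell out is left implicit in the paper, but it is the same reasoning.
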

\looseness -1
\Autoref{lemma:weakening} is based on the simple observation that a weakening produces a query $q'$ over a database instance $D'$ that produces the same output tuples as query $q$ on database instance $D$. Weakening only affects exogenous and dominated atoms, which are not part of minimum contingencies, and therefore responsibility remains unaffected.
This also implies an algorithm for computing responsibility in the case of weakly linear queries: find a weakening of $q$ that is linear and apply \specificref{Algorithm}{alg:flowTransform}.

\vspace{-1mm}
\begin{example}\label{ex:hard2ptime}  We illustrate the lemma with two
  examples.  First, we show that $q \datarule
  \en{R}(x,y),\ex{S}(y,z),\en{T}(z,x)$ is in PTIME by weakening with a
  dissociation:
  \begin{align*}
    q & \weakening \en{R}(x,y),\ex{S}(x,y,z),\en{T}(z,x) 	&\mbox{\ \ (dissociation)}
  \end{align*}
  The latter is linear. Query $q$ should be contrasted with
  $h_2^*$ in \autoref{thm:hardQueries}: the only difference is that here
  $\ex{S}$ is exogenous, and this causes $q$ to be in PTIME while
  $h_2^*$ is NP-hard.
  Second, consider $q \datarule
  \en{R}(x,y),\en{S}(y,z),\en{T}(z,x),\en{V}(x)$.  Here we weaken with a
  domination followed by a dissociation:
  \begin{align*}
    q & \weakening  \ex{R}(x,y),\en{S}(y,z),\ex{T}(z,x),\en{V}(x) 		&\mbox{ (domination)}\\
      & \weakening  \ex{R}(x,y,z),\en{S}(y,z),\ex{T}(z,x,y),\en{V}(x)	&\mbox{ (dissociation)}
  \end{align*}
  The latter is linear with the linear order
  $\en{S},\ex{R},\ex{T},\en{V}$.
\end{example}
\vspace{-1mm}

\looseness -1
We say that a query $q$ is {\em final} if it is not weakly linear and
for every rewriting $q \rewrite q'$, the rewritten query $q'$ is
weakly linear.  For example, each of $h_1^*, h_2^*, h_3^*$ in
\autoref{thm:hardQueries} is final: one can check that if we try to
apply any rewriting to, say, $h_1^*$ we obtain a linear query.  We can
now state our main technical result:

\begin{theorem}[Final Queries]\label{thm:final} 
	If $q$ is final, then $q$ is one of $h_1^*,h_2^*,h_3^*$.
\end{theorem}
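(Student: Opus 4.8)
The plan is to convert the combinatorial condition ``$q$ is final'' into rigid structural constraints on the dual query hypergraph $\mathcal{H}^D$ (\autoref{def:dualHypergraph}) and then classify all hypergraphs meeting them. By \autoref{lemma:hard} and \autoref{lemma:weakening}, neither a rewriting nor a weakening ever increases complexity, so finality says exactly that $q$ is a \emph{minimal} non-weakly-linear query: it is itself not weakly linear, yet every one-step rewriting $q \rewrite q'$ lands in the weakly linear (hence PTIME) class. The goal is to show that this minimality forces $q$ to coincide with one of the two canonical obstruction patterns to linearity (\autoref{def:linearHypergraph}) --- an endogenous $3$-cycle and an endogenous-legged corner --- whose only minimal labelled incarnations are $h_1^*$, $h_2^*$, and $h_3^*$.

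First I would bound the number of variables. A query on at most two variables has a dual hypergraph with at most two hyperedges, which always admits a consecutive arrangement and is therefore linear, so it is not final. For $|\Var(q)|\ge 4$ I would exhibit a hardness-preserving rewriting, contradicting finality: since $q$ is not weakly linear, no weakening renders $\mathcal{H}^D$ linear, so a linearity obstruction persists under weakening; a \textsc{Delete}~$x$ move on a variable lying outside a minimal persisting obstruction, or an \textsc{Add}~$y$ move that contracts a longer cycle toward a triangle (as the earlier rewriting of the $4$-cycle $R,S,T,K$ to $h_2^*$ illustrates), keeps $q'$ non-weakly-linear. Hence a final query has exactly three variables.

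Next I would pin down the atoms. An atom is deletable by \textsc{Delete}~$g$ precisely when it is exogenous or dominated; by finality, deleting any such atom must yield a weakly linear query, which forbids redundant atoms and caps their number by a constant. With three variables fixed, I would then characterize non-weak-linearity directly: the weakenings \textsc{Dissociation} and \textsc{Domination} (\autoref{def:weakening}) only enlarge the hyperedge-membership of exogenous or dominatable atoms, so the surviving obstruction is carried by endogenous atoms that can be neither dominated nor dissociated away. Over three variables these reduce to two irreducible patterns --- three pairwise-intersecting hyperedges with empty common intersection (the triangle of $h_2^*$, possibly decorated by endogenous single-variable legs as in $h_3^*$) and three hyperedges sharing a common atom (the corner of $h_1^*$). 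Running the case analysis over which atoms occur and which are endogenous, and checking in each branch that the configuration is neither linearized by a further weakening nor strictly simplified by a rewriting unless it is already canonical, leaves exactly $h_1^*$, $h_2^*$, $h_3^*$.

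The main obstacle is this last case analysis, because finality is acutely sensitive to the endogenous/exogenous labelling: the preconditions of \textsc{Domination} and \textsc{Dissociation} depend on it, so two structurally identical hypergraphs may differ in whether they are final (contrast $h_2^*$ with the PTIME query $\en{R}(x,y),\ex{S}(y,z),\en{T}(z,x)$ of \autoref{ex:hard2ptime}). The delicate work is to enumerate the three-variable configurations together with all admissible labellings and verify, in each, that either some weakening linearizes it (so it is not final) or some hardness-preserving rewriting strictly reduces it (so it is not minimal); in particular one must prove that no fourth irreducible labelled obstruction survives once the full weakening closure $\weakeningclosed$ is accounted for. The preceding size-bounding steps are comparatively routine by contrast.
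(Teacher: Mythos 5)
Your high-level plan coincides with the paper's: first force a final query to have exactly three variables, then enumerate the (at most seven) atom types $A(x),B(y),C(z),R(x,y),S(y,z),T(x,z),W(x,y,z)$ with at most one atom per type, and run a labelled case analysis showing every configuration is either weakly linear or rewrites to one of $h_1^*,h_2^*,h_3^*$. However, there is a genuine gap precisely where you declare the work ``comparatively routine'': the reduction to three variables. Your argument for $|\Var(q)|\ge 4$ asserts that a \textsc{Delete}~$x$ move ``on a variable lying outside a minimal persisting obstruction'' keeps $q'$ non-weakly-linear, but this is unsubstantiated and is exactly the crux. Deleting a variable shrinks atom arities, which can create \emph{new} dominations (an atom that contained $x$ may now satisfy $\Var(\en{g}_0)\subseteq\Var(\en{g})$), which in turn license new dissociations; so a deletion that looks disjoint from the obstruction can nonetheless linearize the query through the weakening closure. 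You cannot certify non-weak-linearity of $q'$ by local inspection of a fixed obstruction in $\mathcal{H}^D$.

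The paper handles this with a dedicated Containment Lemma (\autoref{lem:containment}): if $q$ is final then $\sg(x)\not\subseteq\sg(y)$ for all variables $x,y$. Its proof is the most delicate step of the whole theorem --- it takes the two weakenings witnessing weak linearity of $q[\emptyset/x]$ and $q[\emptyset/y]$, passes to the connected components of the doubly-deleted query, shows each component sits contiguously (up to exogenous separators) in both linear orders, and splices the two orders into a single linearization of a weakening of $q$ itself, contradicting non-weak-linearity. With incomparability of variable signatures in hand, \autoref{lem:3vars} then analyzes the dual hypergraph (corner point versus cyclic when all atoms have arity $\le 2$; the arity-$\ge 3$ case uses the lemma to find atoms $A,B,C$ with incomparable signatures) and exhibits \emph{explicit chains} $q\rewriteclosed h_i$, concluding via \autoref{lemma:hard} that $q$ is not final --- rather than trying to show a single rewriting step preserves non-weak-linearity directly. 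Your proposal supplies neither this lemma nor a substitute for it, and your concluding assessment inverts the actual difficulty: the three-variable labelled case analysis (your ``main obstacle'') is mechanical in the paper, organized around supersets of $h_1^*$ and $h_2^*$, subsets of $h_3^*$, and mixed exogenous labellings, while the variable-count bound is where the real argument lives.
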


This is by far the hardest technical result in this paper.  We give the proof in the appendix.  Here, we show how to use it to prove the
dichotomy result.

\begin{corollary}[Responsibility Dichotomy]
  Let $q$ be any conjunctive query without self joins.  Then:
  \begin{itemize}[leftmargin=2.2\parindent, itemsep=1pt, parsep=1pt, topsep = 1pt]
  \item If $q$ is weakly linear then $q$ is in PTIME.
  \item If $q$ is not weakly linear then it is NP-hard.
  \end{itemize}
\end{corollary}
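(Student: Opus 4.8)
The PTIME half is immediate: it is exactly \autoref{lemma:weakly:linear}, so the plan is to concentrate on the hardness half. Suppose $q$ is not weakly linear; I want to exhibit a chain $q \rewriteclosed h_i$ with $h_i$ one of $h_1^*, h_2^*, h_3^*$, after which \autoref{lemma:hard} together with the NP-hardness of the canonical queries (\autoref{thm:hardQueries}) immediately yields that $q$ is NP-hard.

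The strategy is to descend through the rewriting relation until reaching a \emph{final} query. First I would check that $\rewrite$ is terminating, i.e.\ that there is no infinite chain $q_0 \rewrite q_1 \rewrite \cdots$. The natural certificate is the lexicographic measure $\mu(q) = (\#\text{atoms},\ \#\text{variables},\ -\text{total arity})$: the rule \textsc{Delete} $g$ strictly drops the number of atoms; \textsc{Delete} $x$ keeps the number of atoms fixed while dropping the number of variables (note that \textsc{Add} $y$ never introduces a fresh variable, since it requires $y$ to occur in $q$ already); and \textsc{Add} $y$ leaves the first two coordinates fixed while strictly increasing the total arity, hence decreasing the third coordinate. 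Since the total arity is bounded above by $\#\text{atoms}\cdot\#\text{variables}$, the third coordinate is bounded below once the first two are fixed, so $\mu$ is well-founded and $\rewrite$ terminates.

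With termination in hand, consider the set $\mathcal S$ of all queries $q'$ with $q \rewriteclosed q'$ that are \emph{not} weakly linear. It is nonempty because $q \in \mathcal S$ (the closure is reflexive and $q$ is assumed non-weakly-linear). Pick a $\mu$-minimal element $q^*$ of $\mathcal S$. I claim $q^*$ is final: it is not weakly linear by membership in $\mathcal S$, and for any one-step rewrite $q^* \rewrite q'$ we have $q \rewriteclosed q'$ with $\mu(q') < \mu(q^*)$, so minimality forces $q' \notin \mathcal S$, i.e.\ $q'$ is weakly linear. Hence $q^*$ satisfies the definition of final, and \autoref{thm:final} gives $q^* \in \{h_1^*, h_2^*, h_3^*\}$. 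Thus $q \rewriteclosed q^*$ is a rewriting chain into a canonical hard query, and \autoref{lemma:hard} concludes that $q$ is NP-hard.

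The real content of the dichotomy lives in \autoref{thm:final}, which I am taking as given; within this corollary the only point that needs genuine care is the termination/well-foundedness of $\rewrite$, since without it the descent to a minimal non-weakly-linear (hence final) query is not justified. The hard part is therefore just to verify the interaction of the three rules with $\mu$ — in particular that \textsc{Delete} $x$ and \textsc{Delete} $g$ cannot secretly increase an earlier coordinate, and that \textsc{Add} $y$ is genuinely bounded (for a fixed pair $(x,y)$ it is idempotent once every atom containing $x$ already contains $y$, so it can fire only finitely often) — and I expect no surprises there.
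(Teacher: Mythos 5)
Your proof is correct and takes essentially the same route as the paper's: the PTIME half is \autoref{lemma:weakly:linear}, and the hardness half descends through the rewriting relation $\rewrite$ to a final query, then invokes \autoref{thm:final} and \autoref{lemma:hard}, exactly as the paper does. The only difference is that the paper asserts termination in one line (``any rewriting results in a simpler query''), whereas you justify it with the explicit well-founded lexicographic measure $(\#\text{atoms},\ \#\text{variables},\ -\text{total arity})$ and correctly flag that trivial \textsc{Add} steps must be excluded --- a detail the paper leaves implicit.
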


\begin{proof}
    \looseness -1
  If $q$ is weakly linear then it is in PTIME by
  \autoref{lemma:weakly:linear}.  Suppose $q$ is not weakly linear.
  Consider any sequence of rewritings $q=q_0 \rewrite q_1
  \rewrite q_2 \rewrite \ldots$ Any such sequence must terminate
  as any rewriting results in a simpler query.  We rewrite as
  long as $q_i$ is not weakly linear and stop at the last query $q_k$
  that is not weakly linear.  That means that any further rewriting
  $q_k \rewrite q'$ results in a weakly linear query $q'$. In other
  words, $q_k$ is a final query.  By \autoref{thm:final}, $q_k$ is one
  of $h_1^*, h_2^*, h_3^*$.  Thus, we have proven $q
  \rewriteclosed h_j$, for some $j= 1,2,3$.  By
  \autoref{lemma:hard}, the query $q$ is NP-hard.
\end{proof}

\looseness -1
\introparagraph{Extensions} We have shown in \autoref{sec:complexity_analysis_causality} that causality can be computed with a relational query. This
raises the question: if the responsibility of a query $q$ is in PTIME,
can we somehow compute it in SQL?  We answer this negatively:

\begin{theorem}[Logspace]\label{thm:logspaceComp}~\!\!\!
  Computing the {\whyso} responsibility of a tuple $t\in \en{D}$ is hard for \textsc{logspace} 
  for the following query:
$q \datarule \en{R}(x,u_1,y),$ $\en{S}(y,u_2,z),\en{T}(z,u_3,w)$ 
\end{theorem}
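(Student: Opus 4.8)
The plan is to establish \textsc{logspace}-hardness by reducing a known \textsc{logspace}-complete problem to the responsibility computation for the query $q \datarule \en{R}(x,u_1,y), \en{S}(y,u_2,z), \en{T}(z,u_3,w)$. The natural candidate is the \emph{undirected $s$-$t$ connectivity} problem or, more directly in this setting, the problem of computing the \emph{distance} (shortest path length) between two designated nodes in a layered graph, which is known to be \textsc{logspace}-complete. The key observation motivating this choice is that $q$ is a linear query (its atoms can be ordered $R,S,T$ so that each variable $y,z$ appears contiguously), and hence by \autoref{thm:linearPTIME} its responsibility is computed via a min-cut in the flow network of \specificref{Algorithm}{alg:flowTransform}. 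Computing a minimum cut in such a layered network is inherently a path/distance-style problem, and the presence of the ``dummy'' variables $u_1,u_2,u_3$ (which appear in only one atom each) lets us make each tuple a distinct edge without forcing any unwanted joins, so the database instance can be made to encode an arbitrary layered graph essentially verbatim.

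First I would fix the target \textsc{logspace}-complete problem precisely; I would use the problem of deciding, in a layered directed graph with three edge-layers, whether the minimum number of edges whose deletion disconnects a source layer from a target layer is below a given threshold $v$ — equivalently a thresholded min-cut question on a tripartite-layered network. The reduction then builds a database $D$ where the tuples of $\en{R}$, $\en{S}$, $\en{T}$ correspond to the edges in the first, second, and third layers of the graph respectively: an $R$-tuple $(x,u_1,y)$ encodes an edge from node $x$ (source side) to an intermediate node $y$, an $S$-tuple $(y,u_2,z)$ an edge from $y$ to $z$, and a $T$-tuple $(z,u_3,w)$ an edge from $z$ to a sink node $w$, with the unique tag variables $u_1,u_2,u_3$ instantiated to distinct constants so that tuples are in bijection with edges. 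I would then fix the tuple $t$ whose responsibility we compute to be a designated edge and argue that the smallest contingency set $\Gamma$ for $t$ is exactly a minimum set of edges completing a cut, so that $\rho_t = 1/(1+\min_\Gamma|\Gamma|)$ encodes the min-cut value, and the test $\rho_t > v$ recovers the thresholded connectivity answer.

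The crux of the argument is the correctness of the reduction: I must show that a set $\Gamma \subseteq \en{D}$ is a contingency for $t$ (i.e.\ makes $t$ counterfactual in $D-\Gamma$) if and only if $\Gamma \cup \{t\}$ is a cut separating source from target in the encoded network. The forward direction follows because if $q$ is true on $(D-\Gamma)\cup\{t\}$ but false on $D-\Gamma$, then every source-to-target path in the network must pass through the edge $t$ after the other edges of $\Gamma$ are removed, which is the min-cut characterization exploited in \autoref{ex:ptimeQuery}; the reverse direction follows from the same flow-network correspondence. I would lean on the already-established min-cut characterization of responsibility for linear queries to make this step clean rather than rederiving it.

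The main obstacle I anticipate is verifying that the reduction itself is computable in \textsc{logspace} (not merely in PTIME), since \textsc{logspace}-hardness requires the reduction to be a \emph{logspace} many-one reduction. This means the construction of $D$, the choice of $t$, and the threshold $v$ from the input graph must all be carried out with only logarithmic work-tape space — essentially a syntactic relabeling of edges to tuples with fresh tag constants, which is local and index-based and should go through, but it needs care to confirm that assigning the distinct $u_i$ constants and routing nodes to the correct attribute positions uses only counters of logarithmic width. A secondary subtlety is handling the boundary nodes: I must ensure the source and sink structure of the network (the infinite-capacity source/target edges in \specificref{Algorithm}{alg:flowTransform}) is faithfully reproduced so that the min-cut is forced to consist of edge-tuples only, matching the contingency-set semantics; I would address this by choosing the graph instance so that source and sink nodes have the right fan-in/fan-out and by invoking the infinite-capacity convention already built into the flow construction.
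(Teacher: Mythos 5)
There is a genuine gap, and it sits exactly where the real work of the paper's proof lies. Your reduction starts from ``thresholded min-cut in a three-edge-layer network,'' which you assert is a known \textsc{logspace}-complete problem. It is not an off-the-shelf result: establishing \textsc{logspace}-hardness of precisely such a flow-threshold problem is the bulk of the paper's argument. The paper starts from undirected graph accessibility (UGAP, the standard \textsc{logspace}-complete problem --- your first instinct, before you pivoted away from it), reduces it to a bipartite accessibility problem, and then --- this is the missing idea in your proposal --- converts a \emph{reachability} question into a \emph{max-flow threshold} question via a gadget: the four-partite network gives the middle-layer ($X$--$Y$) edges capacity $2$ so they never enter a minimum cut, the outer layers are unit-capacity edges in bijection with $E$, and two pendant unit-capacity edges $a'\!-\!a$ and $b\!-\!b'$ are attached so that the max-flow is $|E|+1$ iff $a$ and $b$ are connected, and $|E|$ otherwise. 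Without some such gadget your chain never gets off the ground, because connectivity and min-cut-threshold are not interchangeable for free. Your fallback claim that layered-graph distance is \textsc{logspace}-complete is also off: thresholded shortest path is NL-complete, and in any case it is a different problem from min-cut threshold, so it cannot simply be substituted as the source of hardness.

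The final step you describe --- tuples of $\en{R},\en{S},\en{T}$ in bijection with layer edges, with $u_1,u_2,u_3$ as tags, and $\rho_t=1/(1+\min_\Gamma|\Gamma|)$ encoding the cut value --- does match the paper's last reduction, but three details differ or are fumbled. First, the paper uses the $u_i$ not merely as distinct tags but to encode the capacity-$2$ edges as \emph{parallel tuples} ($R(x_i,1,y_j)$ and $R(x_i,2,y_j)$), which is essential to its gadget; your unit-capacity bijection loses this. Second, you take $t$ to be an in-network edge, which forces the delicate ``minimum cut containing $t$ that still leaves a surviving path through $t$'' analysis (the per-path step of \specificref{Algorithm}{alg:flowTransform}); the paper sidesteps this entirely by adding a fresh disjoint triple $R(x_0,1,y_0)$, $S(y_0,1,z_0)$, $T(z_0,1,w_0)$ and taking $t=R(x_0,1,y_0)$, so that a minimum contingency for $t$ is exactly a minimum cut of the main network. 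Third, your stated ``main obstacle'' --- that the reduction must be logspace-computable --- is the wrong requirement: hardness for \textsc{logspace} under logspace many-one reductions is vacuous (the reduction could solve the instance itself and emit a canonical answer), so the reductions must be sub-logarithmic (e.g., first-order/AC$^0$-style), which the paper's purely syntactic constructions are; this also matters because the point of the theorem is that responsibility is not expressible as a relational query.
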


\looseness -1
Finally, we add a brief discussion of queries with self-joins.  Here
we establish the following result:

\begin{proposition}[self-joins]\label{prop:selfJoin}
  Computing the responsibility of a tuple $t$ for $q \datarule
  \en{R}(x), \ex{S}(x,y), \en{R}(y)$ is NP-hard.  The same holds if
  one replaces $\ex{S}$ with $\en{S}$.
\end{proposition}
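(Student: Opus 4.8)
The plan is to prove NP-hardness by reduction from \textsc{Minimum Vertex Cover}, exploiting the natural graph reading of $q \datarule \en{R}(x), \ex{S}(x,y), \en{R}(y)$: interpret each $S(x,y)$ as a (directed) edge and each $R(x)$ as ``vertex $x$ is present'', so that $q$ holds exactly when the present vertices span some edge. Fix a candidate cause $t = \en{R}(a)$. A set $\Gamma$ is a contingency making $t$ counterfactual in $D - \Gamma$ precisely when, after the deletions in $\Gamma$, (i) every surviving edge is incident to $a$, and (ii) at least one edge incident to $a$ survives. Condition (i) forces $\Gamma$ to destroy every edge avoiding $a$ --- i.e.\ the vertices it removes form a vertex cover of the graph induced on $V \setminus \{a\}$ --- while (ii) requires $a$ to keep a neighbor. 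Hence computing $\rho_t$ reduces to finding a smallest such $\Gamma$, and the whole proof consists of engineering that minimum to equal a minimum vertex cover.

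Given a vertex-cover instance $(H, k)$ with $H = (V_H, E_H)$, I build $D$ as follows. The endogenous relation $\en{R}$ contains one tuple per vertex of $H$, together with two fresh vertices $a$ and $b$. The relation $S$ contains one directed tuple per edge of $H$ (orientation is irrelevant, since $q$ only checks that both endpoints are present) plus the single edge $(a,b)$, making $b$ a pendant adjacent only to $a$. The tuple whose responsibility we test is $t = \en{R}(a)$. Observe that $\ex{D} \not\models q$ (the exogenous data contains no $R$-tuples) and $D \models q$ (witnessed by $(a,b)$), so $t$ is a legitimate candidate cause.

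The correctness claim is that the smallest contingency for $t$ has size exactly $\mathrm{minVC}(H)$, so that $\rho_t = 1/(1 + \mathrm{minVC}(H))$. After deleting $a$, the pendant $b$ becomes isolated, so the graph on the remaining vertices is $H$ plus an isolated point; thus (i) forces $\Gamma$ to be a vertex cover of $H$, and (ii) holds for free by retaining $b$, which no minimum cover of $H$ needs. Every vertex cover of $H$ is a valid $\Gamma$ and every valid $\Gamma$ induces a cover of $H$ of the same size, so the minima coincide. Consequently $\rho_t > 1/(k+2)$ if and only if $H$ has a vertex cover of size at most $k$, which yields the NP-hardness. For the variant in which $S$ is endogenous, $\Gamma$ may also delete edge tuples directly; here the one additional ingredient is the observation that the cheapest way to destroy all edges of a graph, when one may pay a unit either to delete a vertex or to delete a single edge, is still a minimum vertex cover (replace each deleted edge by one of its endpoints, never increasing the count). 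Thus the size bound is unchanged and the same reduction applies verbatim.

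I expect the construction to be routine and the real content to lie in the correctness argument: verifying precisely that the minimum contingency equals $\mathrm{minVC}(H)$. The delicate points are ruling out cheaper contingencies in the endogenous-$S$ case through the mixed vertex/edge cover equivalence, and checking that requirement (ii) --- always keeping the pendant edge $(a,b)$ alive --- never forces $\Gamma$ to exceed the minimum cover.
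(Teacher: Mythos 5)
Your proof is correct and takes essentially the same approach as the paper's: a reduction from minimum vertex cover in which a fresh witness join is attached to the distinguished tuple $t$, plus the edge-to-endpoint replacement argument showing that a minimum contingency never needs $S$-tuples in the endogenous-$S$ variant. The only difference is cosmetic --- the paper attaches a self-loop $S(x_0,x_0)$ at a single fresh vertex (exploiting the self-join so the witness cannot be disturbed), whereas you use a pendant edge $(a,b)$ between two fresh vertices and must (and correctly do) observe that no valid contingency removes $R(b)$ or the pendant edge.
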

\looseness -1
We include the proof in the appendix.  Beyond this result, however,
queries with self-joins are harder to analyze, and we do not yet have
a full dichotomy.  In particular, we leave open the complexity of the
query $\en{R}(x,y),\en{R}(y,z)$.

\subsection{Why No?} 
\label{sub:why_no_r}
\looseness -1
While the complexity of {\whyso} responsibility turned out to be
quite difficult to analyze, the {\whyno}  responsibility is much
easier.  This is because, for any query $q$ with $m$ subgoals and
non-answer $\bar a$, any contingency set for a tuple $t$ will have at
most $m-1$ tuples.  Since $m$ is independent on the size of the
database, we obtain the following:

\begin{theorem}[{\whyno}  responsibility]\label{thm:whyNoResp}~\!\!\!\!\!
  Given a query $q$ over a database instance $D$ and a non-answer
  $\bar{a}$, computing the responsibility of $t\in \en{D}$ over $\en{D}$ is
  in PTIME.
\end{theorem}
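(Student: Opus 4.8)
The plan is to exploit the monotonicity of conjunctive queries together with the finite-witness property to bound the size of every minimal contingency set, after which a brute-force enumeration suffices. First I would substitute the constants of $\bar a$ into $q$, turning it into a Boolean conjunctive query with $m$ subgoals $g_1,\ldots,g_m$, and recall the {\whyno} setting: the real database $\ex{D}$ is purely exogenous, every contingency $\Gamma$ is a subset of $\en{D}$, and $t\in\en{D}$ is an actual cause with contingency $\Gamma$ exactly when $\ex{D}\cup\Gamma\not\models q$ and $\ex{D}\cup\Gamma\cup\set{t}\models q$.

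The crucial step is to show that every actual cause admits a contingency set of size at most $m-1$. Suppose $t$ has some contingency $\Gamma_0$. Since $\ex{D}\cup\Gamma_0\cup\set{t}\models q$ while $\ex{D}\cup\Gamma_0\not\models q$, there is a witnessing valuation $\theta : q \to \ex{D}\cup\Gamma_0\cup\set{t}$ that maps some subgoal to $t$ (such a $\theta$ must exist, for otherwise every witness would map into $\ex{D}\cup\Gamma_0$ and $q$ would already hold there). Let $T_\theta$ be the set of at most $m$ tuples in the image of $\theta$, so $t\in T_\theta$, and set $\Gamma' = (T_\theta\setminus\set{t})\cap\Gamma_0$; then $|\Gamma'|\le m-1$. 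I would then verify that $\Gamma'$ is again a contingency for $t$. The same $\theta$ witnesses $\ex{D}\cup\Gamma'\cup\set{t}\models q$, because every tuple it uses other than $t$ lies in $\ex{D}$ or in $\Gamma'$ by construction. And since $\Gamma'\subseteq\Gamma_0$ and $q$ is monotone, $\ex{D}\cup\Gamma_0\not\models q$ forces $\ex{D}\cup\Gamma'\not\models q$. Hence $\min_{\Gamma}|\Gamma|\le m-1$.

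Given this bound, the algorithm enumerates all subsets $\Gamma\subseteq\en{D}$ with $|\Gamma|\le m-1$, and for each tests the two conditions $\ex{D}\cup\Gamma\not\models q$ and $\ex{D}\cup\Gamma\cup\set{t}\models q$ by two query evaluations; it returns $\rho_t = 1/(1+k)$, where $k$ is the smallest size of a contingency found, and $\rho_t=0$ if none exists. Because $m$ is a constant in data complexity, there are only $\Ord{|\en{D}|^{m-1}}$ such subsets, each tested in polynomial time, so the whole procedure runs in PTIME. I expect the bound $|\Gamma|\le m-1$ to be the only real content of the argument: it is precisely the feature that fails in the {\whyso} case, where contingencies are formed by \emph{deleting} tuples and can grow with the database, which is what makes {\whyso} responsibility hard.
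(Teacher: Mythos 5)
Your proposal is correct and takes essentially the same route as the paper, whose entire proof is the observation that to make a tuple counterfactual for a non-answer one needs to insert at most $m-1$ tuples ($m$ being the number of subgoals), so the minimum contingency size is bounded by the fixed query size and the brute-force search is PTIME in data complexity. Your witnessing-valuation argument---shrinking an arbitrary contingency $\Gamma_0$ to $\Gamma' = (T_\theta\setminus\set{t})\cap\Gamma_0$ of size at most $m-1$ and invoking monotonicity---together with the $\Ord{|\en{D}|^{m-1}}$ enumeration, is just a careful fleshing-out of the detail the paper leaves implicit (indeed it corrects the paper's slightly loose phrasing that \emph{any} contingency has at most $m-1$ tuples, when what holds is that a minimal one does).
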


\section{Related Work} 
\label{sec:related_work}
\looseness -1
Our work is mainly related to and unifies ideas from work on \emph{causality}, \emph{provenance}, and  \emph{query result explanations}.

\looseness -1
\introparagraph{Causality}
Causality is an active research area mainly in logic and philosophy with its own dedicated workshops (e.g.\ \cite{CausalityWorkshop:2009}). 
The idea of \emph{counterfactual causality} (if $X$ had not occurred, $Y$ would not have occurred) can be traced back to Hume \cite{Menzies:Causation2008}, and
the best known counterfactual analysis of causation in modern times is due to  Lewis \cite{Lewis1973}. 
Halpern and Pearl \cite{HalpernPearl:Cause2005} define a variation they call \emph{actual causality} 
which relies upon a graph structure called a causal network, and adds the crucial concept of a \emph{permissive contingency} before determining causality.
Chockler and Halpern \cite{DBLP:journals/jair/ChocklerH04} define the degree of \emph{responsibility} as a gradual way to assign causality.
Our definitions of {\whyso} and {\whyno}  causality and responsibility for conjunctive queries build upon the HP definition, but simplify it and do not require a causal network. A general overview of causality in a database context is given in \cite{DEBulletin2010}, while \cite{MUD2010} introduces functional causality as an improved, more robust version of the HP definition.

\looseness -1
\introparagraph{Provenance}
Approaches for defining data provenance can be mainly classified into three categories: 
how-, \mbox{why-,} and where-provenance 
\cite{DBLP:conf/icdt/BunemanKT01, DBLP:journals/ftdb/CheneyCT09, DBLP:journals/tods/CuiWW00, DBLP:conf/pods/GreenKT07}.
We point to the close connection between \emph{why-provenance} and {\whyso} causality: both definitions concern the same tuples if all tuples in a database are endogenous\footnote{\footnotesize{Note that why-provenance (also called minimal witness basis) defines a set of sets. To compare it with {\whyso} causality, we consider the union of tuples across those sets.}}.
However, our work extends the notion of provenance by 
allowing users to partition the lineage tuples into endogenous and exogenous, and presenting a strategy for constructing a query to compute all causes\footnote{Note that, in general, {\whyso} tuples are \emph{not} identical to the subset of endogenous tuples in the why-provenance.}.
In addition, we can rank tuples according to their individual responsibilities, and determine a gradual contribution with counterfactual tuples ranked first.

\looseness -1
\introparagraph{Missing query results}
Very recent work focuses on the problem of explaining missing query answers, 
i.e.\ why a certain tuple is not in the result set?
The work by Huang et al.~\cite{DBLP:journals/pvldb/HuangCDN08} and
the Artemis~\cite{DBLP:journals/pvldb/HerschelHT09} system
present provenance for potential answers by providing
\emph{tuple insertions} or modifications that would yield the missing
tuples.  This is equivalent to providing the set of endogenous tuples for {\whyno}  causality.
Alternatively, Chapman and Jagadish \cite{DBLP:conf/sigmod/ChapmanJ09} focus
on the \emph{operator} in the query plan that
eliminated a specific tuple,
and Tran and Chan~\cite{DBLP:conf/sigmod/TranC10} suggest an approach to automatically generate a modified query whose result includes both the original query's results as well as the missing tuple.

\looseness -1
Our definitions of {\whyso} and {\whyno}  causality highlight the symmetry between the two types of  provenance (``positive and negative provenance"). Instead of considering them in separate manners, we show how to construct Datalog programs that compute all {\whyso} or {\whyno}  tuple causes given a partitioning of tuples into endogenous and exogenous. Analogously, responsibility applies to both cases in a uniform manner.

\section{Conclusions} 
\label{sec:conclusions}
\looseness -1
In this paper, we introduce causality as a framework for explaining
answers and non-answers in a database setting. We define two kinds of
causality, {\whyso} for actual answers, and {\whyno} for
non-answers, which are related to the provenance of answers and
non-answers, respecitively. We demonstrate how to retrieve all causes for an answer
or non-answer using a relational query.  We give a comprehensive
complexity analysis of computing causes and their
responsibilities for conjunctive queries: whereas causality is shown to be always in PTIME, we present a dichotomy for responsibility within queries without self-joins. 

\introparagraph{Acknowledgements} We would like to thank Benny Kimelfeld for identifying a bug in an earlier version of one of the proofs.

\bibliography{\bibpath}

\appendix\phantomsection\label{sec:appendix}

\section{Nomenclature}\label{sec:nomenclature}

\begin{table}[h!]
\centering
\small
\begin{tabularx}{\linewidth}{  @{\hspace{0pt}} >{$}l<{$}  @{\hspace{2mm}} X @{}} 	
    \hline
    D                   & database instance \\
    A,B,C,R,S,T,W       & relations \\  
    \en{R}              & $R$ is a fully endogenous relation \\
    \ex{R}              & $R$ is a fully exogenous relation \\
    \en{D}              & set of endogenous tuples: $\en{D} \subseteq D$ for {\whyso}   \\
    \ex{D}              & set of exogenous tuples: $\ex{D} = D - \en{D}$    \\  
    \en{R}_i            & set of endogenous tuples in relation $R_i$    \\
    \Gamma              & contingency: $\Gamma \subseteq \en{D}$ \\
    C_{R_i}             & set of causes in relation $R_i$   \\
    \rho_t              & responsibility of tuple $t$   \\
    X_t                 & Boolean variable associated with tuple $t$    \\
    \Var(q)             & variables appearing in query $q$\\
    \Adom(D)            & active domain     \\
    \sg(x)              & set of subgoals containing variable $x$ \\
    \Phi                & lineage   \\
    \en{\Phi}           & endogenous lineage ($\enSymb$-lineage) \\
    \mathcal{H}^D       & dual hypergraph   \\
    \rewrite            & rewriting of a query \\
    \weakening          & weakening of a query \\
    h_1^*, h_2^*, h_3^* & canonical hard queries of \autoref{thm:hardQueries} \\
    \hline
    \end{tabularx}
\end{table}

\section{Proofs Causality} 
\label{appendix:proofsCausality}

\begin{proof}[\autoref{thm:causalityEq}]
	
	Assume $\Phi^{q(\bar a)}$ the lineage of $\bar a$ in $D$. We construct the endogenous lineage $\Psi^{q(\bar a)}=\Phi^{q(\bar a)}(\bar X_{\ex{D}}=\sqll{true})$, and $\Psi'$ a DNF with all the non-redundant clauses of $\Psi$. We will show that a variable $X_t$ is a cause of $\bar a$, iff $X_t\in \Psi'$, which means that $X_t$ is part of a non-redundant clause in the endogenous lineage of $\bar a$.
	
	\underline{Case A:} ({\whyso}, answer $\bar a$): 
	First of all, if $X_t$ is not in $\Psi'$, then $X_t$ is not a cause of $\bar a$, as there is no assignment that makes $X_t$ counterfactual for $\Psi'$ (and therefore $\Psi$), because of monotonicity.	
	If $X_t\in C_i$, where $C_i$ a clause of $\Psi'$, we select $\Gamma=\{ X_j\,|\, X_j\in\Psi'\, and\, X_j\notin C_i\}$ ($\Gamma\subseteq \en{D}$ since $\forall X_j\in\Psi'$, $X_j\in \en{D}$). Then, if we write $\Psi'=C_i\vee\Psi''$, we know that $\Psi''(X_\Gamma=\sqll{false})=\sqll{false}$, because $\Psi'$ contains only non-redundant terms. That means that every clause $C_j$ has at least one variable that is not in $C_i$, and therefore can be negated by the above choice of $\Gamma$.
	This makes $X_t$ counterfactual for $\Psi'$ (and also $\Psi$) with contingency $\Gamma$. Since $\Gamma\cap \ex{D}=\emptyset$ $X_t$ is also counterfactual for $\Phi^{q(\bar a)}$ with contingency $\Gamma$, meaning that $\Phi^{q(\bar a)}(X_\Gamma = \sqll{false})$ is satisfiable, and $\Phi^{q(\bar a)}(X_\Gamma = \sqll{false}, X_t = \sqll{false})$ is unsatisfiable.
	Therefore, conditions 1, 2, and 3 are equivalent.

	\underline{Case B:} ({\whyno}, non-answer $\bar a$):
	First of all, if $X_t$ is not in $\Psi'$, then $X_t$ is not a cause of $\bar a$, as there is no assignment that makes $X_t$ counterfactual for $\Psi'$ (and therefore $\Psi$), because of monotonicity.
	If $X_t\in C_i$, where $C_i$ a clause of $\Psi'$, we select $\Gamma'=\{ X_j\,|\, X_j\in C_i\, and\, X_j\neq X_t\}$, and assign $\Gamma = \en{D}-\Gamma'\cup\set{t}$. $\Gamma,\Gamma'\subseteq \en{D}$ since $\forall X_j\in\Psi'$, $X_j\in \en{D}$. Then, if we write $\Psi'=C_i\vee\Psi''$, we know that $\Psi''(X_{\Gamma}=\sqll{false})=\sqll{false}$, because $\Psi'$ contains only non-redundant terms. That means that every clause $C_j$ has at least one variable that is not in $C_i$, and therefore can be negated by the above choice of $\Gamma$.
	This makes $X_t$ counterfactual for $\Psi'$ (and therefore $\Psi$) with contingency $\Gamma'$. Since $\Gamma'\cap \ex{D}=\emptyset$ $X_t$ is also counterfactual for $\Phi^{q(\bar a)}$ with contingency $\Gamma'$, meaning that $\Phi^{q(\bar a)}(X_{\Gamma}= \sqll{false})$ is unsatisfiable, and $\Phi^{q(\bar a)}(X_{\Gamma} = \sqll{false}, X_t = \sqll{false})$ is satisfiable.
	Therefore, conditions 1, 2, and 3 are equivalent. 
\end{proof}

\begin{proof}[(\autoref{thm:causalQuery})]
To describe the relational query we need a number of technical
definitions.  Recall that $\en{R}_i, \ex{R}_i$ denote the
endogenous/exogenous tuples in $R_i$.  Given a Boolean conjunctive
query $q \datarule g_1(\bar x_1), \ldots, g_m(\bar x_m)$ we define a {\em
  refinement} to be a query of the form $r  \datarule g_1^{\varepsilon_1}(\bar x_1), \ldots,
  g_m^{\varepsilon_m}(\bar x_m) \label{eq:refinement}$,
where each $\varepsilon_i \in \set{\enSymb, \exSymb}$.  Thus, every atom is made
either exogenous or endogenous, and we call it an $\enSymb$- or an $\exSymb$-atom;
there are $2^m$ refinements.  Clearly, $q$ is logically equivalent to
the union of the $2^m$ refinements, and its lineage is equivalent to
the disjunction of the lineages of all refinements.  Consider any
refinement $r$.  We call a variable $x \in Var(r)$ an {\em $\enSymb$-variable}
if it occurs in at least one $\enSymb$-atom.  We apply repeatedly the
following two operations: (1) choose two $\enSymb$-variables $x,y$ and
substitute $y:=x$; (2) choose any $\enSymb$-variable $x$ and any constant $a$
occurring in the query and substitute $x:=a$.  We call any query $s$
that can be obtained by applying these operations any number of times
an {\em image} query; in particular, the refinement $r$ itself is a
trivial image.  There are strictly less than $2^{k^2}$ images, where
$k$ is the total number of $\enSymb$-variables and constants in the query. Note that $k$ is bounded by query size and thus irrelevant to data complexity. We
always minimize an image query.

Fix a refinement $r$.  We define an {\em $\enSymb$-embedding} for $r$ as a
function $e : r \rightarrow s$ that maps a strict subset of the
$\enSymb$-atoms in $r$ onto all $\enSymb$-atoms of $s$, where $s$ is the image of a
possibly different refinement $r'$.  Intuitively, an $\enSymb$-embedding is a
proof that a valuation for $r$ results in a redundant conjunct, because
it is strictly embedded a the conjunct of a valuation of $r'$.

We now describe in non-recursive, stratified Datalog the relational query that computes all causes
\begin{align}
  I_{s,e}(e(e^{-1}(\bar y))) & \datarule \sqll{atoms}(s)  \label{eq:i} \\
  C_{R_i}(\bar x_j) 		& \datarule  \sqll{atoms}(r) \wedge \bigwedge_{e : r \rightarrow s}
  \neg I_s(e^{-1}(\bar y)) \label{eq:c} 
\end{align}

There is one IDB predicate $I_{s,e}(e(e^{-1}(\bar y)))$, for each
possible embedding $e : r \rightarrow s$, and it appears in one single
rule, whose left hand side the same as $s$.  The head variables are
all $\enSymb$-variables $\bar y$ in $s$, where each $y$ is repeated
$|e^{-1}(y)|$ times: this is the purpose of the $e \circ e^{-1}$
function.  For example, if the embedding is $e : \en{R}(x_1,x_2,x_3)
\rightarrow \en{R}(y,y,y)$, then $e^{-1}(y) = (x_1,x_2,x_3)$ and
$e(e^{-1}(y)) = (y,y,y)$, hence the IDB is $I(y,y,y)$.  Next, there is
one IDB predicate $C_{R_i}$ for every relation name $R_i$, and there
are one or more rules for $C_{R_i}$: one rule \autoref{eq:c} for each
refinement $r$ and each $\enSymb$-atom $\en{\textit{g}}_j(\bar x_j)$ in $r$ that refers to
the relation $\en{R}_i$.

Therefore,  the Datalog program consisting of \autoref{eq:i} and \autoref{eq:c}
  computes the set of all causes to the Boolean query $q$ and returns
  them in the IDB predicates $C_1, \ldots, C_k$.
\end{proof}

\begin{proof}[(\autoref{cor:conj})]
	The proof is immediate: there exists a single refinement, which has no
	embedding.
\end{proof}

\section{Canonical Hard Queries} 
\label{appendix:canonicalQueries}

\begin{proof}[(\autoref{thm:hardQueries} $h_1^* \datarule R(x),S(y),T(z),W(x,y,z)$)]~\!\!
	We demonstrate hardness of $q_1$ with a reduction from the minimum vertex cover problem in a $3$-partite $3$-uniform hypergraph: Given an $3$-partite, $3$-uniform hypergraph and constant $K$, determine if there exists a vertex cover of size less or equal to $K$. This problem is shown to be hard in \cite{Senellart:2008p382}.
		
	Take a 3-partite 3-uniform hypergraph such as the one from \autoref{fig:3partiteHypergraph}. The nodes can be divided into 3 partitions ($R$, $S$ and $T$), such that every edge contains exactly one node from each partition. We construct 4 database relations $R(x)$, $S(y)$, $T(z)$ and $W(x,y,z)$. For each node in the $R$ partition of the hypergraph, we add a tuple in $R(x)$, and equivalently for $S$ and $T$. Also, for each edge of the hypergraph, we add a tuple in $W(x,y,z)$. Finally, we add an additional tuple to each relation: $r_0=(x_0)$, $s_0=(y_0)$, $t_0=(z_0)$ and $w_0=(x_0,y_0,z_0)$.
	
	\begin{figure}[tb]
		\centering
		\subfloat[]
		{\begin{minipage}{0.2\textwidth}
		\includegraphics[scale=0.48]{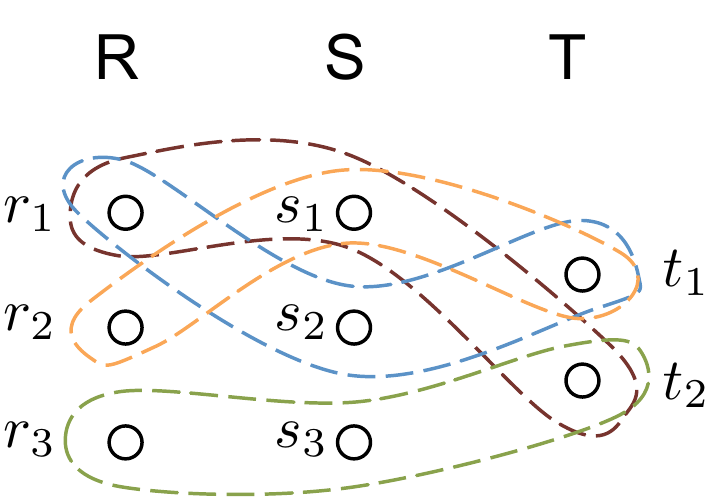}	
		\label{fig:3partiteHypergraph}
		\end{minipage}
		}
		\qquad
		\subfloat[]
		{\parbox{0.2\textwidth}{
		\renewcommand\tabcolsep{3pt}
		\begin{tabular}{|l|l|l|l|l|l|l|l|l|}
			\multicolumn{1}{l}{R} & \multicolumn{1}{l}{ } & \multicolumn{1}{l}{S} 
				& \multicolumn{1}{l}{ } & \multicolumn{1}{l}{T} & \multicolumn{1}{l}{ } & \multicolumn{3}{l}{W} \\
			\cline{1-1}\cline{3-3}\cline{5-5}\cline{7-9}
			\textit{X} & & $Y$ & & $Z$ & & $X$ & $Y$ & $Z$\\
			\cline{1-1}\cline{3-3}\cline{5-5}\cline{7-9}                                                
			$x_0$ &   & $y_0$ &   &  $z_0$ &  & $x_0$ & $y_0$ & $z_0$ \\
			$x_1$ &   & $y_1$ &   &  $z_1$ &  & $x_1$ & $y_1$ & $z_2$ \\
			$x_2$ &   &  $y_2$ &   &  $z_2$ &  & $x_1$ & $y_2$ & $z_1$ \\
			\cline{5-5}
			$x_3$ &   &  $y_3$ &  \multicolumn{3}{c|}{ }  & $x_2$ & $y_1$ & $z_1$ \\
			\cline{1-1}\cline{3-3}
			\multicolumn{6}{c|}{ }  & $x_3$ & $y_3$ & $z_2$ \\
			\cline{7-9}
		\end{tabular}\label{tbl:dbReduction}}
		}
		\caption{(a) Example 3-partite 3-uniform hypergraph. (b) Database instance created from the hypergraph of (a).}
	\end{figure}

	The database instance corresponding to the hypergraph of \autoref{fig:3partiteHypergraph} is shown in \autoref{tbl:dbReduction}. Now consider the join query 
	$$Q(x,y,z) \datarule R(x),S(y),T(z),W(x,y,z)$$
	The responsibility of tuple $r_0$, (equivalently $s_0$, $t_0$ or $w_0$), is equal to $\frac{1}{1+|\mathcal{S}|}$, where $\mathcal{S}$ is the minimum contingency set for tuple $r_0$. Therefore, $\mathcal{S}$ contains the minimum number of tuples that make $r_0$ counterfactual. Note that $s_0$, $t_0$ and $w_0$ cannot be contained in $\mathcal{S}$, as they are the only tuples that join with $r_0$.
	If $S$ a minimum contingency, then if $w_i\in \mathcal{S}$, then $\exists \mathcal{S}'=\{ \mathcal{S}\setminus\{ w_i\}\}\cup\{ r_j\}$, where $r_j.x=w_i.x$, and $\mathcal{S}'$ is also a contingency of the same size as $S$ and therefore minimum. Therefore, there exists minimum contingency $\mathcal{S}$ that only contains tuples from relations R, S and T. The tuples of R, S and T correspond to hypergraph nodes, and a contingency corresponds to a cover: if an edge was not covered, then there would exist a corresponding tuple that was not eliminated. Also the cover is minimum: if there existed a smaller cover, then there would exist a smaller contingency. Therefore, computing responsibility for $h_1^*$ is hard.
\end{proof}

\begin{proof}[(\autoref{thm:hardQueries} $h_2^* \datarule R(x,y),S(y,z),T(z,x)$)]~\!
We will demonstrate hardness for computing responsibility for $h_2^*$ using a reduction from 3SAT. 
For the reduction we will construct a 3-colored graph $G_\phi$. The nodes are colored with $a$, $b$, or $c$, and we call them $a$-nodes, $b$-nodes, and $c$-nodes respectively.  Any such graph corresponds uniquely to an $R,S,T$ instance: Every unique $a$, $b$, and $c$-node maps to a unique value from the domain of attribute $x$, $y$, and $z$ respectively. Therefore, $R$ contains all $a\rightarrow b$ edges, i.e. it contains all tuples $(u,v)$ where $u$ is an $a$-node and $v$ is a $b$-node and there is an edge $u\rightarrow v$, $S$ contains all $b\rightarrow c$ edges, and $T$ all $c\rightarrow a$ edges. The important property is that $h_2^*$ is true on the instance $R,S,T$ iff $G_\phi$ has a cycle of length 3; note that the nodes on such a cycle necessarily have colors $a$, $b$, $c$. From here on, we will use the term cycle to refer to a cycle of length 3.

Given a 3-colored graph $G_\phi$, we define the minimum contingency of $G_\phi$ as the smallest set of edges that contains at least one edge from each cycle. This notion is in fact directly equivalent to the notion of contingency $\Gamma$ for tuples defined in \autoref{def:responsibility}: a set $\Gamma$ is a minimum contingency for tuple $R(a_0,b_0)$ in $G_\phi\cup\{R(a_0,b_0), S(b_0,c_0), T(c_0,a_0)\}$, iff $\Gamma$ is a subset of the edges of $G_\phi$, and is a minimum contingency for $G_\phi$. We will show that given a 3-colored graph $G$ and a number $m$, checking if $G$ has a contingency set of size $\le m$ is NP-hard.

\vspace{2mm}
\underline{\sc Reduction of 3SAT to the colored graph $G_\phi$:}\\
Let $\phi=\bigwedge_{i=1}^nC_i$ be an instance of 3SAT, where $C_i$ are 3-clauses over some set $\bf X$ $=\{X_1,X_2,\ldots X_n\}$ of variables. For each variable $X_i$ we construct a graph $G_i$ called the \emph{local ring} of $X_i$, an example of which is given in \autoref{fig:figs_localRing}. The construction is as follows:
\begin{itemize}
    \item Pick $m_i$ odd and multiple of 3, such that $m_i \ge 9|C_{X_i}|$, where $|C_{X_i}|$ is the number of clauses the variable $X_i$ appears in. $m_i$ is the length of the ring $G_i$.
    \item Create two ordered sets of nodes $V^+=\{v_1^+,v_2^+,\ldots,v_{m_i}^+\}$ and $V^-=\{v_1^-,v_2^-,\ldots,v_{m_i}^-\}$.
    \item We assign the colors $a$, $b$, and $c$ to these nodes as follows: $V^+=\{a_1^+,b_2^+,c_3^+,a_4^+,\ldots\}$, and $V^-=\{a_1^-,b_2^-,c_3^-,a_4^-,\ldots\}$. In other words, every node $v_{3j+1}^+$ is an $a$-node (denoted as a square node in \autoref{fig:figs_localRing}), every node $v_{3j+2}^+$ is a $b$-node (denoted as a circle), and every node $v_{3j+3}^+$ is a $c$-node (denoted as a triangle). Similarly for $V^-$.
    \item Create the \emph{forward} edges $(v_j^+,v_{j+1}^-)$ and $(v_j^-,v_{j+1}^+)$ for all $j<m_i$, as well as  $(v_{m_i}^+,v_{1}^-)$ and $(v_{m_i}^-,v_{1}^+)$. Forward edges are shown as solid in \autoref{fig:figs_localRing}.
    \item Create the \emph{backward} edges $(v_j^+,v_{j-2}^+)$ and $(v_j^-,v_{j-2}^-)$ for all $j>2$, as well as $(v_{m_i-1}^+,v_{1}^+)$, $(v_{m_i-1}^-,v_{1}^-)$, $(v_{m_i}^+,v_{2}^+)$, and $(v_{m_i}^-,v_{2}^-)$. Backward edges are shown as dotted in \autoref{fig:figs_localRing}.
\end{itemize}
The collection of the local rings $G_i$ is the \emph{global graph} $G_\phi$. We will prove a series of lemmas that we need to demonstrate the hardness of computing the minimum contingency of $G_\phi$
\begin{figure}[tb]
    \centering
        \includegraphics[scale = 0.43]{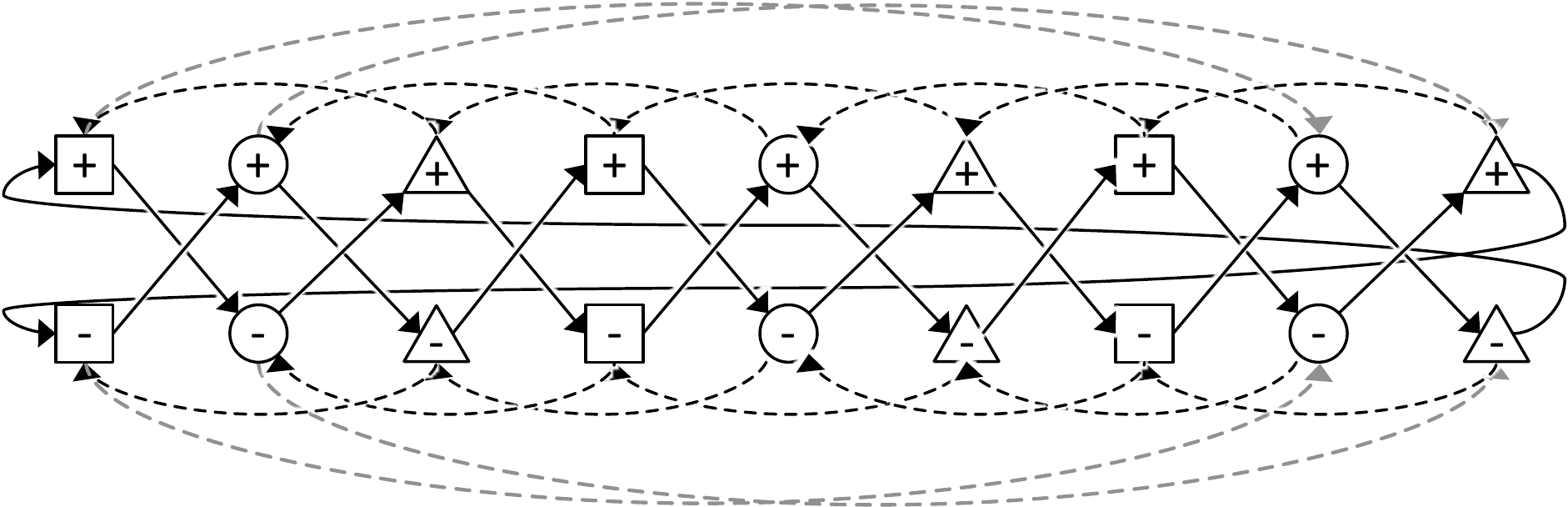}
    \caption{A local ring of length $m=9$.}
    \label{fig:figs_localRing}
\end{figure}

\begin{lemma}\label{lem:fwd_edge}
    A local ring $G_i$ has a minimum contingency comprised solely of forward edges.
\end{lemma}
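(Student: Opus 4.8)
The plan is to prove the lemma by a local exchange argument, after first obtaining a clean combinatorial description of every $3$-cycle inside $G_i$. The whole argument rests on an asymmetry between the two edge types: a forward edge will turn out to lie on two triangles, whereas a backward edge lies on only one, so backward edges can always be traded away.

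First I would analyze the color and index arithmetic of the two edge types. Writing the color of a node $v_j^{\pm}$ as $j \bmod 3$ (with residues $1,2,0$ corresponding to $a,b,c$), both a forward edge $v_j^s \to v_{j+1}^{\bar s}$ and a backward edge $v_j^s \to v_{j-2}^s$ advance the index, hence the color, by $+1 \pmod 3$; a triangle therefore advances the color by $3 \equiv 0$, exactly as needed for an $a$-$b$-$c$ cycle. Next I would count edge types within a triangle: a forward edge switches strand and changes the index by $+1$, a backward edge keeps the strand and changes the index by $-2$. For three edges to return to the starting node, the number $f$ of forward edges must be even (to restore the strand) and the net index change $3f-6$ must be $\equiv 0 \pmod{m_i}$. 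Since $m_i \ge 9$, the case $f=0$ (which needs $m_i \mid 6$) and the case $f=3$ (wrong parity, and needs $m_i \mid 3$) are impossible, leaving $f=2$. Hence every triangle has the shape
$$T_j^s : v_j^s \to v_{j+1}^{\bar s} \to v_{j+2}^s \to v_j^s,$$
with two forward edges and the single backward edge $v_{j+2}^s \to v_j^s$.

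From this description I would read off the incidence counts. A backward edge $v_{j+2}^s \to v_j^s$ determines $j$ and $s$ uniquely, so it lies on exactly one triangle, $T_j^s$. A forward edge $v_j^s \to v_{j+1}^{\bar s}$ lies on exactly two triangles, namely $T_j^s$ (as its first forward edge) and $T_{j-1}^{\bar s}$ (as its second). Then I would run the exchange: let $\Gamma$ be any minimum contingency of $G_i$ (a smallest edge set meeting every triangle) and suppose it contains a backward edge $e_b$, the unique backward edge of some $T_j^s$. Since $e_b$ meets only $T_j^s$, minimality forces that neither forward edge of $T_j^s$ already lies in $\Gamma$ (otherwise $e_b$ would be redundant and $\Gamma \setminus \{e_b\}$ a smaller contingency). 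Replacing $e_b$ by either forward edge of $T_j^s$ yields a set $\Gamma'$ of the same size that still meets $T_j^s$ and leaves every other triangle's coverage unchanged; thus $\Gamma'$ is again a minimum contingency, now with one fewer backward edge. Iterating eliminates all backward edges, producing a minimum contingency comprised solely of forward edges.

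I would expect the triangle characterization of the second paragraph to be the main obstacle: it requires checking that both edge types respect the $a \to b \to c$ progression and, more delicately, the modular index count that excludes every triangle except the two-forward/one-backward shape (this is where $m_i \ge 9$ is used). Once that structural fact and the resulting $2$-versus-$1$ coverage asymmetry are established, the exchange argument is routine.
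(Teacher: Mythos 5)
Your proof is correct and takes essentially the same route as the paper's: classify every $3$-cycle of $G_i$ as two forward edges plus one backward edge, observe that each backward edge lies on exactly one cycle while each forward edge lies on two, and then exchange any backward edge in a minimum contingency for a forward edge of its unique cycle. The paper dismisses the structural facts as ``straightforward'' without verification; your strand-parity and index-arithmetic argument (using $3 \mid m_i$ and $m_i \ge 9$ to rule out $f=0,1,3$ forward edges) supplies precisely the check the paper omits.
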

\begin{proof}
    This is straightforward: all cycles in $G_i$ are comprised by 1 backward and 2 forward edges, and each backward edge participates in exactly one cycle. If a backward edge is part of a contingency set, it can always be replaced by one of the forward edges on the same cycle, resulting in a contingency set of at most the same size.
\end{proof}
From here on whenever we refer to a minimum contingency in a local ring, it will be implied that it includes forward edges only.

\begin{lemma}\label{lem:contingency_size}
    If $m_i$ is odd, there are exactly 2 minimum contingencies of size $m_i$ that include only forward edges.
\end{lemma}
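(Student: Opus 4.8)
The plan is to reduce the statement to a textbook fact about vertex covers of an even cycle. By \autoref{lem:fwd_edge} I may restrict attention to contingencies consisting of forward edges only, so the entire argument lives on the set of $2m_i$ forward edges. The first step is to pin down their global structure: starting at $v_1^+$ and repeatedly following forward edges, each step advances the position index by one (modulo $m_i$) and switches strand between $V^+$ and $V^-$. After $m_i$ steps the index returns to $1$, but since $m_i$ is odd the strand has flipped an odd number of times, so we land on $v_1^-$ rather than back at $v_1^+$; only after $2m_i$ steps do we return to the start. Hence the forward edges form a single directed cycle $F = (e_1,\ldots,e_{2m_i})$ of length $2m_i$. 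This is precisely where the hypothesis ``$m_i$ odd'' is used: for even $m_i$ the forward edges would split into two cycles of length $m_i$ each (and one would count $4$ minimum contingencies instead of $2$).

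The second step is to identify the $3$-cycles of $G_i$ with the consecutive pairs of edges on $F$. As already observed in the proof of \autoref{lem:fwd_edge}, every triangle uses two forward and one backward edge; two forward edges can share a triangle only when they are consecutive on $F$, i.e.\ share the middle node, say $e_k\colon u\to w$ and $e_{k+1}\colon w\to v$. The unique backward edge $v\to u$ then closes the triangle, and it exists by construction (the wrap-around cases being covered by the extra backward edges listed in the construction). A short position/color check shows that two forward and one backward edge is the \emph{only} edge profile whose net index shift around the ring is zero, so no ``exotic'' triangle (three forward, or involving two backward edges) can occur for $m_i\ge 9$. Thus the $3$-cycles are in bijection with the $2m_i$ consecutive pairs $\{e_k,e_{k+1}\}$ on $F$.

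The final step is immediate. A forward-only contingency is exactly a set of edges of $F$ that meets every consecutive pair $\{e_k,e_{k+1}\}$, which is precisely a vertex cover of the cycle graph $C_{2m_i}$ on the vertices $e_1,\ldots,e_{2m_i}$. An even cycle $C_{2m_i}$ has minimum vertex cover number $m_i$, attained by exactly the two alternating classes $\{e_1,e_3,\ldots,e_{2m_i-1}\}$ and $\{e_2,e_4,\ldots,e_{2m_i}\}$ (equivalently, its only maximum independent sets of size $m_i$ are these two), giving exactly two forward-only minimum contingencies, each of size $m_i$, as claimed. I expect the only real obstacle to be the second step — establishing the single-cycle structure of $F$ (which hinges on the parity of $m_i$) together with the exactness of the triangle/consecutive-pair bijection, including the wrap-around backward edges; once that is in place the conclusion is the standard count of minimum vertex covers of an even cycle.
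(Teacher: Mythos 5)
Your proposal is correct and follows essentially the same route as the paper's proof: oddness of $m_i$ makes the forward edges form a single cyclic path through all $2m_i$ nodes, every consecutive pair of forward edges closes into a $3$-cycle with a backward edge, and therefore the only minimum forward-only contingencies are the two alternating classes (all $(v^+,v^-)$ edges or all $(v^-,v^+)$ edges). You merely make explicit two points the paper leaves implicit --- the packaging as minimum vertex covers of the even cycle $C_{2m_i}$, and the check (already asserted in the proof of \autoref{lem:fwd_edge}) that no triangle other than the two-forward-plus-one-backward kind exists --- which is a sound and slightly more careful rendering of the same argument.
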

\begin{proof}
    Since $m_i$ is odd, all forward edges in a local ring form a cyclic path going through all the nodes in $V^+\cup V^-$: 
    $$(a_1^+,b_2^-), (b_2^-,c_3^+), \ldots, (c_m^+,a_1^-), (a_1^-,b_2^+), \ldots,(c_m^-,a_1^+)$$

    Every two consecutive edges belong to a cycle (with the backwards edge), hence a contingency set has to contain one of the two edges. Thus, there are two minimum contingency sets: one consisting of all $(v^+,v^-)$ edges, the other consisting of all $(v^-,v^+)$ edges. 
\end{proof}
We will refer to these two contingency sets as $S_i^+$ and $S_i^-$ respectively, and they are associated with assignments $X_i=\sqll{true}$ and $X_i=\sqll{false}$ respectively.

We have chosen $m_i$ large enough to ensure that each clause containing $X_i$ can be associated with a unique sequence of 9 consecutive pairs of nodes in $G_i$: the first clause in $\phi$ that contains $X_i$ corresponds to nodes $\{v_1^+,\ldots, v_9^+,v_1^-,\ldots, v_9^-\}$ and the edges between them; the second clause in $\phi$ that contains $X_i$ corresponds to nodes $\{v_{10}^+,\ldots, v_{18}^+,v_{10}^-,\ldots, v_{18}^-\}$, and so on. Let $j$ be the starting position of $C$ in the local ring of $X_i$, and suppose that $X_i$ is part of the $k$th literal in $C$ ($k={1,2,\text{ or }3}$). If $L_k = X_i$, then $L_k$ corresponds to edge $(v_{j+k-1}^+,v_{j+k}^-)$ in the ring, else if $L_k = \neg X_i$, then $L_k$ corresponds to edge $(v_{j+k-1}^-,v_{j+k}^+)$. \Autoref{fig:figs_3SATclause} shows the edges corresponding to the literals of clause $C=X\vee Y\vee\neg Z$, assuming that $j$ is the starting position of clause $C$ in all three variable rings. 
Since we have set aside 9 pairs of nodes for this clause, in each of the 3 rings for the corresponding variables, we can always map the first literal to an $(a,b)$ edge ($R$ tuple), the second and third literals to a $(b,c)$ and $(c,a)$ edge respectively ($S$ and $T$ tuples).
\begin{figure}[tb]
    \centering
        \includegraphics[scale = 0.43]{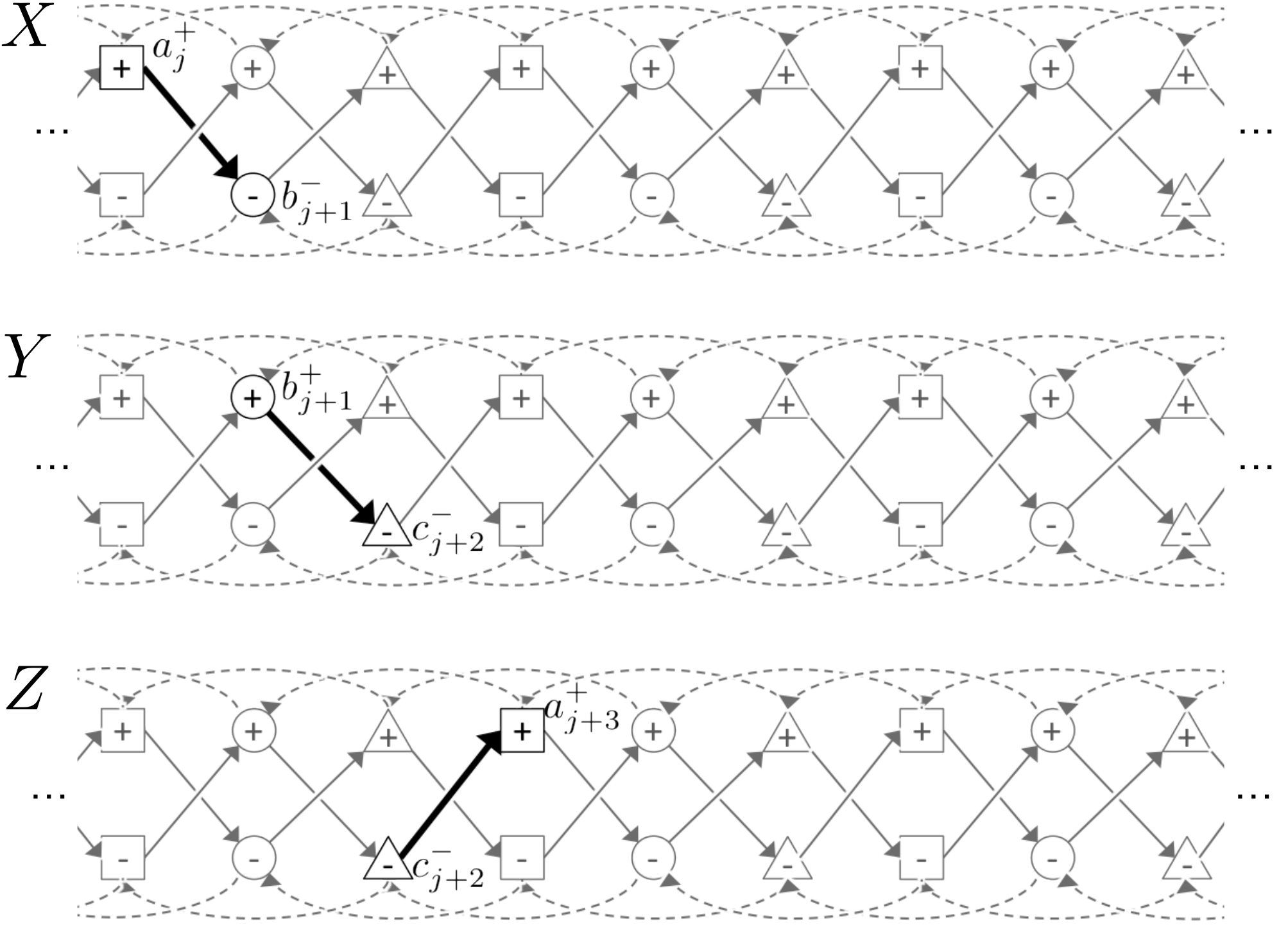}
    \caption{Depiction of clause $C=X\vee Y\vee \neg Z$. If $j$ is the starting position of $C$'s portion in the local rings of all 3 variables, then the literals of $C$ map to the 3 bold edges in the graph}
    \label{fig:figs_3SATclause}
\end{figure}

Let $(a_1,b_1)$, $(b_2,c_2)$, and $(c_3,a_3)$ be the edges corresponding to the 3 literals of a clause $C$ in rings $G_1$, $G_2$ and $G_3$ respectively. Then equate the following pairs of nodes: $a_1\equiv a_3$, $b_1\equiv b_2$, and $c_2\equiv c_3$. In the example of \autoref{fig:figs_3SATclause}, $X.a_j^+\equiv Z.a_{j+3}^+$, $X.b_{j+1}^-\equiv Y.b_{j+1}^+$ and $Y.b_{j+2}^-\equiv Z.c_{j+2}^-$, where $X_i.u$ denotes node $u$ in the ring of $X_i$.  

Equating two nodes simply means collapsing them into a single node, causing the local rings to intersect. Therefore, each clause corresponds to a new cycle in $G_\phi$ comprised of three forward edges from three variable rings. 
In the example of \autoref{fig:figs_3SATclause}, the cycle caused by $C$ is $X.a_j^+ \rightarrow X.b_{j+1}^-\equiv Y.b_{j+1}^+ \rightarrow Y.c_{j+2}^-\equiv Z.c_{j+2}^-\rightarrow Z.a_{j+3}^+ \equiv X.a_j^+$.
No additional cycles can be created due to the ``buffer'' nodes between clause portions in the local rings: a second clause on $X_i$ will map to edges that are distanced by 5 or more nodes from the previous clause edges. Note that one buffer node is actually sufficient and thus the local rings can be made smaller. We chose to have portions of length 9 for each clause to simplify some notation, as now every clause starts with an $a$-node, but that is not actually necessary.

The global graph $G_\phi$ therefore contains all cycles from the local rings $G_i$, plus one cycle for each clause in $\phi$. We will now show that $\phi$ has a satisfying assignment iff $G_\phi$ has a contingency of size $\sum_i m_i$, i.e. equal to the sum of the lengths of all local rings $G_i$.

\begin{lemma}
    The formula $\phi$ has a satisfying assignment iff $G_\phi$ has a contingency of size $\sum_i m_i$.
\end{lemma}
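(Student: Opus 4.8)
The plan is to prove both directions by relating the choice of local-ring contingency $S_i^+$ versus $S_i^-$ to a truth value of $X_i$, and then showing that the extra cycle contributed by each clause is covered by the contingency exactly when the corresponding literal is satisfied. First I would establish a lower bound: every cycle of a local ring $G_i$ lies entirely within $G_i$, and since the edges of $G_\phi$ form the disjoint union of the edge sets of the rings (clauses only identify nodes, never merge edges), the restriction $\Gamma_i$ of any contingency $\Gamma$ to ring $G_i$ must itself be a contingency for $G_i$. By \autoref{lem:contingency_size} this forces $|\Gamma_i| \ge m_i$, hence $|\Gamma| \ge \sum_i m_i$, and a contingency of size exactly $\sum_i m_i$ must use exactly $m_i$ edges in each ring.

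Next I would reduce to forward-only contingencies. Each backward edge of $G_\phi$ still participates in a single local triangle, since clause cycles consist of three forward edges; thus the swap argument of \autoref{lem:fwd_edge} applies verbatim in the global graph, and in a minimum contingency a present backward edge can be exchanged for one of the two forward edges of its triangle without changing the size or uncovering any cycle. I may therefore assume $\Gamma$ uses only forward edges, in which case each $\Gamma_i$ is a forward-only minimum contingency of $G_i$ and hence, by \autoref{lem:contingency_size}, equals either $S_i^+$ or $S_i^-$. This lets me read off an assignment from $\Gamma$: set $X_i=\sqll{true}$ if $\Gamma_i=S_i^+$ and $X_i=\sqll{false}$ if $\Gamma_i=S_i^-$, and conversely build $\Gamma$ from an assignment by the same rule.

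For the ($\Rightarrow$) direction, given a satisfying assignment I take $\Gamma=\bigcup_i S_i^{\pm}$ chosen according to the truth values; it has size $\sum_i m_i$ and covers every local-ring cycle. It remains to cover each clause cycle, which consists of one forward edge from each of its three literals' rings. Here I use that the edge of a positive literal $X_i$ is a $(v^+,v^-)$ edge, all of which lie in $S_i^+$, while the edge of a negative literal $\neg X_i$ is a $(v^-,v^+)$ edge, all of which lie in $S_i^-$. A satisfied literal therefore contributes an edge lying in the chosen $\Gamma_i$, so the clause cycle is hit. For the ($\Leftarrow$) direction I run this matching in reverse: from a size-$\sum_i m_i$ contingency I extract the assignment above, and since $\Gamma$ must intersect every clause cycle, at least one of its three forward edges is selected; the orientation of that edge forces $\Gamma_i$ to equal $S_i^+$ or $S_i^-$, i.e.\ forces the corresponding literal to be true, so the clause is satisfied.

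I expect the main obstacle to be the bookkeeping in the forward-only reduction together with the polarity correspondence: one must verify that after swapping out backward edges the per-ring count stays at $m_i$, so that \autoref{lem:contingency_size} still pins each $\Gamma_i$ to $S_i^+$ or $S_i^-$, and that the edge assigned to each literal really carries the stated orientation ($(v^+,v^-)$ for $X_i$ and $(v^-,v^+)$ for $\neg X_i$) fixed in the clause-gadget construction. Once these are in place, both directions follow as direct consequences of \autoref{lem:fwd_edge} and \autoref{lem:contingency_size}.
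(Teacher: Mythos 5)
Your proof is correct and takes essentially the same route as the paper's: truth values are encoded by the choice of $S_i^+$ versus $S_i^-$ in each local ring, and each clause cycle is covered exactly when one of its literals is satisfied, with the orientation $(v^+,v^-)$ vs.\ $(v^-,v^+)$ carrying the polarity. The only difference is that you spell out two steps the paper leaves implicit behind its forward-edges-only convention -- the per-ring lower bound $|\Gamma_i|\ge m_i$ via edge-disjointness of the rings, and the verification that the backward-to-forward swap remains valid in the global graph because backward edges lie on no clause cycle -- which is a legitimate tightening rather than a different argument.
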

\begin{proof}
    From \autoref{lem:contingency_size} we know that a contingency for $G_i$ cannot be smaller than $m_i$, and in fact will either be the set $S_i^+$, or the set $S_i^-$ (defined after \autoref{lem:contingency_size}).  Any contingency for $G_\phi$ has to eliminate the cycles in $G_i$ as well, and therefore has to contain $S_i^+$ or $S_i^-$.

    We will first show the forward direction. Assume that $\phi$ has a satisfying assignment $A$. Construct set $\mathcal{S}$ by selecting $S_i^+$ or $S_i^-$ for each variable $X_i$ as follows: if $X_i=\sqll{true}$ in $A$, then add the edges from $S_i^+$ to $\mathcal{S}$, otherwise add the edges from $S_i^-$. Then $\mathcal{S}$ has size $\sum_i m_i$. Assume clause $C$, and the cycle of forward edges due to its literals. At least one literal $L$ of $C$ evaluates to \sqll{true} under assignment $A$, since $A$ is a satisfying assignment for $\phi$. If $L=X_i$, then $L$ maps to some edge $e=(v^+,v^-)\in S_i^+$, and since $X_i=\sqll{true}$ in $A$, $e\in\mathcal{S}$. Similarly, if $L=\neg X_i$, then $L$ maps to some edge $e=(v^-,v^+)\in S_i^-$, but again $e\in\mathcal{S}$ since $X_i$ evaluates to \sqll{false}. Therefore, every cycle due to clauses of $\phi$ has an edge in $\mathcal{S}$, and so $\mathcal{S}$ is a contingency for $G_\phi$.

    We will now show the reverse. Let $\mathcal{S}$ be a contingency in $G_\phi$, such that $|\mathcal{S}|=\sum_i m_i$. A contingency in $G_\phi$ also defines a contingency in all $G_i$ of respective sizes $m_i$; $\mathcal{S}$ contains either $S_i^+$ or $S_i^-$, which map to $X_i=\sqll{true}$ and $X_i=\sqll{false}$ respectively. Let $A$ be the assignment based on these values for each $X_i$, and let $C$ be a clause of $\phi$. Since $\mathcal{S}$ is a contingency in $G_\phi$, then at least one edge corresponding to a literal of $C$ is contained in $\mathcal{S}$. Let that literal be $L$ over variable $X_i$. If $L=X_i$, then the edge contained in $\mathcal{S}$ is $e=(v^+,v^-)$, and thus $S_i^+\subset\mathcal{S}$. This means that $X_i=\sqll{true}$ in $A$, so $C$ is satisfied. Similarly, if $L=\neg X_i$, then the edge contained in $\mathcal{S}$ is $e=(v^-,v^+)$, and thus $S_i^-\subset\mathcal{S}$. This means that $X_i=\sqll{false}$ in $A$, and again $C$ is satisfied. Therefore, assignment $A$ satisfies all clauses meaning that $\phi$ is satisfiable. 
\end{proof}
%
%

\end{proof}

\eat{
\begin{lemma}\label{lem:3uni3part}
	Computing the minimum vertex cover in a $3$-partite $3$-uniform hypergraph $H(V,\mathcal{E})$ where $\forall i \neq j \wedge E_i, E_j \in \mathcal{E}: |E_i \cap E_j| \leq 1$, is NP-hard.
\end{lemma}
\begin{proof}[\autoref{lem:3uni3part}]
	We will demonstrate this using a reduction from 3SAT. The proof is basically a modification of the 3-partite 3-uniform hypergraph vertex cover hardness result presented in \cite{Senellart:2008p382}. Let $\phi=\bigwedge_{i=1}^nc_i$ be an instance of 3SAT, where $c_i$ are 3-clauses over some set $X$ of variables. We build a 3-partite 3-uniform hypergraph (with vertex partitions $V=V_1\cup V_2\cup V_3$) as follows: for each variable $x\in X$ we add $12n$ nodes and $6n$ hyperedges to $\mathcal{H}$. $6n$ out of the $12n$ nodes are anonymous nodes appearing in only one hyperedge, and denoted by $\bullet$. Note the difference between this reduction and the one in \cite{Senellart:2008p382}: \cite{Senellart:2008p382} uses 12 nodes per variable, so for variable $x$ it creates nodes $x_1$, $\bar x_1$, $x_2$, $\bar x_2$, $x_3$, $\bar x_3$, as well as 6 anonymous nodes. In our case, we replicate these to the number of all clauses $n$, so if we have 2 clauses, we would create the variables nodes $x_{11}$, $\bar x_{11}$, $x_{21}$, $\bar x_{21}$, $x_{31}$, $\bar x_{31}$, $x_{12}$, $\bar x_{12}$, $x_{22}$, $\bar x_{22}$, $x_{32}$, $\bar x_{32}$, and 12 anonymous nodes.
	
	Note that $\forall j$, $x_{ij}$ and $\bar x_{ij}$ belong to partition $V_i$ ($i=\{ 1,2,3\}$). The subscript $j$ in $x_{ij}$ corresponds to clause $j$, therefore $j\in [1,n]$.  We add local edges between same variable nodes according to the rules in \autoref{fig:hyperedgeRules}.

\begin{figure}[htb]\label{fig:hyperedgeRules}\!\!\!\!\!
	\centering
	\begin{tabular}{c c c l}
		$V_1$ & $V_2$ & $V_3$ &\\
		\cline{1-3}
		$x_{1j}$ & $\bullet$ & $\bar x_{3j'}$ & $_{(j'=(j \mod n) + 1)}$\\
		$\bullet$ & $x_{2j}$ & $\bar x_{3j}$ & \\
		$\bar x_{1j}$ & $x_{2j}$ & $\bullet$ &\\
		$\bar x_{1j}$ & $\bullet$ & $x_{3j}$ &\\
		$\bullet$ & $\bar x_{2j}$ & $x_{3j}$ &\\
		$x_{1j}$ & $\bar x_{2j}$ & $\bullet$ &\\
	\end{tabular}
	\caption{Local hyperedges for variable $x$.}
\end{figure}

We add a global hyperedge for each clause $c_j$ containing vertices that correspond to the variables in $c_j$, taking into account their position in the clause, and whether they are negated. For instance if $c_j=\bar z\vee x \vee y$, we add the hyperedge $(\bar z_{1j},x_{2j},y_{3j})$, which leaves the graph 3-partite. Our construction ensures that any two hyperedges will only have up to 1 node in common. An example of the reduction is given at \autoref{fig:figs_3partReduction} for the formula $(\bar z \vee x \vee y)\wedge(x \vee \bar y \vee z)$.

\begin{figure}[htbp]
	\centering
		\includegraphics[scale=0.48]{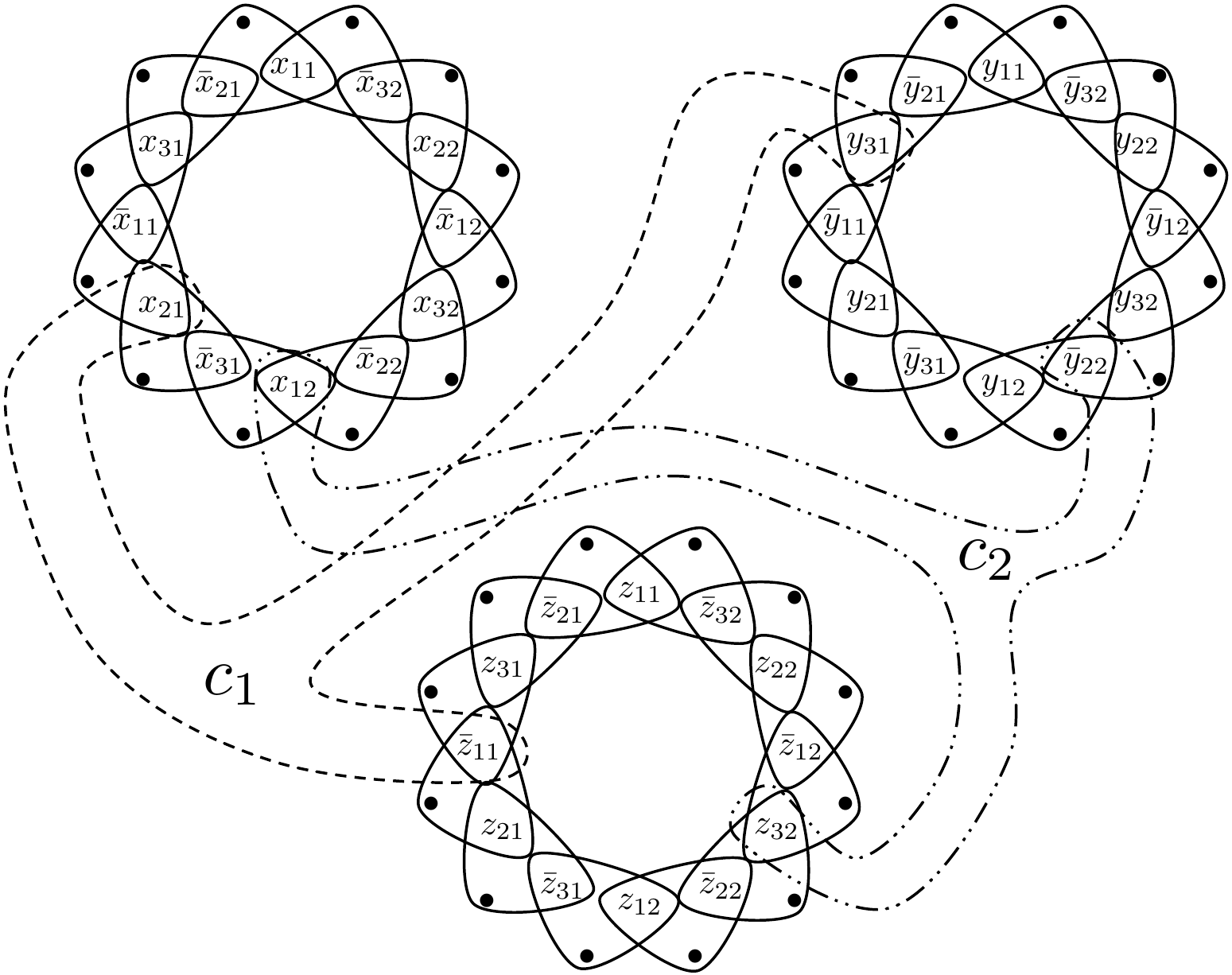}
	\caption{3-partite 3-uniform graph transformation for 3SAT formula $(\bar z \vee x \vee y)\wedge(x \vee \bar y \vee z)$}
	\label{fig:figs_3partReduction}
	\vspace{-0.1in}
\end{figure}

$\phi$ is satisfiable iff there is a vertex cover in $\mathcal{H}$ of size less than or equal to $3nm$, where $m$ the cardinality of $X$ and $n$ the number of clauses. From this point the proof follows exactly the proof of Lemma 14 in \cite{Senellart:2008p382} and is not repeated.
\end{proof}

\begin{proof}[(\autoref{thm:hardQueries} $h_2^* \datarule R(x,y),S(y,z),T(z,x)$)]~\!
We show hardness with a reduction from the 3-partite 3-uniform hypergraph vertex cover where $\forall i \neq j \wedge E_i, E_j \in \mathcal{E}: |E_i \cap E_j| \leq 1$, which was shown to be NP-hard in \autoref{lem:3uni3part}.
	
	Assume a 3-partite 3-uniform hypergraph $\mathcal{H}(V,\mathcal{E})$ such that no two hyperedges share more than one node. Name the partitions $V_R$, $V_S$ and $V_T$. For each node in $V_R$, $V_S$, $V_T$ create a unique tuple in relations $R$, $S$ and $T$ respectively. We will create a graph $G$ as follows: For each node in $r_i\in V_R$ create two nodes $r_i^x$ and $r_i^y$. Similarly create nodes $s_j^y$, $s_j^z$ for each $s_j\in V_S$, and $t_k^x$, $t_k^z$ for each $t_k\in V_T$. For each edge $(r_i,s_j,t_k)\in\mathcal{E}$ create in $G$ edges $(r_i^y,s_j^y)$, $(s_j^z,t_k^z)$ and $(r_i^x,t_k^x)$.
	Each connected component in $G$ contains nodes of equivalent variable types: $x$, $y$ or $z$. Assign the same unique value across all the nodes of each connected component, which will reflect the assignment of the $x$, $y$ and $z$ variables of the $R$, $S$, $T$ tuples. For every two tuples in the same relation, e.g $r_i(x_i,y_i)$ and $r_j(x_j,y_j)$, it is not possible that $x_i=x_j$ and $y_i=y_j$, because hyperedges of $\mathcal{H}$ do not have more than one node in common. That means that all created tuples are distinct, and each join result of $q_2$ corresponds to a hyperedge in $\mathcal{H}$.
	Finally, add tuples $R(x_0,y_0)$, $S(y_0,z_0)$ and $T(z_0,x_0)$, such that $x_0$, $y_0$ and $z_0$ do not match any $x$, $y$ or $z$ values respectively in the database. Computing the responsibility of tuple $R(x_0,y_0)$ corresponds to finding the minimum set of $R$, $S$, $T$ tuples that remove all join results which is equivalent to a vertex cover in $\mathcal{H}$.
	Therefore, computing responsibility for query $q_2$ is NP-hard. 
\end{proof}

}

\begin{proof}[(\autoref{thm:hardQueries} $h_3^*$)]
We prove hardness of $h_3^*$ by a simple reduction from $h_2^*$. We start by writing the two queries as
\begin{align*}
	h_2^* \datarule& R(x,y), S(y,z), T(z,x)\\
	h_3^* \datarule& R'(x',y'), S'(y',z'), T'(z',x'), A'(x'), B'(y'), C'(z')
\end{align*}

Now we transform a database instance for $h_2^*$ into one of $h_3^*$ as follows:
For every tuple $r_i$ in $R(x,y)$, insert $r_i$ as new tuple into $A(x')$. Repeat analogously for each $s_i, t_i$ from $S(y,z), T(z,x)$ and $B'(x'), C'(z')$. Then, for each valuation $\theta = [a/x, b/y, c/z]$ that makes $h_2$ true over $D$, insert $(r_i, s_i)$, $(s_i, t_i)$, and $(t_i, r_i)$ into $R'$, $S'$, and $T'$, respectively, where $(r_i, s_i, t_i)$ represent the tuples in the original $R$, $S$, and $T$ corresponding to $\theta$. Now we have a one-to-one correspondence between a tuples in $R$ and $A'$, $S$ and $B'$, and $T$ and $C'$. Comparing the lineages for these two queries, one sees that $R'$, $S'$, and $T'$ are dominated by $A'$, $B'$, and $C'$, and that the minimal lineages are identical. Hence causes and their responsibility are identical.
\end{proof}

\begin{figure}[t]
	{\scriptsize
	\centering\renewcommand\tabcolsep{3pt}
	\begin{tabular}{r|l|l|l|l|l|l|l|l|l|l|l|l|l|l|l|l|l|l|l|l|l|l|l|}
	\multicolumn{1}{l}{\textbf{D}} 
		& \multicolumn{2}{c}{R} & \multicolumn{1}{l}{ } 
		& \multicolumn{2}{c}{S} & \multicolumn{1}{l}{ } 
		& \multicolumn{2}{c}{T} & \multicolumn{1}{l}{ } \\
	\cline{2-3}\cline{5-6}\cline{8-9}
	\multicolumn{1}{l}{ } 
		& \multicolumn{1}{|l}{X} & \multicolumn{1}{|l|}{Y} & \multicolumn{1}{l}{ } 
		& \multicolumn{1}{|l}{Y} & \multicolumn{1}{|l|}{Z} & \multicolumn{1}{l}{ } 
		& \multicolumn{1}{|l}{Z} & \multicolumn{1}{|l|}{X} & \multicolumn{1}{l}{ }  \\				
	\cline{2-3}\cline{5-6}\cline{8-9}
	\multicolumn{1}{l}{ } 
		$r_1$\!\!	& \multicolumn{1}{|l}{1} & \multicolumn{1}{|l|}{1} & \multicolumn{1}{l}{ } 
		$s_1$\!		& \multicolumn{1}{|l}{1} & \multicolumn{1}{|l|}{1} & \multicolumn{1}{l}{ } 
		$t_1$\! 	& \multicolumn{1}{|l}{1} & \multicolumn{1}{|l|}{1} & \multicolumn{1}{l}{ }  \\				
	\multicolumn{1}{l}{ } 
		$r_2$\!\!	& \multicolumn{1}{|l}{1} & \multicolumn{1}{|l|}{2} & \multicolumn{1}{l}{ } 
		$s_2$\!		& \multicolumn{1}{|l}{1} & \multicolumn{1}{|l|}{2} & \multicolumn{1}{l}{ } 
		$t_2$\!		& \multicolumn{1}{|l}{2} & \multicolumn{1}{|l|}{1} & \multicolumn{1}{l}{ }  \\				
	\cline{2-3}\cline{8-9}
	\multicolumn{1}{l}{ } 
 		& \multicolumn{3}{l}{ } 
		$r_3$\!	& \multicolumn{1}{|l}{1} & \multicolumn{1}{|l|}{1} & \multicolumn{1}{l}{ } 
		& \multicolumn{3}{l}{ }  \\	
	\cline{5-6}					
	\multicolumn{1}{l}{} \\
	\multicolumn{1}{l}{\textbf{D'}} & \multicolumn{2}{c}{A'} & \multicolumn{1}{l}{ } 
		& \multicolumn{2}{c}{B'} & \multicolumn{1}{l}{ } 
		& \multicolumn{2}{c}{C'} & \multicolumn{1}{l}{ } 
		& \multicolumn{2}{c}{R'} & \multicolumn{1}{l}{ } 
		& \multicolumn{2}{c}{S'} & \multicolumn{1}{l}{ } 
		& \multicolumn{2}{c}{T'} & \multicolumn{1}{l}{ } \\
	\cline{2-3}\cline{5-6}\cline{8-9}\cline{11-12} \cline{14-15} \cline{17-18}
	\multicolumn{1}{l}{ } &  \multicolumn{2}{|c|}{X'} & \multicolumn{1}{l}{ } 
		& \multicolumn{2}{|c|}{Y'} & \multicolumn{1}{l}{ } 
		& \multicolumn{2}{|c|}{Z'} & \multicolumn{1}{l}{ } 
		& \multicolumn{1}{|l}{X'} & \multicolumn{1}{|l|}{Y'} & \multicolumn{1}{l}{ } 
		& \multicolumn{1}{|l}{Y'} & \multicolumn{1}{|l|}{Z'} & \multicolumn{1}{l}{ } 
		& \multicolumn{1}{|l}{Z'} & \multicolumn{1}{|l|}{X'} & \multicolumn{1}{l}{ }  \\				
	\cline{2-3}\cline{5-6}\cline{8-9}\cline{11-12} \cline{14-15} \cline{17-18}
	\multicolumn{1}{l}{ } &  \multicolumn{2}{|c|}{$r_1$} & \multicolumn{1}{l}{ } 
		& \multicolumn{2}{|c|}{$s_1$} & \multicolumn{1}{l}{ } 	
		& \multicolumn{2}{|c|}{$t_1$} & \multicolumn{1}{l}{ } 	
		& \multicolumn{1}{|l}{$r_1$} & \multicolumn{1}{|l|}{$s_1$} & \multicolumn{1}{l}{ } 
		& \multicolumn{1}{|l}{$s_1$} & \multicolumn{1}{|l|}{$t_1$} & \multicolumn{1}{l}{ } 
		& \multicolumn{1}{|l}{$t_1$} & \multicolumn{1}{|l|}{$r_1$} & \multicolumn{1}{l}{ }  \\				
	\multicolumn{1}{l}{ } &  \multicolumn{2}{|c|}{$r_2$} & \multicolumn{1}{l}{ } 
		& \multicolumn{2}{|c|}{$s_2$} & \multicolumn{1}{l}{ } 	
		& \multicolumn{2}{|c|}{$t_2$} & \multicolumn{1}{l}{ } 	
		& \multicolumn{1}{|l}{$r_1$} & \multicolumn{1}{|l|}{$s_2$} & \multicolumn{1}{l}{ } 
		& \multicolumn{1}{|l}{$s_2$} & \multicolumn{1}{|l|}{$t_2$} & \multicolumn{1}{l}{ } 
		& \multicolumn{1}{|l}{$t_2$} & \multicolumn{1}{|l|}{$r_1$} & \multicolumn{1}{l}{ }  \\				
	\cline{2-3}\cline{8-9}
	\multicolumn{1}{l}{ } &  \multicolumn{3}{l}{ } 
		& \multicolumn{2}{|c|}{$s_3$} & \multicolumn{1}{l}{ } 	
		& \multicolumn{3}{l}{ } 
		& \multicolumn{1}{|l}{$r_2$} & \multicolumn{1}{|l|}{$s_3$} & \multicolumn{1}{l}{ } 
		& \multicolumn{1}{|l}{$s_3$} & \multicolumn{1}{|l|}{$t_1$} & \multicolumn{1}{l}{ } 
		& \multicolumn{1}{|l}{$t_1$} & \multicolumn{1}{|l|}{$r_2$} & \multicolumn{1}{l}{ }  \\				
	\cline{5-6}\cline{11-12} \cline{14-15} \cline{17-18}
	\end{tabular}}
\caption{Example instance $D$ for query $h_2^*$ and corresponding instance $D'$ for query $h_3^*$.}
\end{figure}

\section{Responsibility Dichotomy} 
\label{appendix:responsibilityDichotomy}

\begin{proof}[(\autoref{thm:linearPTIME})]
    It is straightforward from \specificref{Algorithm}{alg:flowTransform}. The flow graph constructed has one edge per database tuple. The capacities of exogenous tuples are $\infty$ and all other tuples have capacity one. Every unit of s-t flow corresponds to an output tuple of $q$. It is impossible that a flow does not correspond to a valid output tuple, as the partitions are ordered based on the linearization. This means that if a variable is chosen it will not occur again later in the flow, and therefore we can't have invalid flows going through one value of $x$ in one partition and another in a later one. The steps of the algorithm: construction of the hypergraph, linearization, flow transformation, and computation of the maximum flow are all in PTIME and therefore responsibility of linear queries can be computed in PTIME.
\end{proof}

\begin{proof}[(\autoref{lemma:hard})]
    \underline{Case 1:} $q'$ resulted from variable deletion ($q \rewrite q[\emptyset/x]$). Then we can polynomially reduce $q'$ to $q$ by setting variable $x$ to the constant $a$. Therefore, if $q$ is in PTIME, $q'$ is also in PTIME.
    
    \underline{Case 2:} $q'$ resulted from rewrite ($q \rewrite q[(x,y)/x]$). Assume $q\datarule g_1(x,\ldots)g_2(x,y,\ldots)q_0$, then $q'\datarule g_1'(x,y,\ldots)g_2'(x,y,\ldots)q_0'$.
    Reduce $q'$ to $q$ as follows: For each subgoal $g_1'\in q'$, create a unique value $(x_i,y_j)$ for each tuple $g_1'(x_i,y_j)$, and assign these as the tuples of $g_1(x)$ in $q$. Similarly, create a unique value $(x_i,y_j)$ for each tuple $g_2'(x_i,y_j)$, and assign each as a tuple $g_2((x_i,y_j),y_j)$ to form relation $g_2$. Both queries have the same output and the same contingencies. Therefore, if $q$ is easy, $q'$ also has to be easy.
    
    \underline{Case 3:} $q'$ resulted from atom deletion ($q \rewrite q - \set{g}$). 
    Assume $q\datarule g_1,\ldots,g_m$ and $q'\datarule g_1,\ldots g_{j-1},g_{j+1},\ldots,g_m$. $q'$ differs from $q$ in that it misses atom $g_j(\bar y)$. The atom can be deleted with a rewrite only if it is exogenous, or $\exists g_i(\bar x)$, with $\bar x\subseteq\bar y$.
    
    We will reduce responsibility for $q'$ to $q$:
    Take the output tuples of $q'$ and project on $\bar y$. Assign the result to $g_j$. If $g_j$ is exogenous or if $\bar x\subset\bar y$, then tuples from $g_j$ are never picked in the minimum contingency. If $\bar x=\bar y$, no minimum contingency will have the same tuple from $g_i$ and $g_j$. Therefore the contingency tuples from the 2 subgoals can be mapped to one of them, creating a contingency set for $q'$ Therefore, if $q$ is in PTIME, $q'$ is also in PTIME.
\end{proof}

\begin{proof}[\autoref{lemma:weakening}]
    A weakening $q\weakening\hat q$ also results in a new database instance $\hat D$. A weakening does not create any new result tuples, and does not alter the number of tuples of any endogenous non-dominated atoms. Therefore contingencies in $\hat q,\hat D$ only contain tuples from non-weakened atoms making any contingency for $q$ is also a contingency for $\hat q$ and vice versa.
    If any weakening results in a linear query $\hat q$, \specificref{Algorithm}{alg:flowTransform} can be applied to solve it in PTIME, meaning that weakly linear queries are in PTIME.
\end{proof}

\begin{lemma}[Containment]\label{lem:containment}
    If $q$ is final, then $\forall x,y$, $\sg(x)\not\subseteq\sg(y)$, where $\sg(x)$ the set of subgoals of $q$ that contain $x$.
\end{lemma}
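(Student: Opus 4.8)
The plan is to argue by contradiction straight from the definition of \emph{final}. Suppose $q$ is final and yet there exist two \emph{distinct} variables $x,y\in\Var(q)$ with $\sg(x)\subseteq\sg(y)$. (The case $x=y$ is vacuous, since $\sg(x)\subseteq\sg(x)$ always holds, so the statement is really about distinct variables.) Since $x$ genuinely occurs in $q$ we have $\sg(x)\neq\emptyset$, so I may pick an atom $g\in\sg(x)\subseteq\sg(y)$; this atom contains both $x$ and $y$, which is exactly the side condition required to apply the \textsc{Add} rule of \autoref{def:rewriting} with trigger variable $x$ and added variable $y$. I would therefore consider the single rewriting $q\rewrite q[(x,y)/x]$.

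The key observation is that, under the containment hypothesis, this \textsc{Add} step changes nothing. \textsc{Add} inserts $y$ only into the atoms of $\sg(x)$, but every such atom already contains $y$ -- that is precisely what $\sg(x)\subseteq\sg(y)$ asserts -- and adding a variable that is already present is idempotent, exactly as in the paper's own rewriting example, where the step ``add $x$'' leaves the atom $K(u,x)$ untouched. Hence $q[(x,y)/x]$ and $q$ have identical atom--variable incidences, i.e.\ the same dual query hypergraph $\mathcal{H}^{D}$ and the same endogenous/exogenous labelling. Because linearity is a property of $\mathcal{H}^{D}$ alone, and because both weakening operations of \autoref{def:weakening} (dissociation and domination) depend only on the variable \emph{sets} of the atoms and on their exogenous/endogenous status, weak linearity is invariant under an incidence-preserving change: $q[(x,y)/x]$ is weakly linear iff $q$ is. Now finality says that every one-step rewriting of $q$ is weakly linear; applying this to $q\rewrite q[(x,y)/x]$ yields that $q[(x,y)/x]$, and therefore $q$ itself, is weakly linear. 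This contradicts the first clause of finality, namely that $q$ is not weakly linear. Hence no such pair $x,y$ can exist, which is the claim.

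The one delicate point, and the step I expect to be worth stating carefully, is the status of this ``no-op'' rewriting together with the invariance of weak linearity. Under the literal reading of \autoref{def:rewriting} the \textsc{Add} rule is applicable as soon as some atom carries both $x$ and $y$, so $q\rewrite q[(x,y)/x]$ is a bona fide one-step rewriting even though it returns $q$ unchanged, and finality's universal quantifier then covers it. I would make this explicit and note that the argument is robust: whether one reads the step as literally returning $q$ or as formally producing a higher-arity variant, the resulting query has the same $\mathcal{H}^{D}$ and the same labels, hence the same weak-linearity status, which is all the argument uses. A more pedestrian alternative would instead start from $q\rewrite q[\emptyset/x]$ (\textsc{Delete} $x$), which finality forces to be weakly linear, and then try to show that re-introducing a variable $x$ whose support is contained in that of $y$ preserves weak linearity. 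That route, however, runs into a genuinely harder sub-lemma: re-adding $x$ would have to be realised by \textsc{dissociation} into the gaps of the interval occupied by $y$ in a linearization, and the interaction with \textsc{domination} (which can strip the exogenous status one needs for dissociation) makes it considerably more involved than the \textsc{Add} argument above.
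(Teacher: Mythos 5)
Your argument stands or falls on a single point: whether the vacuous application of \textsc{Add} --- adding $y$ to the atoms of $\sg(x)$ when every such atom already contains $y$ --- counts as a rewriting step $q\rewrite q'$ to which finality's universal quantifier applies. Under the semantics the rest of the paper needs, it does not, and the paper tells you so indirectly: the proof of the dichotomy corollary asserts that any sequence of rewritings ``must terminate as any rewriting results in a simpler query.'' If $q\rewrite q$ were a legal step (your literal no-op reading), or if \textsc{Add} could pad an atom with a duplicate occurrence of $y$ (your higher-arity fallback), then $q\rewrite q\rewrite q\rewrite\cdots$ would be an infinite rewrite sequence and that termination argument --- on which the dichotomy theorem rests --- would collapse. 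Moreover, the fallback is excluded by the very example you cite: in the step ``add $x$'' the atom $K(u,x)$ is left untouched, i.e.\ \textsc{Add} is set-semantic and idempotent on atoms already containing the variable, so there is no higher-arity variant for the rule to produce. Under the reading consistent with the paper's other proofs --- rewriting steps are proper, set-semantic operations --- the hypothesis $\sg(x)\subseteq\sg(y)$ makes your chosen \textsc{Add} step \emph{unavailable} rather than vacuously harmless, finality then says nothing about it, and the contradiction never materializes. (Your subsidiary claims are fine: the side condition of \textsc{Add} in \autoref{def:rewriting} is indeed met, weak linearity is indeed invariant under incidence-preserving changes, and the restriction to distinct $x,y$ is the right reading of the statement; the failure is solely that the step you feed to finality is not a rewriting in the intended sense.)

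The actual content of the lemma lives exactly in the route you set aside as ``considerably more involved'' --- and the paper's proof confirms your difficulty estimate by carrying it out. It applies finality to the \emph{proper} rewritings $q_1 = q[\emptyset/x]$ and $q_2 = q[\emptyset/y]$, obtains weakenings $W_1,W_2$ with linearizations $L_1,L_2$, and then merges them: it analyzes the connected components of $q_3$ (both variables deleted), shows each component appears in $L_2$ either contiguously or separated only by exogenous atoms, extends $W_2$ with extra dissociations $W_x$ so that the span of $x$ is clean, partitions the components into $C^+_x$, $C^-_x$, $C^*_x$ with $|C^*_x|\le 2$, and concatenates component linearizations into a linear order for a weakening of $q$ itself --- using the containment hypothesis $\sg(x)\subseteq\sg(y)$ precisely to convert $x$-dissociations into $y$-dissociations ($W_{x\rightarrow y}$) when a component straddles both $C^*_x$ and $C^*_y$. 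This contradicts that $q$ is not weakly linear. None of this machinery appears in your proposal; to repair it you would need either to justify, against the paper's own termination argument, that vacuous rewritings are admitted into \autoref{def:rewriting}, or to carry out the merging argument you correctly identified as the hard part.
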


\begin{proof}
    We will prove by contradiction. Assume that $\sg(x)\subseteq\sg(y)$. Denote as $q_1$ the rewrite $q\rewrite q[\emptyset /x]$, and $q_2$ the rewrite $q\rewrite q[\emptyset /y]$. Since $q$ is final, both $q_1$ and $q_2$ are weakly linear. That means that there are weakenings, denote them with $W_1$ and $W_2$, of $q_1$ and $q_2$ respectively, that produce linear queries $\hat q_1$ and $\hat q_2$. The weakenings $W_1$ and $W_2$ are sets of dissociation and domination operations to queries $q_1$ and $q_2$. $\hat q_1$ and $\hat q_2$ determine linear orderings $L_1$ and $L_2$ of the subgoals of $q_1$ and $q_2$ respectively. All subgoals from $\hat q_1$ containing some variable $z$ appear consecutive in $L_1$, but that may not be true for $q_1$. Similarly for $\hat q_2$ and $L_2$.
    
    Assume a third query $q_3$ as the rewrite $q_2\rewrite q_2[\emptyset /x]$, or equivalently $q_1\rewrite q_1[\emptyset /y]$. Note that any relation that is dominated in $q_1$ or $q_2$ is also dominated in $q_3$.
    Define the connected components of $q_3$ as the set $\mathcal{C}_3=\set{C_1, C_2,\ldots, C_k}$ of maximum cardinality, such that $\forall i\neq j$, $C_i$ and $C_j$ have no variables in common. We denote with $q_3[W_2]$ the application to $q_3$ of all weakening operations defined in $W_2$ that are valid for $q_3$. For example a dissociation that uses variable $x$ is not valid for $q_3$, as $x\not\in q_3$. Similarly, $g_i[W_2]$ denotes the subgoal $g_i$ after the application of all weakening operations defined in $W_2$ to query $q_2$.
    
    Assume that $q_2$ and $q_3$ have the same connected components, and $\hat q_3 \datarule q_3[W_2]$. Then, $\hat q_3$ is equivalent to the rewrite $\hat q_2\rewrite\hat q_2[\emptyset /x]$, and therefore also linear. 
    
    Even if the connected components are not the same, two components do not share the same variables. Therefore, a dissociation operation that relied on a neighborhood that disappeared with the removal of $x$ would have only served to transfer a variable from one component to another. This obviously does not affect the linearity of the result in the case of $q_3$.    
    This means that $q_3$ is weakly linear, and $\hat q_3 \datarule q_3[W_2]$ offers a linearization.
    
    We will now use the connected components $\mathcal{C}_3$ of $q_3$ to define linear orders for $q_2$ and $q_1$.
    
    We map each $C_i\in\mathcal{C}_3$ to the linear order $L_2$. $C_i$ may appear fragmented (not contiguous) in $L_2$. We call a fragment of a component $C_i$ in a sequence $L$, a maximal length subsequence of $L$, such that every subgoal in the subsequence appears in $C_i$. An example is shown in \autoref{fig:figs_Fig_containment}.

    \begin{figure}[htb]
        \centering
            \includegraphics[scale=0.65]{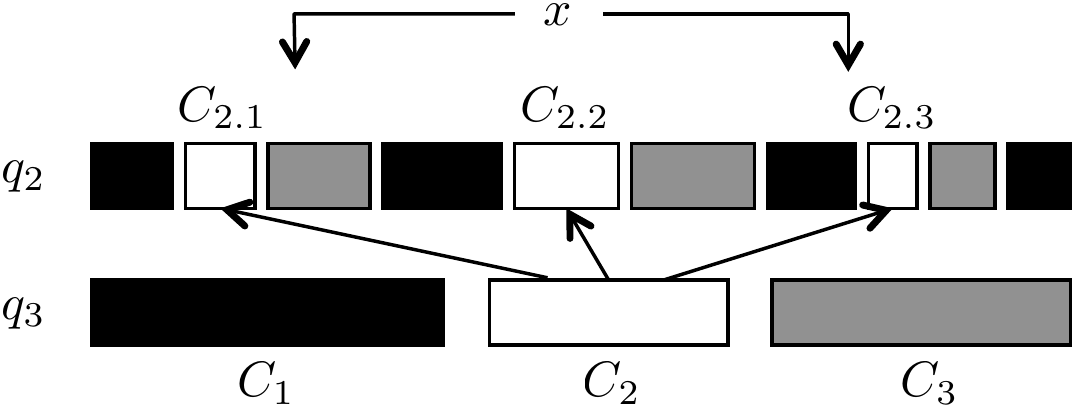}
        \caption{$q_2 = q\rewrite q[\emptyset /y]$ and $q_3 = q_2\rewrite q_2[\emptyset /x]$.}
        \label{fig:figs_Fig_containment}
    \end{figure}

    Assume a component $C_i$ which appears fragmented in $L_2$, and $C_{i.1}$ and $C_{i.2}$ two of its fragments. Let $S$ be the set of subgoals separating $C_{i.1}$ and $C_{i.2}$. All subgoals in $S$ belong to a component other than $C_i$, and therefore share no variables with $C_{i.1}$ or $C_{i.2}$. Assume that there is a subgoal $g_j\in S$ that is not modified by the weakening $W_2$, and therefore contains no variables from $C_i$. If that were the case, then it would be impossible for all variables of $C_i$ to appear consecutively in $L_2$ which is impossible as $L_2$ is a linear order for $\hat q_2$. Therefore, all subgoals of $S$ are modified by $W_2$, and are therefore exogenous (or dominated, and therefore made exogenous).
    Therefore, any connected component of $q_3$ either appears contiguous in $L_2$, or any subgoals separating 2 fragments are exogenous. 
    
    Denote with $L_x$ the subsequence of $L$ containing the ``span'' of variable $x$: $\forall g_i\in L_x$, $x\in g_i[W_2]$, and $\forall g_i\not\in L_x$, $x\not\in g_i[W_2]$. Let $W_x$ be the set of dissociations that propagate $x$ to all the exogenous subgoals adjacent to $L_x$. Let $W_2' = W_2\cup W_x$. Then $\hat q_2'\datarule q_2[W_2']$ is also a linear query, and $L_2$ is a linear order for it.
    
    Since $\hat q_2'$ has $x$ propagated to all adjacent exogenous subgoals, there can only be up to 2 components of $\mathcal{C}_3$ that only partially contain $x$ in $\hat q_2'$. For the rest of the components, either all of their subgoals contain $x$ in $\hat q_2'$ or none of them does. Partition $\mathcal{C}_3$ as follows:
    \begin{align*}
        &C^+_x &=& \set{C_i \,|\, \forall g\in C_i, x\in g[W_2']}\\
        &C^-_x &=& \set{C_i \,|\, \forall g\in C_i, x\not\in g[W_2']}\\
        &C^*_x &=& \set{C_i \,|\, \exists g_r,g_t\in C_i\; s.t.\; x\in g_r[W_2']\; and\; x\not\in g_t[W_2']}
    \end{align*}
    
    As noted, $|C^*_x|\leq 2$. Mapping $\mathcal{C}_3$ to $L_1$ using the same logic, produces sets $C^+_y$, $C^-_y$ and $C^*_y$ for variable $y$ and weakening $W_1'$, which is defined similarly as $W_2'$. Since $\sg(x)\subseteq\sg(y)$, it is $C^+_x\subseteq C^+_y$ and $C^*_x\subseteq C^+_y\cup C^*_y$.
    
    Denote with $C_i[W_j]$ the application of weakening $W_j$ to all subgoals of component $C_i$. Then $\forall C_i\in\mathcal{C}_3$, $C_i[W_1']$ and $C_i[W_2']$ are both linear. Therefore, $q_3[W_1']$ and $q_3[W_2']$ are also linear. Moreover, the components $\mathcal{C}_3$ can be freely rearranged in the linear order, and linearity is always preserved.
   
   Let $C_y^{\neg x} = C^+_y - (C^+_x\cup C^*_x)$, $C_y^l$ and $C_y^r$ the two components in $C_y^*$, and $C_x^l$ and $C_x^r$ the two components in $C_x^*$.

   First, assume $C^*_x\cap C^*_y = \emptyset$, and apply $W'$ to $q$.
   Then it is straightforward that the following is a linear order with respect to all the variables, including $x$ and $y$:
   
  {\small
  \[
  \boxed{C^l_y[W_1']\rule[-1.5mm]{0pt}{.45cm}} \boxed{C^l_x[W_2']\rule[-1.5mm]{0pt}{.45cm}} \boxed{C^+_x[W_2']\rule[-1.5mm]{0pt}{.45cm}} \boxed{C^r_x[W_2']\rule[-1.5mm]{0pt}{.45cm}} \boxed{C^{\neg x}_y[W_1']\rule[-1.5mm]{0pt}{.45cm}} \boxed{C^r_y[W_1']\rule[-1.5mm]{0pt}{.45cm}} \boxed{C^-_y[W_1']\rule[-1.5mm]{0pt}{.45cm}}
  \]}
    
We now examine the case $C^*_x\cap C^*_y\neq\emptyset$. Assume $C_i$ a connected component of $\mathcal{C}_3$, such that $C_i\in C^*_x$ and $C_i\in C^*_y$. $C_i[W_1']$ is in linear order for all variables except possibly for $x$, and $C_i[W_2']$ is in linear order for all variables except possibly for $y$. Let $W_x$ be those weakenings in $W_2'$ that dissociate atoms by adding $x$ to their variable set. Since $\sg(x)\subseteq\sg(y)$, it is possible to also add $y$ to the same atoms. Let $W_{x\rightarrow y}$ be the same weakenings as in $W_x$, but with variable $y$ instead of $x$. Then in $q_{C_i}\datarule C_i[W_2'\cup W_{x\rightarrow y}]$ it also holds that $\sg(x)\subseteq\sg(y)$.

Let $W_y$ be those weakenings in $W_1'$ that dissociate atoms by adding $y$ to their variable set. Then $W_y$ can still be applied to $q_{C_i}$, as atoms that were neighbors in $q_1$ are also neighbors in $q_{C_i}$. Therefore, $q_{C_i}[W_y]$ is linear for all variables including $y$. Thus, the same order of components as the one above, but with $q_{C_i}[W_y]$ replacing $C^l_y[W_1']$ and $C^l_x[W_2']$ (equivalently for $C^r_y[W_1']$ and $C^r_x[W_2']$), will be linear in all variables including $x$ and $y$. But that would mean that $q$ is weakly linear, as there is a weakening for it that is linear, which is a contradiction. Therefore, $\sg(x)\not\subseteq\sg(y)$.
\end{proof}

\begin{lemma}\label{lem:3vars}
    If $q$ is final, then it has exactly 3 variables.
\end{lemma}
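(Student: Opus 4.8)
The plan is to prove the two bounds $|\Var(q)| \ge 3$ and $|\Var(q)| \le 3$ separately; the first is routine and the second carries all the difficulty.

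First I would dispose of the lower bound by showing that any query with at most two variables is already linear, hence weakly linear, hence not final. If $q$ has a single variable, every atom contains it, so the dual hypergraph consists of one hyperedge covering all vertices and every ordering is a linearization. If $q$ has two variables $x,y$, I would order the atoms in four blocks --- those containing $x$ but not $y$, those containing both, those containing $y$ but not $x$, and those containing neither --- so that the span of $x$ is the first two blocks and the span of $y$ is the second and third, both consecutive. Thus $q$ is linear, and since a final query is by definition not weakly linear, $q$ must have at least three variables.

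For the upper bound I would argue by contradiction: assume $q$ is final with at least four variables and construct a weakening of $q$ that is linear, contradicting non-weak-linearity. The construction reuses the machinery of \autoref{lem:containment}. I pick two variables $x,y$, set $q_1 \datarule q[\emptyset/x]$ and $q_2 \datarule q[\emptyset/y]$ (both weakly linear by finality), and fix weakenings $W_1,W_2$ producing linear orders $L_1,L_2$, where $L_1$ respects every variable except possibly $x$ and $L_2$ every variable except possibly $y$. The doubly deleted query $q_3 \datarule q[\emptyset/x][\emptyset/y]$ still has at least two variables because $q$ has at least four, so its connected components $\mathcal{C}_3$ form a nontrivial family that is freely rearrangeable in any global order while each component stays internally linear with respect to every variable except possibly $x$ (under $W_1$) or $y$ (under $W_2$). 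After propagating $x$ and $y$ into their adjacent exogenous atoms by additional dissociations, I would partition $\mathcal{C}_3$ into classes $C^+_x, C^-_x, C^*_x$ (and symmetrically for $y$) of components that contain the variable fully, not at all, or only partially, with $|C^*_x|, |C^*_y| \le 2$; placing the bounded partial classes at the ends of the respective blocks yields, in the case $C^*_x \cap C^*_y = \emptyset$, an explicit linear order of $q$ exactly as in \autoref{lem:containment}, the desired contradiction.

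The main obstacle, again as in \autoref{lem:containment}, is a component lying in $C^*_x \cap C^*_y$, which must be broken at both ends simultaneously for $x$ and for $y$. Crucially, here I cannot invoke a containment relation to reconcile the two weakenings, since \autoref{lem:containment} forbids it ($\sg(x) \not\subseteq \sg(y)$ in a final query); this is precisely where having at least four variables must do the work. I expect to exploit the fact that such a conflicting component already carries its own variables inside $q_3$, which impose a linear order on it independent of $x$ and $y$, and to use this internal order --- together with the freedom to reroute the dissociations of $x$ onto the same atoms as those of $y$ --- to realize the $x$-fragment and the $y$-fragment compatibly. Establishing that the abundance of variables always permits such a compatible placement (equivalently, that one may choose the pair $x,y$ so that no component is conflicting) is the crux of the argument; once it is done, $q$ is weakly linear, contradicting finality, so $q$ has at most three variables and, with the lower bound, exactly three.
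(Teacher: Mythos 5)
Your lower bound (at most two variables implies linear) is correct and coincides with the paper's first step. The upper bound, however, has a genuine gap, and it sits exactly where you flag it. In \autoref{lem:containment}, the reconciliation of a component lying in $C^*_x\cap C^*_y$ is carried out by replaying the $x$-dissociations of $W_2'$ as $y$-dissociations ($W_{x\rightarrow y}$) \emph{on the same atoms}, and this move is licensed precisely by the hypothesis $\sg(x)\subseteq\sg(y)$. In your setting that hypothesis is unavailable (finality forbids it), and ``I expect to exploit the internal order of the component'' is not a substitute argument. Worse, the merging claim cannot be repaired at the level of generality you work at: your outline uses only these consequences of finality --- $q$ is not weakly linear, while $q[\emptyset/x]$ and $q[\emptyset/y]$ are --- together with $|\Var(q)|\ge 4$. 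The four-cycle $q\datarule \en{R}(x,y),\en{S}(y,z),\en{T}(z,u),\en{K}(u,x)$ satisfies all of these: it has four variables, every single-variable deletion is linear (e.g.\ $q[\emptyset/x]$ linearizes as $R,S,T,K$), yet no weakening applies at all (no exogenous atoms, no dominations among the binary atoms), and $q$ itself is not linear, since its dual hypergraph is a $4$-cycle and a sequence of four atoms has only three adjacent pairs. Indeed the paper shows this very query is NP-hard by rewriting it to $h_2^*$. So for this $q$ your construction must fail for \emph{every} choice of the pair $x,y$ --- its output would be a linearization of a weakening of $q$, which does not exist --- refuting both your conflict-resolution route and the ``choose a conflict-free pair'' fallback. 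What excludes the four-cycle from being final is that some of its \textsc{Add}-rewrites are not weakly linear; any correct proof of the lemma must therefore invoke the \textsc{Add} (and \textsc{Delete} atom) rewrites, i.e.\ strictly more of the finality hypothesis than the two deletions your argument consumes.

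This is in fact how the paper proceeds, and its route is different in kind from yours: instead of assembling a linearization of $q$, it assumes $|\Var(q)|>3$ and shows that $q$ admits a rewrite chain to a hard query, contradicting finality via \autoref{lemma:hard} (a final query's rewrites are weakly linear, hence in PTIME). The proof splits on the shape of $q$: if all atoms have arity at most $2$, the dual hypergraph of a non-linear query must have a corner point or a cycle, and in each configuration (four subcases by how many non-dominated atoms are binary) repeated \textsc{Add} steps collapse the chains of binary atoms into single atoms, after which \textsc{Delete} steps land on $h_1^*$, $h_2^*$, or $h_3^*$ of \autoref{thm:hardQueries} (or on a query to which $h_1^*$ reduces); if some atom has arity at least $3$, \autoref{lem:containment} --- used there only through its conclusion $\sg(x)\not\subseteq\sg(y)$ --- forces the existence of witnessing atoms $A,B,C$, and deleting the surplus variables yields $h_1^*$ or $h_2^*$ directly. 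So keep your lower bound as is, but replace the merged-linearization plan for the upper bound with the rewrite-to-hard-query case analysis: the structural information you would need to make the merge work is exactly what the explicit case analysis supplies.
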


\begin{proof}
    First of all, if $q$ has 2 or fewer variables, then it is linear, and therefore cannot be final. So, $q$ must have at least 3 variables. We will assume that $\Var(q)>3$ and prove the lemma by contradiction. We will reach contradiction by applying rewrites to $q$, and showing hardness of the resulting query, which means that $q$ cannot be final. There are two possible cases: either all atoms are unary or binary, or there exists an atom with 3 or more variables. We examine the cases separately.
    
    \introparagraph{Case 1} $\forall g_i\in q$, $\Var(g_i)\leq 2$.\\
    Since $q$ is final, it cannot be linear. For non-linearity $q$ needs to have at least 3 non-dominated atoms.
    Also, since every atom has at most 2 variables, $q$ has to be either cyclic, or have a ``corner point'' in its dual hypergraph like the one shown in \autoref{fig:figs_Fig_cornerPoint}. A corner point in the dual hypergraph is defined as a hyperedge which intersects, but does not contain, at least three other hyperedges. We will refer to these as \emph{branches} of the corner point.

    \begin{figure}[htb]
        \centering
            \includegraphics[scale = 0.3]{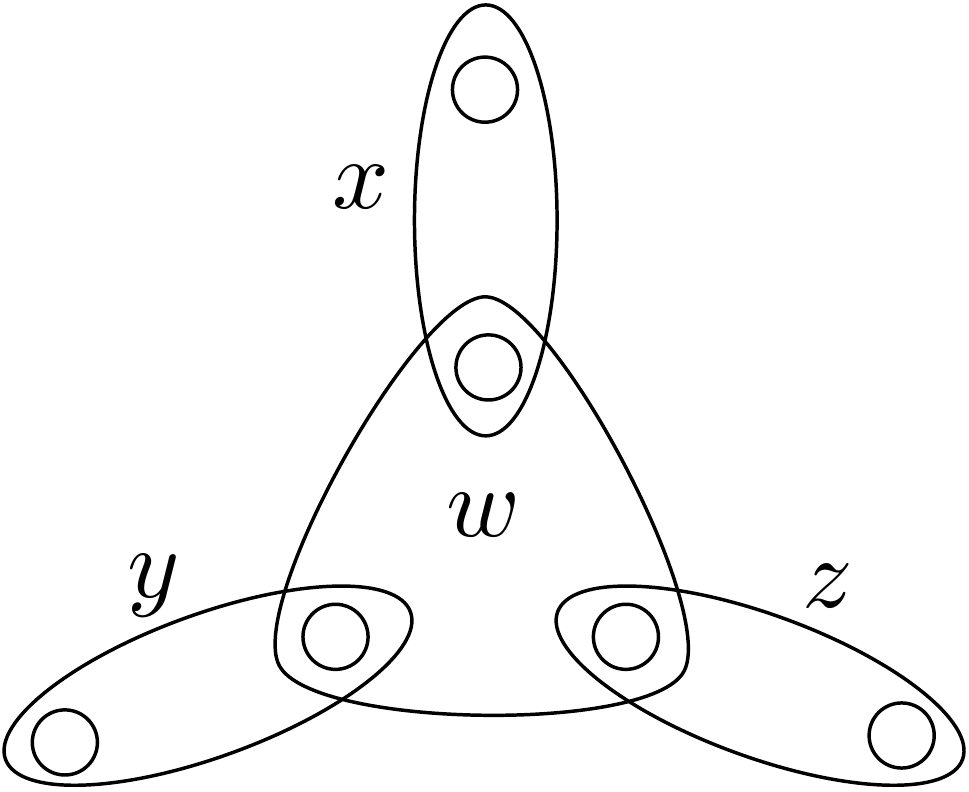}
        \caption{Variable $w$ is a corner point for a query where $\Var(g_i)\leq 2$.}
        \label{fig:figs_Fig_cornerPoint}
    \end{figure}
    
    \underline{Case A:} (The dual hypergraph of $q$ has a corner point)\\
    The three non-dominated atoms have to be on separate branches of the corner point. Assume variable $w$ is the corner point, with the following three corner point branches that together form query $q$:
    \begin{align*}
        &\ldots A(x,u), R_1(x,w_1), R_2(w_1,w_2),\ldots, R_i(w_{i-1},w)\\
        &\ldots B(y,u'), S_1(y,w_1'), S_2(w_1',w_2'),\ldots, S_j(w_{j-1}',w)\\
        &\ldots C(z,u''), T_1(z,w_1''), T_2(w_1'',w_2''),\ldots, T_j(w_{k-1}'',w)
    \end{align*}
    Note that there has to be at least one $R$, one $S$ and one $T$ tuple in order for $w$ to be a corner point.
    W.l.o.g assume that $A$, $B$, $C$ are non-dominated atoms. We can ignore the $u$ variables if the non-dominated atoms are unary. 
    We apply the following rewrites to $q$, $\forall t$:
    \begin{align}
        &q\rewrite q[(w_{t-1}, w_t)/w_t] &&q\rewrite q[(w_{t}, w_{t-1})/w_{t-1}]\label{rewrite1}\\
        &q\rewrite q[(w_{t-1}', w_t')/w_t'] &&q\rewrite q[(w_{t}', w_{t-1}')/w_{t-1}']\label{rewrite2}\\
        &q\rewrite q[(w_{t-1}'', w_t'')/w_t''] &&q\rewrite q[(w_{t}'', w_{t-1}'')/w_{t-1}'']\label{rewrite3}
    \end{align}
    Now all the $R$ tuples contain variables $w_1,\ldots w_{i-1}$, and equivalently for the $S$ and $T$ atoms. We also transfer variables $x$ and $w$ with the following rewrites:
    \begin{align*}
        &q\rewrite q[(x, w_1)/w_1] &&q\rewrite q[(w, w_1)/w_1]\\
        &q\rewrite q[(y, w_1')/w_1'] &&q\rewrite q[(w, w_1')/w_1']\\
        &q\rewrite q[(z, w_1'')/w_1''] &&q\rewrite q[(w, w_1'')/w_1'']
    \end{align*}  
    
    The rewrites transform all of $R_1, R_3,\ldots R_i$ to the same atom $R(x,w_1,w_2,\ldots,w_{i-1},w)$. Equivalently for the $S$ and $T$ atoms. 
    Finally, remove all variables $u$ other than $(x,y,z,w)$ by applying the rewrite $q\rewrite q[\emptyset/u]$. Therefore, after the rewrites the query becomes:
    $$
        q'\datarule A(x),B(y),C(z),R(x,w),S(y,w),T(z,w)
    $$
    We can reduce $h_1^*$ to $q'$ as follows: Atoms $A$, $B$ and $C$ remain unchanged. For each tuple $W(x,y,z)$ we assign a unique value $w$: $(x,y,z,w)$. We get relation $R$ by projecting on $(x,w)$, and similarly for $S$ and $T$. Responsibility of a tuple in $h^*_1$ is the same as the responsibility of the tuple in $q'$. Therefore $q'$ is hard, which means that $q$ cannot be final.

    \underline{Case B:} (The dual hypergraph of $q$ is cyclic)\\
We will distinguish between 4 possibilities for the non-dominated relations: (a) all of them are unary, (b) exactly one is binary, (c) exactly two are binary, and (d) all of them are binary.    

(a) Let $A$, $B$ and $C$ be the non-dominated atoms. Then $q$ is of the form:
\begin{align*}
    &A(x), R_1(x,w_1), R_2(w_1,w_2),\ldots, R_i(w_{i-1},y)\\
    &B(y), S_1(y,w_1'), S_2(w_1',w_2'),\ldots, S_j(w_{j-1}',z)\\
    &C(z), T_1(z,w_1''), T_2(w_1'',w_2''),\ldots, T_j(w_{k-1}'',x)
\end{align*}
We apply the rewrites \eqref{rewrite1},\eqref{rewrite2} and \eqref{rewrite3} from case A. Since we know that $\Var(q)>3$, we are guaranteed that at least one rewrite will happen, as there must exist at least one $w$ variable. The rewrites transform all of $R_1,\ldots R_i$ to the same atom $R(x,w_1,\ldots,w_{i-1},y)$. Equivalently for the $S$ and $T$ atoms. Name the result query $q'$. Query $h^*_3$ can be trivially reduced to $q'$ by setting all the $w$ variables to a constant. Therefore $q'$ is hard, which means that $q$ cannot be final.

(b) Let $A$, $B$ and $C$ be the non-dominated atoms, and $q$ is of the form:
\begin{align*}
    &A(x), R_1(x,w_1), R_2(w_1,w_2),\ldots, R_i(w_{i-1},y)\\
    &B(y), S_1(y,w_1'), S_2(w_1',w_2'),\ldots, S_j(w_{j-1}',z)\\
    &C(z,u), T_1(u,w_1''), T_2(w_1'',w_2''),\ldots, T_j(w_{k-1}'',x)
\end{align*}
We apply the rewrites \eqref{rewrite1},\eqref{rewrite2} and \eqref{rewrite3} from case A. The query then becomes:
\begin{align*}
    &A(x), R(x,w_1,w_2,\ldots,w_{i-1},y)\\
    &B(y), S(y,w_1',w_2',\ldots,w_{i-1}',z)\\
    &C(z,u), T(u,w_1'',w_2'',\ldots,w_{i-1}'',x)
\end{align*}
It is possible that there are not $w$ values, and no rewrites actually occurred. Apply the rewrite $q\rewrite q[(u,z)/u]$. This is guaranteed to occur, due to the given existence of a fourth variable $u$, since $\Var(q)>3$. Further rewrite off all $w$ variables. After these we get query $q'\datarule A(x),B(y),C(z,u),R(x,y),S(y,z),T(z,u,x)$. If we now apply the rewrite $q\rewrite q[\emptyset/u]$ we get $h^*_3$ which is hard. Therefore, $q$ cannot be final.

(c) Let $A$, $B$ and $C$ be the non-dominated atoms, and $q$ is of the form:
\begin{align*}
    &A(x), R_1(x,w_1), R_2(w_1,w_2),\ldots, R_i(w_{i-1},y)\\
    &B(y,v), S_1(v,w_1'), S_2(w_1',w_2'),\ldots, S_j(w_{j-1}',z)\\
    &C(z,u), T_1(u,w_1''), T_2(w_1'',w_2''),\ldots, T_j(w_{k-1}'',x)
\end{align*}
It is possible that $v\equiv z$, but $x$ cannot be the same as any of $(y,v,z,u)$ as in that case $A$ would dominate $B$ or $C$. We apply the rewrites \eqref{rewrite1},\eqref{rewrite2} and \eqref{rewrite3} from case A, and also rewrite off all variables $w$, e.g. $q\rewrite q[\emptyset/w_1]$. The query then becomes: $q'\datarule A(x), B(y,v), C(z,u),  R(x,y), S(v,z), T(u,x)$. Further apply the following rewrites:

{\small
\begin{align*}
    q'  &\rewrite A(x), B(y,v,z), C(z,u), R(x,y), S(v,z), T(u,x)\tag{add z}\\
        &\rewrite A(x), B(y,v,z), C(z,u), R(x,y), S(y,v,z), T(u,x)\tag{add y}\\
        &\rewrite A(x), B(y,z), C(z,u), R(x,y), T(u,x)\tag{delete S and v}\\
        &\rewrite A(x), B(y,z), C(z,u), R(x,y), T(z,u,x)\tag{add z}\\
        &\rewrite A(x,y), B(y,z), C(z,u), R(x,y), T(z,u,x,y)\tag{add y}\\
        &\rewrite A(x,y), B(y,z), C(z,u), T(z,u,x,y)\tag{delete R}\\
        &\rewrite A(x,y), B(y,z), C(z,u,x), T(z,u,x,y)\tag{add x}\\
        &\rewrite A(x,y), B(y,z), C(z,x)\tag{delete T and u}
\end{align*}
}
Note that in atom $S$ has been eliminated, so even if it was $v\equiv z$, in which case there would be no $S$, it would not affect the result; basically the first 3 of the above rewrites would be unnecessary. The last query is $h^*_2$, which is hard, which means that $q$ cannot be final.
    
(d)
Let $A$, $B$ and $C$ be the non-dominated atoms, and $q$ is of the form:
\begin{align*}
    &A(x,t), R_1(t,w_1), R_2(w_1,w_2),\ldots, R_i(w_{i-1},y)\\
    &B(y,v), S_1(v,w_1'), S_2(w_1',w_2'),\ldots, S_j(w_{j-1}',z)\\
    &C(z,u), T_1(u,w_1''), T_2(w_1'',w_2''),\ldots, T_j(w_{k-1}'',x)
\end{align*}
It is possible that $t\equiv y$ or $v\equiv z$ (in which cases there would be no $R$ or $S$ atoms respectively). Again we apply the rewrites \eqref{rewrite1},\eqref{rewrite2} and \eqref{rewrite3} from case A, and also rewrite off all variables $w$, e.g. $q\rewrite q[\emptyset/w_1]$. The query then becomes: 
$$q'\datarule A(x,t), B(y,v), C(z,u),  R(t,y), S(v,z), T(u,x)$$ 

To account for the case $t\equiv y$ or $v\equiv z$, we will eliminate with rewrites the $R$ and $S$ atoms, so the effect of the possible variable equivalence will be eliminated:
{\small
\begin{align*}
    q'  &\rewrite A(x,t), B(y,v,z), C(z,u), R(t,y), S(v,z), T(u,x)\tag{add z}\\
        &\rewrite A(x,t), B(y,v,z), C(z,u), R(t,y), S(y,v,z), T(u,x)\tag{add y}\\
        &\rewrite A(x,t), B(y,z), C(z,u), R(t,y), T(u,x)\tag{delete S and v}\\
        &\rewrite A(x,t,y), B(y,z), C(z,u), R(t,y), T(u,x)\tag{add y}\\
        &\rewrite A(x,t,y), B(y,z), C(z,u), R(x,t,y), T(u,x)\tag{add x}\\
        &\rewrite A(x,y), B(y,z), C(z,u), T(u,x)\tag{delete R and t}\\
\end{align*}
}
With further rewrites $q\rewrite q[(u,z)/u]$ and $q\rewrite q[(x,y)/x]$ we get $q'\datarule A(x,y), B(y,z), C(z,u), T(z,u,x,y)$, which as seen in (c) previously leads to $h^*_2$ with further rewrites. Therefore, $q$ cannot be final.

\introparagraph{Case 2} $\exists g_i\in q$, $\Var(g_i)\geq 3$\\
    Let $R(x,y,z,\ldots)$ be an atom of $q$. Let $S_x$, $S_y$ and $S_z$ be the subsets of subgoals of $q$ that contain variables $x$, $y$ and $z$ respectively. From \autoref{lem:containment} we know that $S_x$, $S_y$ and $S_z$ cannot be subsets of one another. We always know that they intersect due to relation $R$. That means that there are 3 more atoms $A$, $B$ and $C$ in $q$ that contain $x$, $y$, $z$. There are 2 possible choices for $A$, $B$ and $C$ that satisfy the containment requirement.
\begin{figure}[htb]
    \centering
        \includegraphics[scale=0.4]{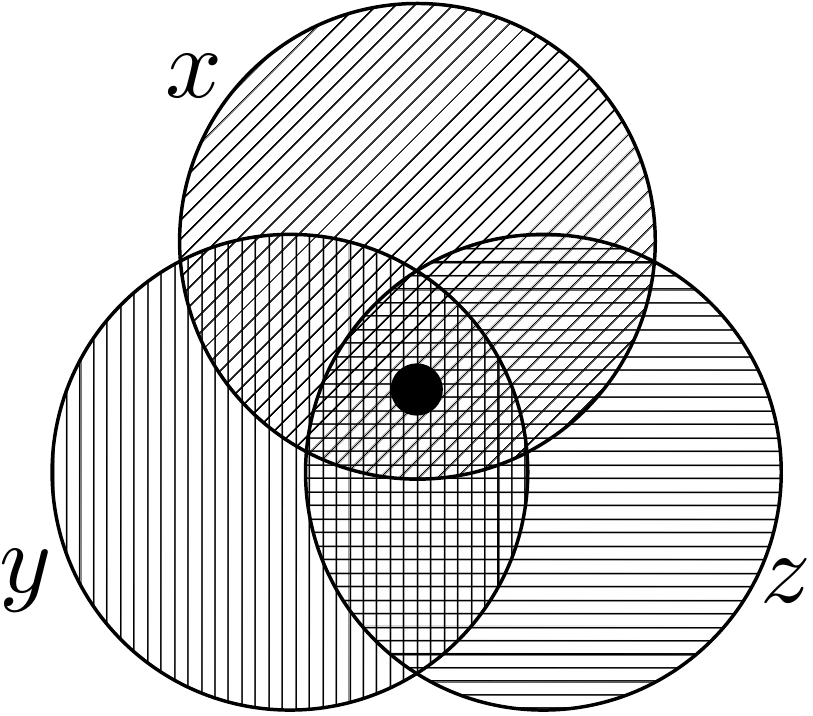}
    \caption{Subgoal $R$ contains all 3 variables. There is only 2 ways to choose $A$, $B$ and $C$ so that $S_x$, $S_y$ and $S_z$ are not subsets of one another.}
    \label{fig:figs_Fig_3vars}
\end{figure}

(a) {\small $A(x,y,\ldots), B(y,z,\ldots), C(z,x,\ldots)$, $z\not\in A$, $x\not\in B$ and $y\not\in C$}\\
For all variables $u$ other than $(x,y,z)$ we rewrite: $q\rewrite q[\emptyset/u]$. Since $\Var(q)>3$, we know that there is at least one such variable $u$.
By further rewriting $q\rewrite q - {R}$, we get $h^*_2$, which is hard, and which means that $q$ is not final.

(b) {\small $A(x,\ldots), B(y,\ldots), C(z,\ldots)$, $y,z\not\in A$, $x,z\not\in B$ and $x,y\not\in C$}\\
For all variables $u$ other than $(x,y,z)$ we rewrite: $q\rewrite q[\emptyset/u]$. Since $\Var(q)>3$, we know that there is at least one such variable $u$.
The result of these rewrites is query $h^*_1$ which is hard, meaning that $q$ is not final.

        Therefore, any final query has exactly 3 variables. 
\end{proof}

\begin{proof}[(\autoref{thm:final})]
    From \autoref{lem:3vars}, since $q$ is final, it has exactly 3 variables and it is not weakly linear. Let $x,y,z$ be the 3 variables. An atom in $q$ can be in one of the following 7 forms: 
    \begin{align*}
            A(x), \;\; B(y), \;\; C(z), \;\; R(x,y), \;\; S(y,z), \;\; T(x,z), \;\; W(x,y,z)
    \end{align*}

    Note that because of the third rewrite rule, and since $q$ is final, we only need to consider queries with at most one atom of each type (e.g. a query containing $R_1(x,y)$ and $R_2(x,y)$ cannot be final). There are 7 possible atom types, and therefore 127 possibilities for queries. We do not need to analyze all of those, as most of them are trivially weakly linear (queries with less that 3 atoms, or less than 3 variables) and therefore cannot be final.
    
    We will show that out of all the possible queries made up as combination of the 7 basic atoms, only $h_1^*$, $h_2^*$, $h_3^*$ are final. The rest are either weakly linear or can be rewritten into the hard queries (and therefore are not final). Note that any query $q$ whose atoms are a subset of the atoms of $h_1^*$ or $h_2^*$ is linear. Therefore, we only need to check supersets of $h_1^*$ and $h_2^*$, and subsets of $h_3^*$. Any query where $A,B,C$ are all endogenous, or $R,S,T$ are all endogenous, is covered by these cases. We finally need to examine queries where at least one unary and one binary atom appear as exogenous.
    For simplicity, from now on we will drop the variable names and just use the relation symbols, in direct correspondence to the atom types mentioned above (e.g. we write $A$ instead of $A(x)$ and $R$ instead of $R(x,y)$). Also, if the endogenous or exogenous state is not explicit, it is assumed that the atom can be in either state.
    
    \underline{Case 1:} supersets of $h_1^*\datarule \en{A},\en{B},\en{C},W$. The only possible relations that can be added to $h_1^*$ from the possible types are the binary relations $R$, $S$ and $T$. For any of them, that are added to $h_1^*$, there exists a singleton relation ($A$, $B$ or $C$) with a subset of their variables. Therefore, we can apply the third rewrite, eg. $q\rewrite q - \set{R}$ to get $h_1^*$. Therefore, any query $q$ over variables $x,y,z$ that is a superset of $h_1^*$ is not final because it can be rewritten to $h_1^*$.
    
    \underline{Case 2:} supersets of $h_2^*\datarule \en{R},\en{S},\en{T}$. The possible atom types that could be added are the ternary relation $W$, or the unary atoms $A$, $B$, and $C$. There are 5 possible cases excluding symmetries (adding $A$ to $h_2^*$ is equivalent to adding $B$ instead). An atom in parentheses means that it may or may not be part of $q$. 
    \begin{enumerate}[label=(\alph*), itemsep=0pt, parsep=1pt, topsep = 3pt]
        \item $q\datarule (\ex{A},)(\ex{B},)(\ex{C},) \en{R},\en{S},\en{T},W$: $W$ (and $\ex{A}, \ex{B}, \ex{C}$) can be eliminated based on the third rewrite leading to $h_2^*$.
        \item $q\datarule \en{A},(\ex{B},)(\ex{C},){R},{S},{T} (,W)$: weakly linear as $A$ dominates $R$ and $T$ (and $W$), and $\ex{B}$ and $\ex{C}$ can be dissociated to $W$.
        \item $q\datarule \en{A},\en{B},(\ex{C},){R},{S},{T} (,W)$: weakly linear as $R$, $S$, $T$ (and $W$) are dominated, and $\ex{C}$ can be dissociated to $W$.
        \item $q\datarule \en{A},\en{B},\en{C},{R},{S},{T}$: this is $h_3^*$.
        \item $q\datarule \en{A},\en{B},\en{C},{R},{S},{T},W$: $W$ can be eliminated based on the third rewrite leading to $h_3^*$.
    \end{enumerate}
    
    \underline{Case 3:} subsets of $h_3^*\datarule \en{A},\en{B},\en{C},{R},{S},{T}$. We only need to consider those where at least one of the $R,S,T$ atoms is exogenous or missing, otherwise they would fall under case 2. We have the following cases (excluding symmetries):
    \begin{enumerate}[label=(\alph*), itemsep=0pt, parsep=1pt, topsep = 3pt]
        \item $q\datarule {A},{B},{C},{R},{S}$: linear, and therefore any of its subsets are also linear.
        \item $q\datarule \en{B},\en{C},{R},{S},\ex{T}$: weakly linear as $R,S,T$ are dominated and can dissociate to $W$.
        \item $q\datarule \en{C},{R},{S},\ex{T}$: $T$  dissociates to $W$ resulting in a linear query. Any subsets would also be linear.
    \end{enumerate}
    
    \underline{Case 4:} at least one exogenous unary and one exogenous binary relation.
    \begin{enumerate}[label=(\alph*), itemsep=0pt, parsep=1pt, topsep = 3pt]
        \item $q\datarule \ex{A},{B},{C},\ex{R},{S},T,W$: $A$ and $R$ dissociate to W, resulting in a linear query. Any subset is also weakly linear.
        \item $q\datarule \ex{A},{B},{C}, {R},\ex{S},T,W$: $A$ and $S$ dissociate to W, resulting in a linear query. Any subset is also weakly linear.
    \end{enumerate}
    
    Therefore, we have shown that any final query has to be one of $h_1^*$, $h_2^*$, $h_3^*$. 
\end{proof}

\section{Other Responsibility Proofs} 
\label{appendix:responsibilityRest}

\begin{proof}[(\autoref{thm:logspaceComp})]
    We will show the result through a series of reductions. We will start by a known \textsc{logspace} complete problem, the \emph{Undirected Graph accessibility Problem} (UGAP): given an undirected graph $G=(V,E)$ and two nodes $a, b \in V$, decide whether there exists a path from a to b.
    
    \underline{BGAP reduction:} We define the \emph{Bipartite Graph Accessibility Problem} (BGAP): given a bipartite graph $(X,Y,E)$ and two nodes $a \in X$, $b \in Y$, decide whether there exists a path from $a$ to $b$.  Here the path is allowed to traverse edges in both directions, from $X$ to $Y$ and from $Y$ to $X$. We will reduce any instance of UGAP to an instance of BGAP as follows:
    
    Given an instance of UGAP as an undirected graph $G = (V,E)$, and nodes $a, b \in V$, construct a bipartite graph with  $X = V$, $Y = E \cup \{c\}$, where $c$ is a new node, and edges are of the form $(x, (x,y))$ and$ (y, (x,y))$, plus one edge $(b,c)$.  Then there exists a path $a\rightarrow b$ in  $G$ iff there exists a path $a \rightarrow c$ in the bipartite graph. Therefore, BGAP is hard for \textsc{logspace}.
    
    \underline{FPMF reduction:} We define the \emph{Four-Partite Max-Flow} problem (FPMF): given a four-partite network $(U, X, Y, V, E)$ where each edge capacity is either 1 or 2, source and target nodes $s$ and $t$ connected to all nodes in $U$ and $V$ respectively with infinite capacities, and a number $k$, decide whether the max-flow is $\geq k$. We reduce BGAP to FPMF as follows:
    
    Given an instance of BGAP as a bipartite graph $(X,Y,E)$ and two nodes $a \in X$, $b \in Y$, construct a 4-partite graph $(U, X, Y, V, E')$ by leaving the $X$ and $Y$ partitions and edges between them unchanged, as they are in the BGAP instance, and set their capacities to 2. Create a $U$-node $xy$ for each edge  $(x,y)\in E$. Each node $xy\in U$ is connected to $x \in X$ with an edge of capacity 1. Symmetrically, the $V$-nodes are $E$, and each node $y \in Y$ is connected to all nodes $xy \in V$, with capacity 1. Finally connect a source node $s$ to all nodes $U$ with infinite capacity, and connect all nodes in $V$ to a target node $t$ also with infinite capacity. The resulting graph has a maximum flow (min-cut) equal to $|E|$: the number of edges between any 2 partitions is exactly equal to $E$, and edges between the $X$ and $Y$ partitions are not chosen in a minimum cut, as they have capacity 2 instead of 1. The maximum flow of $E$ utilizes all $U-X$ and $Y-V$ edges, and a residual flow of 1 is left in all $X-Y$ edges.
    
    Now add to the graph a new node $a'$ in partition $U$ connected with capacity 1 to node $a$ in $X$ and with infinite capacity to the source node. Also add a node $b'$ to partition $V$, connected to node $b$ of partition $Y$ with capacity 1, and to the target node with infinite capacity. The flow in this final graph is $|E|$ iff there is no path between $a$ and $b$ in the BGAP instance, and it is $|E|+1$ iff there is a path between $a$ and $b$. Therefore, BGAP can be solved by computing the maximum flow in the FPMF instance with $k = |E| + 1$.
    
    \underline{Query reduction:} We will reduce FPMF to computing responsibility for query $q$. Let $(X,Y,Z,W,E)$ and number $k$ be an instance of FPMF. For each $(x_i,y_j)$ edge between partitions $X$ and $Y$ create a tuple $R(x_i,1,y_j)$ is the capacity of the edge is 1, and two tuples $R(x_i,1,y_j)$ and $R(x_i,2,y_j)$ if the capacity of the edge is 2. Similarly create relation $S(y,u_2,z)$ based on the $Y-Z$ edges, and relation $T(z,u_3,w)$ based on the $Z-W$ edges. Finally, add tuples $R(x_0,1,y_0)$, $S(y_0,1,z_0)$, and $T(z_0,1,w_0)$, where $x_0$, $y_0$, $z_0$ and $w_0$ are unique new values to the respective domains. The max-flow in the FPMF instance is $\geq k$ iff the responsibility of $R(x_0,1,y_0)$ is $\geq k$. Therefore, responsibility for $q$ is hard for \textsc{logspace}.
\end{proof}

\begin{proof}[(\autoref{prop:selfJoin} Self-Joins)]
This results from a reduction from vertex cover. Given graph $G(V,E)$ as an instance of a vertex cover problem, we construct relations R and S as follows: 
\begin{itemize}[itemsep=0pt, parsep=1pt, topsep = 0pt]
	\item For each vertex $v_i\in V$, create a new tuple $r_i$ with unique value of attribute $x_i$.
	\item For each edge $(v_i,v_j)\in E$, create a new tuple $s_k$, with values $(x,y)=(x_i,x_j)$, where $x_i$, $x_j$ are the values of tuples $r_i$, $r_j$ that correspond to nodes $v_i$, and $v_j$ respectively.
	\item Add tuples $r_0$ with value $x_0$ and $s_0$ with value $(x_0,x_0)$.
\end{itemize} 

The above transformation is polynomial, as we create one tuple per node and one tuple per edge. A vertex cover of size $K$ in $G$ is a contingency of size $K$ for tuple $r_0$ in the database instance: removing from the database all tuples $r_i$ corresponding to the cover leaves no other join result apart from the one due to $r_0$, $s_0$: All other $s_i=(x_i,y_i)\not\equiv s_0$ do not produce a join result, as at least one of $R(x_i)$ or $R(y_i)$ has been removed.

Now assume a contingency $\mathcal{S}$ for tuple $r_0$. If $\mathcal{S}$ contains a tuple $s_i=(x_i,y_i)$, then we can construct a new contingency $\mathcal{S}'=(\mathcal{S}\setminus\{ s_i\})\cup\{ R(x_i)\}$, and $|\mathcal{S}'|\leq|\mathcal{S}|$. Therefore, there exists a minimum contingency $\mathcal{S}$ that contains only $R$-tuples. If $V'$ the set of nodes that corresponds to tuples $r_i\in\mathcal{S}$, then $V'$ is a vertex cover in $G$. If there was an edge left uncovered, then that means that there would be a tuple $S(x_i,y_i)$, such that neither of $R(x_i)$, $R(y_i)$ are in the contingency, which is a contradiction as the join tuple $R(x_i),S(x_i,y_i),R(y_i)$ would then be in the result.
The cover $V'$ is minimal, because $\mathcal{S}$ is minimal.
\end{proof}

\begin{proof}[(\autoref{thm:whyNoResp} \sql{Why-No} Responsibility)]~\!\!\!
	This is a straightforward result based on the observation that the contingency set of a non-answer is bounded by the query size, and is therefore irrelevant to data complexity. In order to make a tuple counterfactual, we need to insert at most $m-1$ tuples to the database, where $m$ the number of query subgoals. 
\end{proof}

\end{document}